\newcommand{\comment}[1]{}
\newcommand\T{\rule{0pt}{2.3ex}}
\newlength\mystoreparindent
\long\def\/*#1*/{}
\newtheorem{thm}{Theorem}
\newtheorem{lem}[thm]{Lemma}
\newtheorem{assum}[thm]{Assumption}
\newtheorem{corol}[thm]{Corollary} 
\newenvironment{proof}[1][Proof]{\begin{trivlist}
\item[\hskip \labelsep {\bfseries #1}]}{\end{trivlist}}
\numberwithin{thm}{section}
\numberwithin{prf}{section}
\numberwithin{equation}{section}
\newcommand*{\QEDA}{\hfill\ensuremath{\blacksquare}}
\newcommand*\bigcdot{\mathpalette\bigcdot@{.5}}
\newcommand*\bigcdot@[2]{\mathbin{\vcenter{\hbox{\scalebox{#2}{$\m@th#1\bullet$}}}}}
\newcommand{\mcG}{\mathcal{G}}
\newcommand{\mcR}{\mathcal{R}}
\newcommand{\mcW}{\mathcal{W}}
\newcommand{\mbE}{\mathbb{E}}
\newcommand{\vp}{\varphi}
\newcommand{\sumin}{\sum_{i=1}^n}
\title{\LARGE{Overidentification testing with weak instruments \\ and heteroskedasticity\thanks{We thank Vincent Han, Stefan Hubner, Gregory Jolivet, Pascal Lavergne, David Pacini, Pietro Spini, Senay Sokullu and participants in seminars at the University of Bristol and University College London for helpful comments and suggestions. Contact Stuart Lane via email for MATLAB replication code, and preliminary versions of Python and R functions and packages for robust overidentification testing. \\
\indent \, *This paper is based on Chapters 1 and 2 of Stuart Lane's Ph.D. thesis (University of Bristol, 2024).}}}
\author{\Large{Stuart Lane}\thanks{School of Economics, University of Bristol, UK, stuart.lane@bristol.ac.uk} \hspace{10mm} \Large{Frank Windmeijer}\thanks{Department of Statistics and Nuffield College, University of Oxford, UK, frank.windmeijer@stats.ox.ac.uk} \\ \\}
\date{}
\begin{document}

\maketitle

\begin{abstract}
Exogeneity is key for IV estimators, which can assessed via overidentification (OID) tests. We discuss the Kleibergen-Paap ($KP$) rank test as a heteroskedasticity-robust OID test and compare to the typical $J$-test. We derive the heteroskedastic weak-instrument limiting distributions for $J$ and $KP$ as special cases of the robust score test estimated via 2SLS and LIML respectively. Monte Carlo simulations show that $KP$ usually performs better than $J$, which is prone to severe size distortions. Test size depends on model parameters not consistently estimable with weak instruments, so a conservative approach is recommended. This generalises recommendations to use LIML-based OID tests under homoskedasticity. We then revisit the classic problem of estimating the elasticity of intertemporal substitution (EIS) in lifecycle consumption models. Lagged macroeconomic indicators should provide naturally valid but frequently weak instruments. The literature provides a wide range of estimates for this parameter, and $J$ frequently rejects the null of valid instruments. $J$ often rejects the null whereas $KP$ does not; we suggest that $J$ over-rejects, sometimes severely. We argue that $KP$-test should be used over the $J$-test. We also argue that instrument invalidity/misspecification is unlikely the cause of the range of EIS estimates in the literature. \\

\noindent \textbf{Keywords:} Overidentification tests, weak instruments, heteroskedasticity, elasticity of intertemporal substitution\\
\end{abstract}

\thispagestyle{empty}

\newpage

\section{Introduction}\label{section introduction}

\vspace{3mm}

Exogeneity is a key identifying assumption for validity in instrumental variables (IV) regression. Overidentification tests assess this assumption with the hypothesis specification $\mathbb{H}_0:\mathbb{E}[z_iu_i]=0$ v.s. $\mathbb{H}_1:\mathbb{E}[z_iu_i]\neq0$, where $z_i$ and $u_i$ are the instrument vector and structural error for observation $i$, respectively. These tests can be performed whenever $k_z>k_x$, where $k_z$ and $k_x$ are the number of instruments and endogenous regressors respectively. The Sargan test \parencite{sargan1958} is an early example of such a test, testing the orthogonality of instruments in the standard linear setting with homoskedasticity. The Hansen test \parencite{hansen1982}, denoted by $J$, generalises the Sargan test to the generalised method of moments (GMM) framework, allowing for overidentification testing on a far greater range of econometric models. The properties of these test statistics are well-known under standard asymptotics (e.g. \textcite{greene2003}).

In this paper we discuss the use of the Kleibergen-Paap ($KP$ hereafter) rank test \parencite{kleibergen2006} as a heteroskedasticity-robust overidentification test and compare its suitability to the ubiquitous $J$-test. This is the first assessment of the test in an overidentification context, following from \textcite{windmeijer2018} who presents a framework linking overidentification tests, underidentification tests, the score test, and rank tests. Underidentification tests assess rank hypotheses such as $\mathbb{H}_0:\textup{rank}(\Pi)=k_x-1$ v.s. $\mathbb{H}_1:\textup{rank}(\Pi)=k_x$, where $\Pi$ is the $k_z \times k_x$ first-stage parameter matrix (the $KP$-test is a standard underidentification test, commonly reported in statistical packages such as Stata). By formulating overidentification tests as rank tests on the reduced-form $k_z \times (k_x+1)$ parameter matrix $\bar{\Pi}$ with $\mathbb{H}_0:\textup{rank}(\bar{\Pi})=k_x$ v.s. $\mathbb{H}_1:\textup{rank}(\bar{\Pi})=k_x+1$ instead of the usual  $\mathbb{H}_0:\mathbb{E}[z_iu_i]=0$, \textcite{windmeijer2018} shows that any underidentification test can be used as an overidentification test if performed on an appropriate auxiliary regression, with an analogous result for using overidentification tests as underidentification tests. The paper further shows that the 2SLS-based and LIML-based robust score tests are equivalent to the $J$- and $KP$-tests respectively. The $J$- and $KP$-tests have the same limiting $\chi^2(k_z-k_x)$ distributions under the null when instruments are strong and provide numerically similar values in finite samples. However, their behaviour in a weak instrument setting is unlikely to be similar, and the relative performance of IV methods when instruments are potentially weak is often a key motivator for choosing between them.

\textcite{staiger1997} derive the limiting distribution of the Sargan test under weak instruments with homoskedasticity, providing a general recommendation to use LIML over 2SLS for overidentification tests based on numerical evidence. We extend these results by deriving the limiting distribution of the robust score test (which nests $J$ and $KP$) under heteroskedastic weak instruments. Due to the highly non-standard nature of the limiting distributions, a direct comparison is difficult. We therefore conduct an extensive Monte Carlo study, and usually find $KP$ preferable to $J$. The $J$-test performs extremely poorly in models with high endogeneity and/or strong heteroskedasticity and in general $KP$ exhibits better size properties and is much less likely than $J$ to be severely size distorted, especially when the number of overidentifying restrictions is small. Test size is dependent on model parameters that are not consistently estimable with weak instruments such as the strength of endogeneity. Consequently, it is difficult for researchers to gauge which test statistic would work better based on parameters estimated from data, and recommend a conservative approach that avoids severe size distortions where possible. We therefore find that our guidance to use the $KP$-test (equivalent to a LIML robust score test) under heteroskedastic weak instruments generalises the guidance from \textcite{staiger1997} for LIML-based testing under homoskedasticity. 

Given these theoretical results, we revisit the classic macroeconomic problem of estimating the elasticity of intertemporal substitution (EIS), the degree to which consumers adjust their consumption path in response to changes in the expected real interest rate. Due to the importance of this parameter for both theory and policy, it is unsurprising that the EIS has been given considerable attention in the empirical literature (e.g. \textcite{hansen1983} and \textcite{hall1988}). However, there is large variation in values of the EIS that constitute credible estimates, with substantial variation depending on the country, asset of choice or how the data are aggregated. Common concerns regard the strength and validity of the instruments used in estimation, often formed from lagged macroeconomic indicators. Such instruments should be naturally exogenous, but unfortunately are often poorly correlated with endogenous variables at time $t$, particularly as the number of lags increases, and this can lead to substantial weak instrument problems \parencite{yogo2004}. The model also likely suffers from heteroskedasticity, which represents precautionary saving in this model \parencite{yogo2004, gomes2013}. 

Within the literature, there are a number of papers where both weak instruments and heteroskedasticity are present, and that reject the overidentifying restrictions using the $J$-test e.g. \textcite{epstein1991}, \textcite{dacy2011}, \textcite{gomes2011}, and \textcite{gomes2013}. This leads to doubt over the validity of the moment conditions and instruments that identify and estimate the model and could potentially, at least partially, explain the variation in estimates of the EIS found in the literature. However, our simulation results suggest that $J$ performs poorly under heteroskedastic weak instruments, with often large over-rejections of the null hypothesis, whereas the $KP$-test does not, which may cast doubt over the rejections of the overidentifying restrictions seen in the literature.

In light of our theoretical and numerical evidence, we compare the performance of the $J$-test and the $KP$-test using the datasets of \textcite{yogo2004} and \textcite{pozzi2022}. These two applications provide different insights; the \textcite{yogo2004} application allows us to clearly see the impact of weak instruments via estimating two specifications, by switching between consumption and the real interest rate to be the independent variable. The instrument set with the second normalisation is substantially weaker. On the other hand, the \textcite{pozzi2022} application allows us to assess test performance with two different sets of instruments, where one set of instruments is more plausibly valid but weaker, and the other is less plausibly valid but stronger. 

We provide empirical evidence that the $J$-test over-rejects the overidentifying restrictions, whereas the $KP$-test rejects at a frequency consistent with a nominal 5\% level. We find similar patterns of results using datasets from the two papers and argue that the $KP$-test is more reliable than the $J$-test. Both tests provide very similar numerical values when instruments are strong and both do not reject the overidentifying restrictions. However, when instruments are possibly weak, $KP$ continues to not reject the overidentifying restrictions, whereas the $J$-test rejects frequently. We suggest that the substantial number of rejections seen in the literature can be accounted for by the poor behaviour of the $J$-test, rather than there being issues with the instruments or model specification, with the key implication of this being that instrument invalidity is unlikely the cause of variation in EIS obtained in the literature.

In the following section, we present the linear model and assumptions. We also introduce the relevant estimators and test statistics, and describe the framework linking overidentification, underidentification and rank tests. Section \ref{section weakhet} focuses on a theoretical analysis of the estimators and test statistics under heteroskedastic weak instruments. Section \ref{section mc} assesses the behaviour of the test statistics in Monte Carlo simulations, and provides numerical evidence that the $KP$-test typically performs better than the $J$-test under heteroskedastic weak instruments. In Section \ref{sec emp model} we provide an overview of the basic lifecycle consumption model, and the moment restrictions to be tested. In Section \ref{sec yogo}, we compare the performance of the $J$- and $KP$-tests using the dataset of \textcite{yogo2004} and suggest that $KP$ has the superior performance. We defer the application using the \textcite{pozzi2022} dataset to the Appendix purely for space considerations, but emphasise that this application is as valuable and insightful as the \textcite{yogo2004} application. All proofs and additional simulations are likewise deferred to the Appendix.

\section{Linear IV model, estimators and test statistics}\label{section model}

Consider the linear IV model
\begin{align}
    y &= X\beta + u, \label{eq model structural} \\
    X &= Z\Pi + V, \label{eq model firststage}
\end{align}
where $y$ is the $n\times 1$ dependent variable, $X$ is the $n\times k_x$ endogenous regressor matrix, $Z$ is the $n\times k_z$ instrument matrix (with $i^{th}$ row $z_i'$), $u$ is the $n\times 1$ structural error term and $V$ is the $n\times k_x$ vector of first-stage errors.
$\beta$ is the structural parameter of interest, $\Pi$ is the $k_z\times k_x$ matrix of first-stage parameters, and define $W=[ y \ \  X ]$ (with $i^{th}$ row $w_i'$). Additional exogenous regressors $\tilde{X}$ can be included (\ref{eq model structural}) and (\ref{eq model firststage}) and then easily partialled out. The reduced form equation is obtained by substituting (\ref{eq model firststage}) into (\ref{eq model structural}) to yield
\begin{equation}\label{eq model reducedform}
    W = Z \bar{\Pi} + \bar{V} 
\end{equation}
where $\bar{\Pi} = [ \Pi_y \ \ \Pi ]$ and $\bar{V} = [ V_y \ \ V ]$, with $\Pi_y=\Pi\beta$ and $V_y=u+V\beta$. Equation (\ref{eq model reducedform}) collects all endogenous variables on the left-hand side and regresses them on the instruments. 
We make the following assumptions:

\begin{assum}\label{as weakhom xziid}
$k_z>k_x$ is fixed.
\end{assum}

\begin{assum}\label{as weakhet qzz}
$Z'Z/n \overset{p}{\to} \mathbb{E}[z_iz_i']=Q_{ZZ}$. $Q_{ZZ}$ has full column rank $k_z$.
\end{assum}

\begin{assum} \label{as weakhet errors}
$\mathbb{E}[u_i^2]= \tilde{\sigma}_u^2$, $\mathbb{E}[v_iv_i'] = \tilde{\Sigma}_V$ and $\mathbb{E}[v_iu_i] = \tilde{\Sigma}_{Vu}$. Further, $\mbE[z_iu_i] = 0$, $\mathbb{E}[u_i^2|z_i]$, $\mathbb{E}[v_iv_i'|z_i]$ and $\mathbb{E}[v_iu_i|z_i]$ are finite for each $i$. Also
\begin{align*}
\begin{pmatrix}
    \frac{1}{\sqrt{n}} Z'u \\
    \frac{1}{\sqrt{n}} Z'V
    \end{pmatrix}
   &\overset{d}{\to}
   \begin{pmatrix}
    \Psi_{Zu}^* \\
    \Psi_{ZV}^*
    \end{pmatrix},
\end{align*}
where
\begin{align*}
   \begin{pmatrix}
       \Psi_{Zu}^* \\
       \textup{vec}(\Psi_{ZV}^*)
   \end{pmatrix} \sim N(0,\Omega_Z), \, \, \, \,
   \Omega_{Z} = 
   \begin{pmatrix}
       \Omega_{Zu} & \Omega_{Z,Vu}' \\
       \Omega_{Z,Vu} & \Omega_{ZV}
   \end{pmatrix} = 
   \begin{pmatrix}
      \mathbb{E}[u_i^2 z_i z_i'] & \mathbb{E}[u_iv_i' \otimes z_i z_i'] \\
      \mathbb{E}[v_iu_i \otimes z_i z_i'] & \mathbb{E}[v_iv_i' \otimes z_i z_i']
   \end{pmatrix}. 
\end{align*}
where $\textup{vec}(\cdot)$ is the vectorisation operator and $\otimes$ is the Kronecker product.
\end{assum}
Assumption \ref{as weakhom xziid} assumes overidentification to allow for specification testing, with the number of instruments fixed; this rules out many-instrument asymptotics. Assumption \ref{as weakhet qzz} ensures that $Z'Z/n$ converges to a constant matrix with full rank. Assumption \ref{as weakhet errors} assumes the instrument is valid (and this is the assumption we wish to test). Further, it is general and no specific relationship between the second moments of the errors and instruments is assumed, except that they are finite. Because of the conditional heteroskedasticity, $\Omega_Z$ will in general lack the Kronecker structure seen under homoskedasticity. The assumption states that $\Psi_{Zu}^*$ is a multivariate normal variable with distribution $N(0,\mbE[u_i^2 z_i z_i'])$, and $\Psi_{ZV}^*$ is a matrix normal variable such that $\textup{vec}(\Psi_{ZV}^*)\sim N(0,\mbE[v_iv_i'\otimes z_iz_i'])$. These assumptions are standard.

\subsection{Estimators}\label{section model estimators}

A wide range of estimators are available for IV models. For simplicity, we focus on 2SLS and LIML (denoted by $\hat{\beta}_{2SLS}$ and $\hat{\beta}_{L}$ respectively), defined as
\begin{equation}\label{eq 2slsformula}
    \hat{\beta}_{2SLS} = (X'P_ZX)^{-1}X'P_Zy \  \textup{  and  }
\end{equation}
\begin{equation}\label{eq limlformula}
    \hat{\beta}_{L} = (X'P_ZX - \hat{\alpha}_LX'X)^{-1}(X'P_Zy - \hat{\alpha}_LX'y),
 \end{equation}
where $P_Z = Z(Z'Z)^{-1}Z'$ and $\hat{\alpha}_L$ is the smallest root of the characteristic polynomial $|W'P_ZW - \alpha W'W| = 0$. 2SLS is the most common IV estimator seen in the applied literature and has been extensively studied within the econometrics literature. The LIML estimator solves the maximum likelihood problem for a single equation within a system of endogenous simultaneous equations \parencite{anderson1949}. LIML has also been studied extensively within the theoretical literature, but has not seen the widespread use that 2SLS has enjoyed in the applied literature, despite exhibiting favourable finite-sample properties. LIML often outperforms 2SLS in finite samples with strong instruments and performs better in homoskedastic weak-instrument settings. Both estimators are part of the wider class of estimators of the form
\begin{equation}\label{eq weakhom kclass}
        \hat{\beta}(\alpha) = (X'P_ZX - \alpha X'X)^{-1}(X'P_Zy - \alpha X'y),
\end{equation}
where clearly $\alpha=0$ for 2SLS and $\alpha = \hat{\alpha}_L$ for LIML.\footnote{Since 2SLS and LIML relate to the $J$ and $KP$ tests respectively, it is simple to study these two estimators and subsequently study test statistics that applied researchers are already familiar with and for which implementation in statistical software already exists. However, generalising our results to other estimators could be an interesting topic for future research.}

\subsection{Test statistics}\label{section model tests}

All standard IV estimators require instrument exogeneity for consistency. While this assumption is impossible to test directly, when instruments outnumber the endogenous regressors, we can test for the validity of the overidentifying restrictions. Through this, we may be able to find evidence that the restrictions do not hold, and it is recommended to conduct such a test whenever possible. The null and alternative hypotheses are usually specified as
\begin{equation}
    \mathbb{H}_0 : \mathbb{E}[z_iu_i]= 0 \textup{ v.s. } \mathbb{H}_1 : \mathbb{E}[z_iu_i]\neq 0,
\end{equation}
so rejection provides evidence that exogeneity is not satisfied and/or the model is misspecified. For this, it is natural to consider two-step estimators and test statistics when heteroskedasticity may be present. For the general linear GMM estimator $\hat{\beta}_{GMM}=(X'Z\mcG^{-1} Z'X)^{-1}X'Z\mcG^{-1} Z'y$, where $\mcG$ is some positive-definite weighting matrix, asymptotic efficiency is achieved if $\mcG\overset{p}{\to}\Omega_{Zu}$. Consistency is easily achieved with typical plug-in estimators. The two-step GMM estimator is given by
\begin{equation}\label{eq esttest 2step}
    \hat{\beta}_2 = (\hat{\Pi}_1'Z'Z(Z'H_{\hat{u}_1}Z)^{-1}Z'X)^{-1}\hat{\Pi}_1'Z'Z(Z'H_{\hat{u}_1}Z)^{-1}Z'y
\end{equation}
where $\hat{\Pi}_1$ is the appropriate first-stage estimator for $\hat{\beta}_1$ (e.g. if $\hat{\beta}_1 = \hat{\beta}_{2SLS}$, then $\hat{\Pi}_{2SLS}=(Z'Z)^{-1}Z'X$ and if $\hat{\beta}_1 = \hat{\beta}_{L}$, then $\hat{\Pi}_{L}=(Z'M_{\hat{u}_L}Z)^{-1}Z'M_{\hat{u}_L}X$, with $M_A = I_n - P_A$ and $P_A = A(A'A)^{-1}A'$ for some generic $n \times L$ matrix $A$) and $Z'H_{\hat{u}_1}Z/n\overset{p}{\to}\Omega_{Zu}$. Under standard conditions, both $\hat{\beta}_{2,2SLS}$ and $\hat{\beta}_{2,L}$ are asymptotically efficient and equivalent to the infeasible estimator $\hat{\beta}_{opt}$, given by $\sqrt{n}(\hat{\beta}_{opt}-\beta)\overset{d}{\to}N(0,\Pi'Q_{ZZ}\Omega_{Zu}^{-1}Q_{ZZ}\Pi)$, computable with complete information on $\Pi$ and $\Omega_{Zu}$ \parencite{windmeijer2018}. 

The most common heteroskedasticity-robust two-step test statistic is the $J$-test \parencite{hansen1982}, given as
\begin{equation}\label{eq esttest hansen}
    J(\hat{\beta}_1,\hat{\beta}_2)= \hat{u}_2'Z(Z'H_{\hat{u}_1}Z)^{-1}Z'\hat{u}_2,
\end{equation}
where $\hat{u}_2=y-X\hat{\beta}_2$. Standard implementation sets $\hat{\beta}_1 = \hat{\beta}_{2SLS}$, and under the null $J(\hat{\beta}_1,\hat{\beta}_2)\overset{d}{\to}\chi^2(k_z-k_x)$ with general heteroskedasticity and/or autocorrelation. Two-step estimation can also be used to provide a heteroskedasticity-robust score test. Partition the instrument matrix into $Z = [ Z_1 \ \ Z_2 ]$, where $Z_1$ is an $n\times k_x$ matrix of just-identifying instruments and $Z_2$ is an $n\times(k_z-k_x)$ matrix of remaining overidentifying instruments. Then the robust score statistic is given by
\begin{equation}\label{eq esttest twostepscore}
    S_r(\hat{\beta}_1,\hat{\beta}_2) = \hat{u}_2'M_{\hat{X}}Z_2(Z_2'M_{\hat{X}}H_{\hat{u}_1}M_{\hat{X}}Z_2)^{-1}Z_2'M_{\hat{X}}\hat{u}_2
\end{equation}
where $\hat{X}=Z\hat{\Pi}$, with $\hat{\Pi}$ the appropriate estimator for $\Pi$ depending on choice of $\hat{\beta}_1$. Again, $S_r(\hat{\beta}_1,\hat{\beta}_2)\overset{d}{\to}\chi^2(k_z-k_x)$ under usual asymptotics. \textcite{windmeijer2018} proves that (\ref{eq esttest hansen}) and (\ref{eq esttest twostepscore}) are equivalent. However, the paper also shows that the two-step approach is actually redundant, since $Z_2'M_{\hat{X}}\hat{u}_j = Z_2'M_{\hat{X}}(y - x\hat{\beta}_j) = Z_2'M_{\hat{X}}y$ for $j\in\{1,2\}$. It therefore suffices to use a one-step robust score test given by
\begin{equation}\label{eq esttest robscore}
    S_r(\hat{\beta}_1) = \hat{u}_1'M_{\hat{X}}Z_2(Z_2'M_{\hat{X}}H_{\hat{u}_1}M_{\hat{X}}Z_2)^{-1}Z_2'M_{\hat{X}}\hat{u}_1.
\end{equation}
with $S_r(\hat{\beta}_1,\hat{\beta}_2) = S_r(\hat{\beta}_1)$ and consequently $J(\hat{\beta}_1,\hat{\beta}_2)=S_r(\hat{\beta}_1)$. Since the tests are numerically equivalent, we will ignore two-step estimators and testing and consider only $\hat{\beta}_1$ and $S_r(\hat{\beta}_1)$ for simplicity. 

\subsection{Relation to rank tests}\label{subsection ranktests}

To demontstrate the link between overidentification and rank tests, it is useful to first consider general underidentification tests.
The null and alternative hypotheses for underidentification tests are usually specified as
\begin{equation}\label{eq esttest uidnull}
    \mathbb{H}_0:\textup{rank}(\Pi)=k_x-1 \textup{ v.s. } \mathbb{H}_1:\textup{rank}(\Pi)=k_x.
\end{equation}
Identification of $\beta$ requires that $\mathbb{E}[Z'X]$ has full column rank, such that $\Pi=\mathbb{E}[Z'Z]^{-1}\mathbb{E}[Z'X]$ has full column rank; $\Pi$ is reduced-rank under the null. Therefore, there exists some non-zero vector $\zeta\in\mathbb{R}^{k_x}$ such that $\mathbb{E}[Z'X]\zeta=0$.\footnote{$A\in\mathbb{R}^{n\times m}$ is rank deficient iff there exists some non-zero vector $\zeta\in \mathbb{R}^{m}$ that satisfies $A\zeta=0$. The equivalence of underidentification testing and the standard $F$-test for $\mathbb{H}_0:\Pi = 0$ in the $k_x=1$ case is clear.}

To link this with the standard overidentification null of $\mathbb{H}_0:\mathbb{E}[z_iu_i]=0$ v.s. $\mathbb{H}_1:\mathbb{E}[z_iu_i]\neq 0$, consider $u_i = y_i - x_i'\beta$ = $w_i'\psi$, where $\psi=( 1 \ \ -\beta' )'$. It follows that $\mathbb{E}[z_iu_i]=0$ is equivalent to $\mathbb{E}[z_iw_i'\psi]=0$. Since $\mathbb{E}[z_iw_i']$ is an $n\times(k_x+1)$ matrix and $\psi$ is a non-zero $(k_x+1)$-vector with first element normalised to 1, under the null hypothesis of $\mathbb{E}[z_iw_i'\psi]=0$, $\mathbb{E}[z_iw_i']$ must be reduced-rank as this is the only way the homogeneous system of equations can be solved. This in turn implies that the null hypothesis $\mathbb{E}[z_iu_i]=0$ is satisfied if and only if $\mathbb{E}[z_iw_i']$ is reduced-rank. Just as the rank of $\mathbb{E}[z_ix_i']$ determines the rank of $\Pi$ given Assumptions \ref{as weakhom xziid} and \ref{as weakhet qzz}, the rank of $\mathbb{E}[z_iw_i']$ determines the rank of the reduced-form parameter matrix $\bar{\Pi}$. Therefore, the familiar overidentification null of the orthogonality of the instruments and structural errors can be restated as a test of rank on the reduced-form parameter matrix as 
\begin{equation}\label{eq esttest oidnull}
\mathbb{H}_0:\textup{rank}(\bar{\Pi})=k_x \textup{ v.s. } \mathbb{H}_1:\textup{rank}(\bar{\Pi})=k_x+1.
\end{equation} 
It follows from this result that rank tests such as the $KP$-test can be used to assess instrument validity.

To link rank testing with the robust score test in (\ref{eq esttest robscore}), create the partition $Z=[ Z_1 \ \ Z_2 ]$, where $Z_1$ is an $n\times k_x$ matrix of just-identifying instruments and $Z_2$ is an $n\times(k_z-k_x)$ matrix of overidentifying instruments. With conformable partitions $\Pi_y = [\Pi_{y,1}' \ \ \Pi_{y,2}' ]'$ and $\Pi = [\Pi_1' \ \ \Pi_2' ]'$, the reduced-form equation $y=Z\Pi_y+V_y$ can be expressed as
\begin{align}\label{eq scoregammaeq}
    y &= Z_1 \Pi_{y,1} + Z_2 \Pi_{y,2} + V_y \notag \\
    &= Z_1 \Pi_1 \Pi_1^{-1}\Pi_{y,1} + Z_2 \Pi_{y,2} + V_y \notag \\
    &= (X - Z_2\Pi_2 - V)\Pi_1^{-1}\Pi_{y,1} + Z_2 \Pi_{y,2} + V_y \notag \\
    &= X\tilde{\beta} + Z_2\theta + \eta,
\end{align}
where $\tilde{\beta}=\Pi_1^{-1}\Pi_{y,1}$, $\theta = \Pi_{y,2}-\Pi_2\Pi_1^{-1}\Pi_{y,1}$, with $\Pi_1$ nonsingular, and $\eta = V_y - V\Pi_1^{-1}\Pi_{y,1}$. The first line expands $Z\Pi_y = [ Z_1 \ \ Z_2 ][\Pi_{y,1}' \ \ \Pi_{y,2}' ]'$ to $Z_1 \Pi_{y,1} + Z_2 \Pi_{y,2}$, the second line multiplies by $\Pi_1$ and $\Pi_1^{-1}$ and the third line substitutes in the expression for $Z_1\Pi_1$. Under correct model specification, $\Pi_y=\Pi\beta$ follows from (\ref{eq model reducedform}), which implies that $[\Pi_{y,1}' \ \ \Pi_{y,2}' ]' = [\Pi_1\beta \ \ \Pi_2\beta ]'$. Therefore, we can substitute $\Pi_{y,1}=\Pi_1\beta$ into $\tilde{\beta}$ to yield $\tilde{\beta} = \beta$ and similarly $\theta = \Pi_2\beta - \Pi_2\Pi_1^{-1}(\Pi_1\beta)= \Pi_2\beta -\Pi_2\beta=0$. So, testing the hypothesis $\mathbb{H}_0:\theta=0$ is equivalent to testing for correct specification. This null can be tested using the robust score test in (\ref{eq esttest robscore}). In particular,  \textcite{windmeijer2018} shows that the $KP$-test for $\mathbb{H}_0:\textup{rank}(\bar{\Pi})=k_x$ v.s. $\mathbb{H}_0:\textup{rank}(\bar{\Pi})=k_x+1$ is equivalent to the robust score test estimated via LIML for $\mathbb{H}_0:\theta=0$ v.s.  $\mathbb{H}_1:\theta\neq 0$ in (\ref{eq scoregammaeq}), given by
\begin{equation}
     S_r(\hat{\beta}_L) = \hat{u}_L'M_{\hat{X}_L}Z_2\left(Z_2'M_{\hat{X}_L}H_{\hat{u}_L}M_{\hat{X}_L}Z_2\right)^{-1}Z_2'M_{\hat{X}_L}\hat{u}_L,  
\end{equation}
where $\hat{X}_L = Z\hat{\Pi}_L = Z(Z'M_{\hat{u}_L}Z)^{-1}Z'M_{\hat{u}_L}X$.

Under strong instruments, $J$ and $KP$ perform similarly. Figure \ref{graph esttest power} shows the finite-sample power of the $J$- and $KP$-tests against local-to-zero alternatives across different strengths of endogeneity for a heteroskedastic strong-instrument model with a single overidentifying restriction. Increasing  $\alpha$ increases the strength of the heteroskedasticity. Both tests clearly perform similarly, but the $KP$-test has higher power in most of the set-ups shown; the only seeming exception is the high endogeneity design for $\omega>0$, but the $J$-test does not have correct size in this setup. The improvements in power of $KP$ over $J$ are larger in the design with the stronger heteroskedasticity. This demonstrates clear potential for the $KP$-test to be a strong alternative to the usual $J$-test typically reported in applied research when instruments are strong.

\begin{figure}[h!]
    \centering
    \subfloat[$\alpha = 0.5$] {{\includegraphics[width=8.2cm]{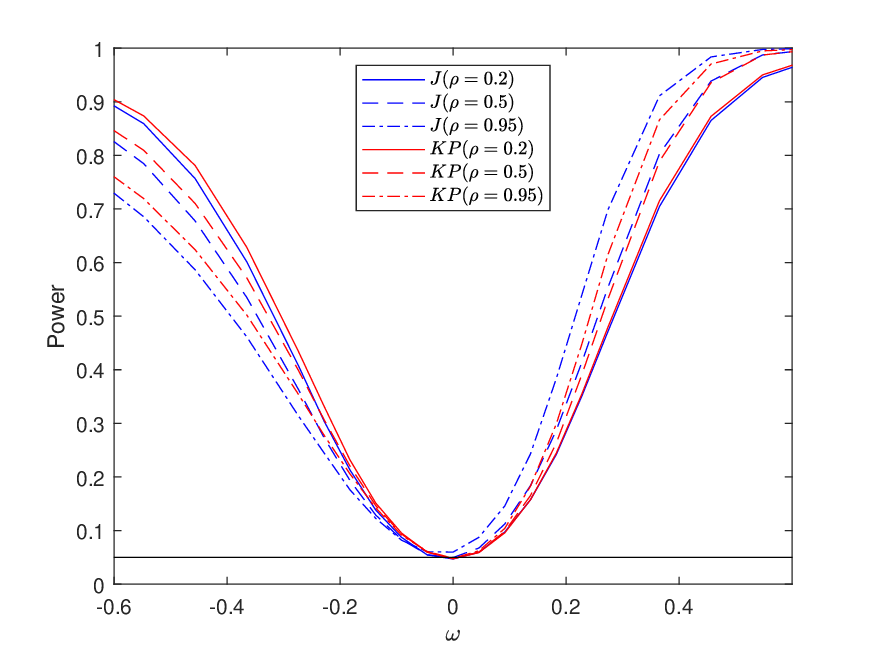}}}
    \subfloat[$\alpha = 1$] {{\includegraphics[width=8.2cm]{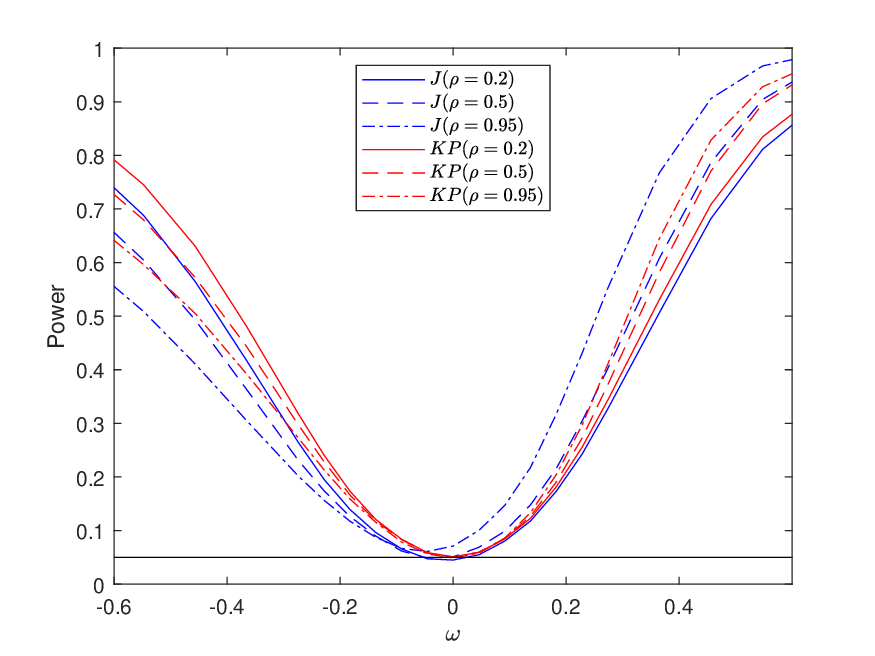}}}
    \caption[Power of the $J$- and $KP$-tests under strong instruments]{Power of $J$- and $KP$-tests. To generate the heteroskedastic errors, $(u^*_i,v^*_i)$ are generated jointly standard normal with correlation $\rho$. Then $u_i = z_{1,i}\omega + |z_{1,i}|^{\alpha}u^*_i$ and $v_i = |z_{1,i}|^{\alpha}v^*_i$ for $\alpha\in\{0.5,1\}$ (the absolute values to ensure no complex-valued errors). The overidentification test null is equivalent to $\mathbb{H}_0 : \omega = 0$. We set $n=120$, $\mu^2=48$ and use 20,000 repetitions. See Section \ref{section mc} for full details of set-up.}
    \label{graph esttest power}
\end{figure}

However, the more interesting discussion is how $J$- and $KP$ behave under heteroskedastic weak instruments, as performance when identification strength is questionable is often an important motivation behind selecting methods for estimation and inference with instrumental variables. LIML often has favourable properties relative to 2SLS under homoskedastic weak instruments, so we aim to see how this translates into the heteroskedastic case. Further, \textcite{staiger1997} recommend in general to use a LIML-based Sargan test for overidentification testing, due to the sensitivity of 2SLS in models with high endogeneity or a large number of overidentifying restrictions.

\section{Weak instruments under heteroskedasticity}\label{section weakhet}
To consider estimation and testing with heteroskedastic weak instruments, we require the following assumption:

\begin{assum}\label{as weakhet pi}
$\Pi=C/\sqrt{n}$ for some finite ${k_z\times k_x}$ matrix $C$, with $\textup{rank}(C) = k_x$. 
\end{assum}
This assumption states that the first-stage parameter matrix is local-to-zero at rate $\sqrt{n}$, following \textcite{staiger1997}. We define the concentration matrix as
\begin{equation}\label{eq concpar}
    \mu^2 = k_z\mathcal{V}_{ZV}^{-1/2} \Pi'Z'Z\Pi \mathcal{V}_{ZV}^{-1/2} \overset{p}{\to} k_z\mathcal{V}_{ZV}^{-1/2} C'Q_{ZZ}C\mathcal{V}_{ZV}^{-1/2},
\end{equation}
where $\mathcal{V}_{ZV} = (I_{k_x} \otimes \textup{vec}(I_{k_z}))' (\Omega_{ZV} \otimes Q_{ZZ}^{-1}) (I_{k_x} \otimes \textup{vec}(I_{k_z}))$.\footnote{ When the instruments are orthonormalised, the minimum eigenvalue of $\mu^2/k_z$ gives the weak-instrument test statistic of \textcite{lewis2022}, which with $k_x = 1$ collapses to the test statistic of \textcite{montiel2013}.} It is simple to show that under homoskedasticity, where $\Omega_{ZV} = \Sigma_{V} \otimes Q_{ZZ}$ for $\mathbb{E}[v_iv_i'|z_i] = \Sigma_V$, then $\mathcal{V}_{ZV} = k_z\Sigma_V$, and (\ref{eq concpar}) collapses down to the usual homoskedastic concentration matrix $\mu^2 = \Sigma_V^{-1/2} \Pi'Z'Z\Pi \Sigma_V^{-1/2}\overset{p}{\to} \Sigma_V^{-1/2} C'Q_{ZZ}C\Sigma_V^{-1/2}$.

For heteroskedastic limiting distributions, the following lemma is required.

\begin{lem} \label{lem weakhet staigerstock}
Let Assumptions \ref{as weakhom xziid}, \ref{as weakhet qzz},  \ref{as weakhet errors} and \ref{as weakhet pi} hold. Then the following results hold: \\
(a) $Z'X/\sqrt{n} \overset{d}{\to} Q_{ZZ}C + \Psi^*_{ZV}$.  \\
(b) $X'P_ZX \overset{d}{\to} (Q_{ZZ}C + \Psi^*_{ZV})'Q_{ZZ}^{-1}(Q_{ZZ}C + \Psi^*_{ZV})$. \\
(c) $X'P_Zu \overset{d}{\to} (Q_{ZZ}C + \Psi^*_{ZV})'Q_{ZZ}^{-1}\Psi^*_{Zu}$. \\
(d) $X'X/n \overset{p}{\to} \mathbb{E}[v_iv_i'] = \tilde{\Sigma}_{V}$. \\
(e) $X'u/n \overset{p}{\to} \mathbb{E}[v_iu_i] = \tilde{\sigma}_{Vu}$.
\end{lem}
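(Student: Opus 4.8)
The plan is to treat all five results as consequences of three ingredients: the structural decomposition $X = Z\Pi + V$, the weak-instrument scaling $\Pi = C/\sqrt{n}$ of Assumption \ref{as weakhet pi}, and the joint limits supplied by Assumptions \ref{as weakhet qzz} and \ref{as weakhet errors}. Parts (a)--(c) are distributional statements driven by the central limit theorem for $(Z'u/\sqrt{n},\,Z'V/\sqrt{n})$, whereas parts (d)--(e) are probability-limit statements driven by a law of large numbers. I would prove them in that order, since (b) and (c) build directly on (a).

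For part (a), I would substitute $Z'X = Z'Z\Pi + Z'V$ and use $\Pi = C/\sqrt{n}$ to write $Z'X/\sqrt{n} = (Z'Z/n)\,C + Z'V/\sqrt{n}$. Assumption \ref{as weakhet qzz} gives $(Z'Z/n)\,C \overset{p}{\to} Q_{ZZ}C$ and Assumption \ref{as weakhet errors} gives $Z'V/\sqrt{n} \overset{d}{\to} \Psi^*_{ZV}$, so Slutsky's theorem delivers the claim. For part (b), I would rewrite the quadratic form as $X'P_ZX = (Z'X/\sqrt{n})'(Z'Z/n)^{-1}(Z'X/\sqrt{n})$; combining part (a), the continuous-mapping limit $(Z'Z/n)^{-1} \overset{p}{\to} Q_{ZZ}^{-1}$ (valid since $Q_{ZZ}$ is invertible by Assumption \ref{as weakhet qzz}), and continuity of the map $(A,M)\mapsto A'MA$ yields the stated limit. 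Part (c) is analogous, writing $X'P_Zu = (Z'X/\sqrt{n})'(Z'Z/n)^{-1}(Z'u/\sqrt{n})$ and applying the continuous mapping theorem.

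The delicate point, and where I would concentrate the argument, is part (c): the limit $(Q_{ZZ}C + \Psi^*_{ZV})'Q_{ZZ}^{-1}\Psi^*_{Zu}$ is a product of two \emph{correlated} Gaussian limits, so taking marginal limits of $Z'X/\sqrt{n}$ and $Z'u/\sqrt{n}$ separately is not enough. I would instead invoke the joint weak convergence of the stacked vector $(Z'u/\sqrt{n},\,\textup{vec}(Z'V/\sqrt{n}))$ to $(\Psi^*_{Zu},\,\textup{vec}(\Psi^*_{ZV}))$ with the full covariance $\Omega_Z$ of Assumption \ref{as weakhet errors}. Since $Z'X/\sqrt{n}$ is a continuous (affine) function of $Z'V/\sqrt{n}$ together with the constant limit of $Z'Z/n$, the pair $(Z'X/\sqrt{n},\,Z'u/\sqrt{n})$ converges jointly, and the continuous mapping theorem then preserves the correlation between the two factors in the limit. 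This is precisely the feature that makes weak-instrument asymptotics non-standard: the endogeneity encoded in $\Omega_{Z,Vu}$ does not wash out as it would under strong identification.

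Finally, parts (d) and (e) follow by expanding under $\Pi = C/\sqrt{n}$ and checking that every term involving $\Pi$ is negligible. For (d), $X'X/n = \Pi'(Z'Z/n)\Pi + \Pi'(Z'V/n) + (V'Z/n)\Pi + V'V/n$, and each of the first three terms carries an extra factor of order $n^{-1/2}$ or $n^{-1}$ (because $\Pi = O(n^{-1/2})$ while $Z'Z/n = O_p(1)$ and $Z'V/\sqrt{n} = O_p(1)$), so they vanish in probability, leaving $V'V/n \overset{p}{\to} \tilde{\Sigma}_V$ by a law of large numbers under the moment conditions of Assumption \ref{as weakhet errors}. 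Part (e) is identical in structure: the $\Pi'Z'u/n$ term is $o_p(1)$ and $V'u/n \overset{p}{\to} \tilde{\sigma}_{Vu}$. I expect no real obstacle here beyond invoking the appropriate law of large numbers.
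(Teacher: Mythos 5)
Your proposal is correct and follows essentially the same route as the paper's own proof: the same algebraic decompositions $Z'X/\sqrt{n} = (Z'Z/n)C + Z'V/\sqrt{n}$, $X'P_ZX = (Z'X/\sqrt{n})'(Z'Z/n)^{-1}(Z'X/\sqrt{n})$, and the same order-counting for parts (d)--(e). Your explicit remark that part (c) requires the \emph{joint} convergence of $(Z'u/\sqrt{n},\,\textup{vec}(Z'V/\sqrt{n}))$ supplied by Assumption \ref{as weakhet errors} is a point the paper leaves implicit, and it is a worthwhile clarification rather than a deviation.
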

The interpretations of these results are clear e.g. (\textit{a}) implies that $Z'X/\sqrt{n}$ converges to a normal distribution, with asymptotic variance accounting for the structure of the heteroskedasticity. Parts (\textit{b}) and (\textit{c}) show that $X'P_ZX$ converges to a general noncentral correlated Wishart matrix and $X'P_Zu$ converges to a product of two matrix normals. Assume $Z'\bar{V}/\sqrt{n}\overset{d}{\to}\Psi_{Z\bar{V}}^*$ and let $W'W/n\overset{p}{\to}\Sigma_{\bar{V}}^*$, where $\Sigma_{\bar{V}}^*$ is the reduced-form variance matrix. Throughout the main body of the paper, we leave our limiting distributions in the random matrix form of Lemma \ref{lem weakhet staigerstock} for notational simplicity and clarity, although in Appendix \ref{app vectorised}, we restate the results of the paper in vectorised form.

\begin{lem}\label{lem weakhet alphal}
Let Assumptions \ref{as weakhom xziid}, \ref{as weakhet qzz},  \ref{as weakhet errors} and \ref{as weakhet pi} hold. Then for $B = [\beta \ \ I_{k_x}]$
\begin{equation}\label{eq weakhet alphaliml}
    n\hat{\alpha}_L\overset{d}{\to}\min_{||\phi||=1} \frac{\phi'[Q_{ZZ}CB + \Psi_{Z\bar{V}}^*]'Q_{ZZ}^{-1}[Q_{ZZ}CB + \Psi_{Z\bar{V}}^*]\phi}{\phi'\Sigma_{\bar{V}}^*\phi}=\tilde{\alpha}_L^*.
\end{equation}
\end{lem}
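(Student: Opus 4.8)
The plan is to characterise $\hat{\alpha}_L$ as the value of a generalised Rayleigh quotient minimisation and then pass to the limit via the continuous mapping theorem (CMT). By the Courant--Fischer characterisation of the smallest generalised eigenvalue, the smallest root of $|W'P_ZW - \alpha W'W| = 0$ is
\[
\hat{\alpha}_L = \min_{\|\phi\|=1} \frac{\phi'W'P_ZW\phi}{\phi'W'W\phi},
\]
which is legitimate because $W'W$ is positive definite almost surely for $n$ large. Multiplying through by $n$ and writing the denominator as $\phi'W'W\phi = n\,\phi'(W'W/n)\phi$ gives the scale-adjusted problem
\[
n\hat{\alpha}_L = \min_{\|\phi\|=1} \frac{\phi'W'P_ZW\phi}{\phi'(W'W/n)\phi},
\]
so it suffices to understand the joint limit of the numerator and denominator matrices and then invoke continuity of the min functional.

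First I would establish the joint convergence of $(W'P_ZW,\, W'W/n)$. Under Assumption \ref{as weakhet pi} the reduced-form parameter matrix is $\bar{\Pi}=\Pi B = CB/\sqrt{n}$, so $Z'W/\sqrt{n} = (Z'Z/n)CB + Z'\bar{V}/\sqrt{n} \overset{d}{\to} Q_{ZZ}CB + \Psi_{Z\bar{V}}^*$ by Assumption \ref{as weakhet qzz} and the stated convergence $Z'\bar{V}/\sqrt{n}\overset{d}{\to}\Psi_{Z\bar{V}}^*$. Writing $W'P_ZW = (Z'W/\sqrt{n})'(Z'Z/n)^{-1}(Z'W/\sqrt{n})$ and applying the CMT then yields
\[
W'P_ZW \overset{d}{\to} [Q_{ZZ}CB + \Psi_{Z\bar{V}}^*]'Q_{ZZ}^{-1}[Q_{ZZ}CB + \Psi_{Z\bar{V}}^*],
\]
which is the exact analogue of Lemma \ref{lem weakhet staigerstock}(b) with $W$ in place of $X$. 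Simultaneously $W'W/n \overset{p}{\to} \Sigma_{\bar{V}}^*$. Since the second convergence is to a \emph{constant} matrix, Slutsky's lemma upgrades these two statements to a single joint convergence of the pair to $(A,\,\Sigma_{\bar{V}}^*)$, where $A$ denotes the displayed Wishart-type limit; this avoids having to track the joint law of all the underlying Gaussian blocks directly.

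The last step is to apply the CMT to the map $(A,D)\mapsto \min_{\|\phi\|=1}\phi'A\phi/\phi'D\phi$, whose value equals the smallest generalised eigenvalue of the pencil $(A,D)$. This functional is continuous at every pair for which $D$ is positive definite, because in that regime the generalised Rayleigh quotient and its minimiser depend continuously on $(A,D)$. As $\Sigma_{\bar{V}}^*$ is positive definite, the CMT delivers
\[
n\hat{\alpha}_L \overset{d}{\to} \min_{\|\phi\|=1}\frac{\phi'[Q_{ZZ}CB + \Psi_{Z\bar{V}}^*]'Q_{ZZ}^{-1}[Q_{ZZ}CB + \Psi_{Z\bar{V}}^*]\phi}{\phi'\Sigma_{\bar{V}}^*\phi} = \tilde{\alpha}_L^*,
\]
as claimed. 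I expect the only genuine subtlety to be the justification that this minimum-eigenvalue functional is continuous at the limit pair: the argument hinges entirely on the positive definiteness of $\Sigma_{\bar{V}}^*$, which rules out degeneracy of the denominator and guarantees that no mass escapes as $\|\phi\|\to 0$ is excluded by the normalisation $\|\phi\|=1$. Everything else is a routine combination of Assumptions \ref{as weakhet qzz}--\ref{as weakhet pi}, the continuous mapping theorem, and Slutsky's lemma.
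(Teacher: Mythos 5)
Your proposal is correct and follows essentially the same route as the paper: both express $n\hat{\alpha}_L$ as the minimised generalised Rayleigh quotient $\min_{\|\phi\|=1}\phi'W'P_ZW\phi/\phi'(W'W/n)\phi$, derive $Z'W/\sqrt{n}\overset{d}{\to}Q_{ZZ}CB+\Psi_{Z\bar{V}}^*$ from $\bar{\Pi}=CB/\sqrt{n}$, and pass to the limit using $W'W/n\overset{p}{\to}\Sigma_{\bar{V}}^*$. Your explicit justification of the continuous mapping step (continuity of the minimum generalised eigenvalue at pairs with positive definite denominator, plus Slutsky for joint convergence) is a detail the paper leaves implicit, but it does not change the argument.
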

Expression (\ref{eq weakhet alphaliml}) shows that $\tilde{\alpha}^*_L$ is asymptotically random, unlike under standard asymptotics where $n\hat{\alpha}_L \overset{p}{\to} 0$. $\tilde{\alpha}^*_L$ is the minimum eigenvalue of a general noncentral correlated Wishart matrix. By definition, $\tilde{\alpha}_L^*$ is defined as the limit of
\begin{equation*}
    n\hat{\alpha}_L = \min_{||\phi||=1}\frac{\phi'W'P_ZW\phi}{\phi'\left(\frac{1}{n}W'W\right)\phi},
\end{equation*}
i.e. the smallest eigenvalue of $(\frac{1}{n}W'W)^{-1/2}(W'P_ZW)(\frac{1}{n}W'W)^{-1/2}$. Then
\begin{align}\label{eq mc wpwlimit1}
    W'P_ZW = \begin{bmatrix}
        y'P_Zy & y'P_ZX \\
        X'P_Zy & X'P_ZX
    \end{bmatrix}
    \overset{d}{\to}
    \begin{bmatrix}
        (\xi_1\beta + \xi_2)'(\xi_1\beta + \xi_2) & (\xi_1\beta + \xi_2)'\xi_1 \\
        \xi_1'(\xi_1\beta + \xi_2) & \xi_1'\xi_1 
    \end{bmatrix},
\end{align}
where $\xi_1 = Q_{ZZ}^{-1/2}(Q_{ZZ}C + \Psi^*_{ZV})$ and $\xi_2 = Q_{ZZ}^{-1/2}\Psi^*_{Zu}$.
By Assumption \ref{as weakhet errors}, the $k_z \times (k_x+1)$ matrix $\Xi = [(\xi_1\beta + \xi_2) \ \ \xi_1 ]$ is distributed $\Xi \sim N(\mathcal{M}_{\xi}, \mathcal{V}_{\xi})$ for $\mathcal{M}_{\xi} = Q_{ZZ}^{1/2}CB$ and $\mathcal{V}_{\xi}$ dependent on the specific structure of the heteroskedasticity. It therefore follows that
\begin{equation}
    \Xi'\Xi \sim W_{k_x+1}(k_z,\mathcal{V}_{\xi},\Lambda),
\end{equation}
where $W_q(p,V,\Theta)$ denotes the non-central Wishart distribution with dimension $q$, $p$ degrees of freedom, scaling matrix $V$ and non-centrality matrix $\Theta$, and $\Lambda=\mathcal{M}_{\xi}'\mathcal{M}_{\xi}$. We can also see that 
\begin{equation}\label{eq mc wwlimits1}
    \frac{1}{n}W'W \overset{p}{\to} \Sigma^*_{\bar{V}}.
\end{equation}
Therefore, $\tilde{\alpha}^*_L$ is the distribution of the minimum eigenvalue of the limit of the random matrix $(\frac{1}{n}W'W)^{-1/2}(W'P_ZW)(\frac{1}{n}W'W)^{-1/2}$ with distribution
\begin{equation}\label{eq mc wishartlimit}
    \left(\frac{1}{n}W'W\right)^{-1/2}(W'P_ZW)\left(\frac{1}{n}W'W\right)^{-1/2} \overset{d}{\to} W_{k_x+1}(k_z,\Sigma^{*-1/2}_{\bar{V}}\mathcal{V}_{\xi}\Sigma^{*-1/2}_{\bar{V}},\Sigma^*_{\bar{V}}\Lambda\Sigma^*_{\bar{V}}).
\end{equation}
Clearly, the structure of the heteroskedasticity plays an important role in this distribution, although general statements are more difficult here than in \textcite{staiger1997} due to the loss of the Kronecker variance structure. Further, there are no results for the distribution of extreme eigenvalues of general non-central, correlated Wishart matrices to the best of our knowledge, and deriving results on the distribution of extreme eigenvalues of such matrices is beyond the scope of this paper. This makes precise statements about the behaviour of LIML impossible. We can however see from inspection that $\tilde{\alpha}^*_L$ will be dependent on instrument strength, instrument numerosity and the endogeneity and heteroskedasticity of the model.

\begin{lem}\label{lem weakhet betalimits}
Let Assumptions \ref{as weakhom xziid}, \ref{as weakhet qzz},  \ref{as weakhet errors} and \ref{as weakhet pi} hold and assume $n\alpha\overset{d}{\to}\tilde{\alpha}^*$. Then, 
\begin{equation}\label{eq weakhet b2sls}
    \hat{\beta}_{2SLS} - \beta \overset{d}{\to} \tilde{\beta}^*_{2SLS} = [(Q_{ZZ}C + \Psi^*_{ZV})'Q_{ZZ}^{-1}(Q_{ZZ}C + \Psi^*_{ZV})]^{-1}(Q_{ZZ}C + \Psi^*_{ZV})'Q_{ZZ}^{-1}\Psi^*_{Zu},
\end{equation}
\begin{equation}\label{eq weakhet bliml}
    \hat{\beta}_{L} - \beta \overset{d}{\to} \tilde{\beta}^*_{L} = [(Q_{ZZ}C + \Psi^*_{ZV})'Q_{ZZ}^{-1}(Q_{ZZ}C + \Psi^*_{ZV}) - \tilde{\alpha}^*_L \tilde{\Sigma}_V]^{-1}[(Q_{ZZ}C + \Psi^*_{ZV})'Q_{ZZ}^{-1}\Psi^*_{Zu} - \tilde{\alpha}^*_L\tilde{\Sigma}_{Vu}].
\end{equation}
\end{lem}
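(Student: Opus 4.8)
The plan is to treat $\btsls$ and $\hbeta_L$ as members of the $k$-class family (\ref{eq weakhom kclass}) and to assemble their limits from the building blocks already supplied by Lemmas \ref{lem weakhet staigerstock} and \ref{lem weakhet alphal}. First I would substitute the structural equation $y = X\beta + u$ into $\hat{\beta}(\alpha)$. Since $X'P_Zy - \alpha X'y = (X'P_ZX - \alpha X'X)\beta + (X'P_Zu - \alpha X'u)$, the $\beta$ term cancels and I obtain the identity
\begin{equation*}
    \hat{\beta}(\alpha) - \beta = (X'P_ZX - \alpha X'X)^{-1}(X'P_Zu - \alpha X'u),
\end{equation*}
valid whenever the bracketed matrix is nonsingular. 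Putting $\alpha = 0$ gives the 2SLS estimator and $\alpha = \hat{\alpha}_L$ the LIML estimator, so everything reduces to finding the joint limiting law of the numerator and denominator.

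For the individual blocks I would invoke Lemma \ref{lem weakhet staigerstock} directly: parts (b) and (c) supply the limits of $X'P_ZX$ and $X'P_Zu$, and parts (d)--(e) give $X'X/n \overset{p}{\to}\tilde{\Sigma}_V$ and $X'u/n \overset{p}{\to}\tilde{\Sigma}_{Vu}$. Writing the correction terms as $\alpha X'X = (n\alpha)(X'X/n)$ and $\alpha X'u = (n\alpha)(X'u/n)$ isolates $n\alpha$, which converges to $\tilde{\alpha}^*$ by hypothesis ($\tilde{\alpha}^* = 0$ for 2SLS, and $\tilde{\alpha}^* = \tilde{\alpha}^*_L$ of Lemma \ref{lem weakhet alphal} for LIML). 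In the 2SLS case both correction terms vanish, so the joint convergence of (b) and (c) and the continuous mapping theorem applied to $(M,v)\mapsto M^{-1}v$ immediately yield $\tilde{\beta}^*_{2SLS}$.

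The LIML case is where the real work lies, and the main obstacle is that $n\hat{\alpha}_L$ has a \emph{nondegenerate random} limit: I cannot appeal to Slutsky to replace it by a constant, nor treat numerator and denominator as asymptotically independent, because they share this factor. Instead I would establish genuine \emph{joint} weak convergence of $(X'P_ZX,\ X'P_Zu,\ n\hat{\alpha}_L)$ by exhibiting all three as continuous functionals of the single Gaussian limit of Assumption \ref{as weakhet errors}. The first two are continuous images of $(Z'X/\sqrt{n},\, Z'u/\sqrt{n},\, Z'Z/n)$ through parts (a)--(c); and by (\ref{eq mc wpwlimit1})--(\ref{eq mc wishartlimit}), $n\hat{\alpha}_L$ is the minimum-eigenvalue functional of the symmetric matrix $(\tfrac1n W'W)^{-1/2}(W'P_ZW)(\tfrac1n W'W)^{-1/2}$, which is continuous in its argument and is itself a continuous image of the same quantities. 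Since $X'X/n$ and $X'u/n$ converge in probability to constants, Slutsky permits substituting $\tilde{\Sigma}_V$ and $\tilde{\Sigma}_{Vu}$, so that
\begin{align*}
    X'P_ZX - \alpha X'X &\overset{d}{\to} (Q_{ZZ}C + \Psi^*_{ZV})'Q_{ZZ}^{-1}(Q_{ZZ}C + \Psi^*_{ZV}) - \tilde{\alpha}^*_L\tilde{\Sigma}_V, \\
    X'P_Zu - \alpha X'u &\overset{d}{\to} (Q_{ZZ}C + \Psi^*_{ZV})'Q_{ZZ}^{-1}\Psi^*_{Zu} - \tilde{\alpha}^*_L\tilde{\Sigma}_{Vu},
\end{align*}
\emph{jointly}, with the common factor $\tilde{\alpha}^*_L$ correctly coupled across the two lines. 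A final application of the continuous mapping theorem to $(M,v)\mapsto M^{-1}v$ then delivers $\tilde{\beta}^*_L$.

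The remaining technical point is that the inverse map is continuous only at nonsingular matrices, so I must check that the limiting denominator is invertible almost surely. For 2SLS this is immediate: $\mathrm{rank}(C)=k_x$ (Assumption \ref{as weakhet pi}) makes the noncentral Wishart matrix of part (b) nonsingular with probability one. For LIML the denominator $(Q_{ZZ}C + \Psi^*_{ZV})'Q_{ZZ}^{-1}(Q_{ZZ}C + \Psi^*_{ZV}) - \tilde{\alpha}^*_L\tilde{\Sigma}_V$ is the lower-right $k_x\times k_x$ block of $\Xi'\Xi - \tilde{\alpha}^*_L\Sigma^*_{\bar{V}}$, which is singular precisely because $\tilde{\alpha}^*_L$ is its smallest generalized eigenvalue; I would argue that this principal block is nonetheless nonsingular almost surely, as the associated null eigenvector loads on the $y$-coordinate with probability one, leaving the $X$-block positive definite. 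Granting this, the continuous mapping theorem applies on a full-measure event and the stated convergence follows.
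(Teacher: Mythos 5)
Your proposal is correct and takes essentially the same route as the paper, whose entire proof of this lemma is the single line ``follows from Lemmas \ref{lem weakhet staigerstock} and \ref{lem weakhet alphal}''; you have simply filled in the details of that assertion via the $k$-class identity $\hat{\beta}(\alpha)-\beta=(X'P_ZX-\alpha X'X)^{-1}(X'P_Zu-\alpha X'u)$ and the continuous mapping theorem. Your additional care on the two points the paper leaves implicit --- the \emph{joint} weak convergence of $(X'P_ZX,\,X'P_Zu,\,n\hat{\alpha}_L)$ as functionals of the single Gaussian limit in Assumption \ref{as weakhet errors}, and the almost-sure nonsingularity of the limiting LIML denominator (where your observation that the null vector of the positive semidefinite matrix $\Xi'\Xi-\tilde{\alpha}^*_L\Sigma^*_{\bar{V}}$ must load on the $y$-coordinate is the right argument) --- is a genuine improvement in rigour rather than a departure in method.
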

This lemma gives a minor adaptation of Lemma 1 from \textcite{montiel2013}, who give the limiting distributions of 2SLS and LIML in a heteroskedastic model with orthonormalised instruments i.e. $Z'Z/n = I_{k_z}$. Although 2SLS and LIML are invariant to this transformation, (\ref{eq weakhet b2sls}) and (\ref{eq weakhet bliml}) do not require the orthonomalising transformation and so the result can be applied to any data satisfying the assumptions specified in Lemma \ref{lem weakhet betalimits}. The expressions are complicated mixtures and ratios of normals, similar to the limiting distributions given by \textcite{staiger1997}. Both the spread and location of the 2SLS and LIML estimators are affected by the structure of the heteroskedasticity, which cannot be consistently estimated under weak-instrument asymptotics. Before moving to the limiting distribution of the $J$ and $KP$-tests, we need the limiting distributions of the first-stage parameter estimators, as these feature in the limits of the variance terms in the test statistics.

\begin{lem}\label{lem weakhet pilimit}
Let Assumptions \ref{as weakhom xziid}, \ref{as weakhet qzz},  \ref{as weakhet errors} and \ref{as weakhet pi} hold. Then: \\
(\textit{a}) $\sqrt{n}\hat{\Pi}_{2SLS}\overset{d}{\to} \tilde{\Pi}^*_{2SLS} = Q_{ZZ}^{-1}(Q_{ZZ}C + \Psi^*_{ZV})$. \\
(\textit{b}) $\sqrt{n}\hat{\Pi}_{L}\overset{d}{\to} \tilde{\Pi}^*_{L} = \tilde{\Pi}^*_{2SLS} - Q_{ZZ}^{-1}[(\Psi_{Zu}^* - (Q_{ZZ}C + \Psi_{ZV}^*)\tilde{\beta}^*_L)(\tilde{\Sigma}_{Vu} - \tilde{\beta}^*_L\tilde{\Sigma}_{V})']/[ \tilde{\sigma}_{u}^2 - 2\tilde{\beta}^*_L\tilde{\Sigma}_{Vu} + {\tilde{\beta}^{*'}_L}\tilde{\Sigma}_{V}\tilde{\beta}^*_L].$
\end{lem}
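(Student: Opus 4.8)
The plan is to treat both first-stage estimators as continuous transformations of the jointly converging building blocks supplied by Lemmas \ref{lem weakhet staigerstock}--\ref{lem weakhet betalimits} and then invoke the continuous mapping theorem together with Slutsky's theorem. Part (\textit{a}) is immediate: writing $\sqrt{n}\hat{\Pi}_{2SLS} = (Z'Z/n)^{-1}(Z'X/\sqrt{n})$, Assumption \ref{as weakhet qzz} gives $(Z'Z/n)^{-1}\cP Q_{ZZ}^{-1}$ while Lemma \ref{lem weakhet staigerstock}(\textit{a}) gives $Z'X/\sqrt{n}\cD Q_{ZZ}C + \Psi^*_{ZV}$, so Slutsky's theorem yields $\tilde{\Pi}^*_{2SLS} = Q_{ZZ}^{-1}(Q_{ZZ}C + \Psi^*_{ZV})$.

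For part (\textit{b}) I would first expand the LIML projection. Using $M_{\hat{u}_L} = I_n - \hat{u}_L(\hat{u}_L'\hat{u}_L)^{-1}\hat{u}_L'$ and $\hat{u}_L = u - X(\hat{\beta}_L - \beta)$, write
\begin{align*}
Z'M_{\hat{u}_L}X &= Z'X - \frac{(Z'\hat{u}_L)(\hat{u}_L'X)}{\hat{u}_L'\hat{u}_L}, &
Z'M_{\hat{u}_L}Z &= Z'Z - \frac{(Z'\hat{u}_L)(\hat{u}_L'Z)}{\hat{u}_L'\hat{u}_L}.
\end{align*}
The next step is to record the limits of the rescaled building blocks. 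Combining Lemma \ref{lem weakhet staigerstock}(\textit{a}),(\textit{d}),(\textit{e}), Assumption \ref{as weakhet errors}, the law of large numbers $u'u/n\cP\tilde{\sigma}_u^2$, and Lemma \ref{lem weakhet betalimits},
\begin{align*}
Z'\hat{u}_L/\sqrt{n} &\cD \Psi^*_{Zu} - (Q_{ZZ}C + \Psi^*_{ZV})\tilde{\beta}^*_L, \\
\hat{u}_L'X/n &\cD (\tilde{\Sigma}_{Vu} - \tilde{\Sigma}_V\tilde{\beta}^*_L)', \\
\hat{u}_L'\hat{u}_L/n &\cD \tilde{\sigma}_u^2 - 2\tilde{\beta}^{*'}_L\tilde{\Sigma}_{Vu} + \tilde{\beta}^{*'}_L\tilde{\Sigma}_V\tilde{\beta}^*_L.
\end{align*}

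The key computation is then a careful bookkeeping of rates that distinguishes the two correction terms. The denominator correction can be written as $n^{-1}(Z'\hat{u}_L/\sqrt{n})(\hat{u}_L'Z/\sqrt{n})/(\hat{u}_L'\hat{u}_L/n)$, which is $O_p(n^{-1})$ and vanishes, so $Z'M_{\hat{u}_L}Z/n \cP Q_{ZZ}$. By contrast the numerator correction equals $(Z'\hat{u}_L/\sqrt{n})(\hat{u}_L'X/n)/(\hat{u}_L'\hat{u}_L/n)$, which survives, giving
\begin{equation*}
Z'M_{\hat{u}_L}X/\sqrt{n} \cD (Q_{ZZ}C + \Psi^*_{ZV}) - \frac{[\Psi^*_{Zu} - (Q_{ZZ}C + \Psi^*_{ZV})\tilde{\beta}^*_L](\tilde{\Sigma}_{Vu} - \tilde{\Sigma}_V\tilde{\beta}^*_L)'}{\tilde{\sigma}_u^2 - 2\tilde{\beta}^{*'}_L\tilde{\Sigma}_{Vu} + \tilde{\beta}^{*'}_L\tilde{\Sigma}_V\tilde{\beta}^*_L}.
\end{equation*}
Premultiplying by $(Z'M_{\hat{u}_L}Z/n)^{-1}\cP Q_{ZZ}^{-1}$ and applying Slutsky's theorem then delivers $\tilde{\Pi}^*_L$ (using symmetry of $\tilde{\Sigma}_V$ to match the stated form).

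The main obstacle will be justifying joint convergence so that the continuous mapping theorem legitimately applies to the composite ratio expression, rather than to the marginals in isolation. All terms above are continuous functions of the common primitive array $(Z'u/\sqrt{n}, \textup{vec}(Z'V/\sqrt{n}), Z'Z/n, W'W/n, W'P_ZW)$ together with $\hat{\alpha}_L$; Assumption \ref{as weakhet errors} and the law of large numbers give joint convergence of this array, Lemma \ref{lem weakhet alphal} supplies $n\hat{\alpha}_L\cD\tilde{\alpha}^*_L$, and hence $\hat{\beta}_L-\beta$ and every building block converge jointly with a single underlying Gaussian limit. The remaining care is confirming that the limiting denominators are almost surely nonzero where continuity genuinely bites: $\det Q_{ZZ}\neq 0$ is guaranteed by Assumption \ref{as weakhet qzz}, and $\tilde{\sigma}_u^2 - 2\tilde{\beta}^{*'}_L\tilde{\Sigma}_{Vu} + \tilde{\beta}^{*'}_L\tilde{\Sigma}_V\tilde{\beta}^*_L$, being the probability limit of the LIML residual variance $\hat{u}_L'\hat{u}_L/n$, is positive provided the residual variance does not degenerate.
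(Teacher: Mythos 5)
Your proposal is correct and follows essentially the same route as the paper's proof: part (\textit{a}) by Slutsky from Lemma \ref{lem weakhet staigerstock}(\textit{a}) and Assumption \ref{as weakhet qzz}, and part (\textit{b}) by expanding $\hat{\Pi}_L=(Z'Z - Z'\hat{u}_L(\hat{u}_L'\hat{u}_L)^{-1}\hat{u}_L'Z)^{-1}(Z'X - Z'\hat{u}_L(\hat{u}_L'\hat{u}_L)^{-1}\hat{u}_L'X)$, recording the same limits for $Z'\hat{u}_L/\sqrt{n}$, $\hat{u}_L'X/n$ and $\hat{u}_L'\hat{u}_L/n$, and noting that the correction in the inverted factor is $O_p(n^{-1})$ while the one in the second factor survives. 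Your added remarks on joint convergence and non-degeneracy of the limiting denominator are points the paper leaves implicit, but they do not change the argument.
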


The 2SLS estimator for $\Pi$ is asymptotically matrix-normal and with location matrix $C$. Therefore, while the estimator is consistent, it is asymptotically biased. $\tilde{\Pi}^*_L$ is difficult to analyse due to the second term in the expression subtracted from $\tilde{\Pi}^*_{2SLS}$, given that this term is a complicated ratio of random variables with the nonstandard distribution $\tilde{\beta}^*_L$ entering in a nonlinear manner (note that $\hat{\beta}_L$ and $\hat{\Pi}_L$ are estimated simultaneously, whereas of course $\hat{\Pi}_{2SLS}$ is estimated in the first stage and then $\hat{\beta}_{2SLS}$ is estimated in the second stage).

\subsection{Overidentification testing}\label{section weakhet od}

Now we derive the limiting distribution of the robust score test under heteroskedastic weak instruments, which will nest the weak-instrument limits for the $J$- and $KP$-tests. For the limiting distribution of the robust score test, let $\tilde{Z}_2=M_{\hat{X}}Z_2$ (with $i^{th}$ row $\tilde{z}_{2i}'$).

\begin{assum}\label{as weakhet ztilde}
For some estimator $\hat{\beta}$, let the following limits exist:

\begin{equation*}
    \frac{1}{n}\sumin u_i^2\tilde{z}_{2,i}\tilde{z}_{2,i}' \overset{p}{\to} \Omega_{\tilde{Z}_{2},u}, \,\,\,
    \frac{1}{n}\sumin [u_iv_i'(\hat{\beta}-\beta)]\tilde{z}_{2,i}\tilde{z}_{2,i}' \overset{d}{\to} \Omega_{\tilde{Z}_{2},uV\tilde{\beta}},
\end{equation*}
\begin{equation*}
    \frac{1}{n}\sumin [(\hat{\beta}-\beta)'v_iv_i'(\hat{\beta}-\beta)]\tilde{z}_{2,i}\tilde{z}_{2,i}' \overset{d}{\to} \Omega_{\tilde{Z}_{2},\tilde{\beta} V\tilde{\beta}}.
\end{equation*}
Then for $\hat{u} = y - X\hat{\beta}$, define $\Omega_{\tilde{Z}_2\hat{u}} = \Omega_{\tilde{Z}_{2},u} - 2\Omega_{\tilde{Z}_{2},uV\tilde{\beta}} + \Omega_{\tilde{Z}_{2},\tilde{\beta} V\tilde{\beta}}.$
Also, let $Z'Z_2/n\overset{p}{\to}Q_2$, $Z_2'Z_2/n\overset{p}{\to}Q_{22}$, $Z_2'u/\sqrt{n}\overset{d}{\to}\Psi_{2,u}^*$ and $Z_2'V/\sqrt{n}\overset{d}{\to}\Psi_{2,V}^*$.
\end{assum}

The first part of Assumption \ref{as weakhet ztilde} gives general limits for terms that will appear in the decomposition of the denominator of the robust score test. The second part of Assumption \ref{as weakhet ztilde} defines two other convergence results that will be important for the limiting distributions in Theorem \ref{thm robscore}. 

All three terms $\Omega_{\tilde{Z}_{2},u}$, $\Omega_{\tilde{Z}_{2},uV\tilde{\beta}}$ and $\Omega_{\tilde{Z}_{2},\tilde{\beta} V\tilde{\beta}}$ are random matrices by Lemma \ref{lem weakhet betalimits} and will result in our test statistics becoming ratios of distributions asymptotically, analogous to the \textcite{staiger1997} results for the homoskedastic Sargan test. The matrix $\tilde{Z}_2'\tilde{Z}_2/n$ is asymptotically non-degenerate with weak instruments. This follows since
\begin{align}\label{eq weakhet z2mxz2a}
    \tilde{Z}_2'\tilde{Z}_2 = Z_2'M_{\hat{X}}Z_2 = Z_2'Z_2 - Z_2'P_{\hat{X}}Z_2 = Z_2'Z_2 - Z_2'Z\hat{\Pi}(\hat{\Pi}'Z'Z\hat{\Pi})^{-1}\hat{\Pi}'Z'Z_2
\end{align}
and so therefore
\begin{align}\label{eq weakhet z2mxz2a2}
    \frac{1}{n}Z_2'M_{\hat{X}}Z_2 & = \frac{1}{n}Z_2'Z_2 - \frac{1}{n}Z_2'Z\sqrt{n}\hat{\Pi} \left(\sqrt{n}\hat{\Pi}'\frac{1}{n}Z'Z\sqrt{n}\hat{\Pi}\right)^{-1} \sqrt{n}\hat{\Pi}'\frac{1}{n}Z'Z_2 \\
    &\overset{d}{\to} Q_{22} - Q_2'\tilde{\Pi}^*(\tilde{\Pi}^*Q_{ZZ}\tilde{\Pi}^*)^{-1}\tilde{\Pi}^*Q_2
\end{align}
which depends nonlinearly on the random matrix normal variable $\tilde{\Pi}^*$. If the model is estimated via LIML, which causes $\tilde{\Pi}^*$ to have a particularly nonstandard distribution, the resulting expression in (\ref{eq weakhet z2mxz2a2}) will be highly nonstandard. The specific structure of the heteroskedasticity will dictate how these nonlinearities in $\tilde{\Pi}^*$ interact, and general statements are difficult. If homoskedasticity was imposed, then e.g. $\Omega_{\tilde{Z}_{2},u} = \sigma_u^2[Q_{22} - Q_2'\tilde{\Pi}^*(\tilde{\Pi}^*Q_{ZZ}\tilde{\Pi}^*)^{-1}\tilde{\Pi}^*Q_2]$. With these components, we now state the main theoretical result of the paper:

\begin{thm}\label{thm robscore}
Let Assumptions \ref{as weakhom xziid}, \ref{as weakhet qzz}, \ref{as weakhet errors}, \ref{as weakhet pi} and \ref{as weakhet ztilde} hold. Then,
\begin{equation}\label{eq weakhet scorethm}
    S_r(\hat{\beta}) \overset{d}{\to} [\Psi^*_{2,u} - (Q_2'C + \Psi^*_{2,V})\tilde{\beta}^{*}]' \Omega_{\tilde{Z}_2\hat{u}}^{-1} [\Psi^*_{2,u} - (Q_2'C + \Psi^*_{2,V})\tilde{\beta}^{*}],
\end{equation}
where $\hat{\beta}$ is either the 2SLS or LIML estimator.
\end{thm}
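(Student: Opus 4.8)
The plan is to rescale the statistic so that both its score vector and its inner weighting matrix have nondegenerate limits, treat these two pieces separately, and recombine via the continuous mapping theorem. Writing $g = Z_2'M_{\hat{X}}\hat{u}$ and $D = Z_2'M_{\hat{X}}H_{\hat{u}}M_{\hat{X}}Z_2 = \tilde{Z}_2'H_{\hat{u}}\tilde{Z}_2$, I note that $S_r(\hat{\beta}) = (g/\sqrt{n})'(D/n)^{-1}(g/\sqrt{n})$, so it suffices to obtain the joint limits of $g/\sqrt{n}$ and $D/n$. Since every random object below is an almost surely continuous transformation of the single Gaussian limit of Assumption \ref{as weakhet errors}, joint convergence is automatic, and the almost sure invertibility of the limiting matrices (guaranteed by $\textup{rank}(C)=k_x$ and positive-definite error second moments) lets me apply the continuous mapping theorem at the end. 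The same scheme covers both estimators, the only difference being which triple $(\tilde{\beta}^*,\tilde{\Pi}^*,\tilde{\alpha}^*)$ is substituted.

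For the numerator I would write $\hat{u} = u - X(\hat{\beta}-\beta)$ and split $\tfrac{1}{\sqrt{n}}Z_2'M_{\hat{X}}\hat{u} = \tfrac{1}{\sqrt{n}}Z_2'M_{\hat{X}}u - \big(\tfrac{1}{\sqrt{n}}Z_2'M_{\hat{X}}X\big)(\hat{\beta}-\beta)$. Expanding $M_{\hat{X}} = I - Z\hat{\Pi}(\hat{\Pi}'Z'Z\hat{\Pi})^{-1}\hat{\Pi}'Z'$ and inserting the scalings $\sqrt{n}\hat{\Pi}\overset{d}{\to}\tilde{\Pi}^*$ (Lemma \ref{lem weakhet pilimit}), $Z_2'Z/n\overset{p}{\to}Q_2'$, $Z'X/\sqrt{n}\overset{d}{\to}Q_{ZZ}C+\Psi^*_{ZV}$, $Z'u/\sqrt{n}\overset{d}{\to}\Psi^*_{Zu}$, $Z_2'u/\sqrt{n}\overset{d}{\to}\Psi^*_{2,u}$ and $Z_2'X/\sqrt{n}\overset{d}{\to}Q_2'C+\Psi^*_{2,V}$, then using $\hat{\beta}-\beta\overset{d}{\to}\tilde{\beta}^*$ (Lemma \ref{lem weakhet betalimits}), gives after collecting terms
\begin{equation*}
\frac{1}{\sqrt{n}}Z_2'M_{\hat{X}}\hat{u}\overset{d}{\to} \Psi^*_{2,u}-(Q_2'C+\Psi^*_{2,V})\tilde{\beta}^* - Q_2'\tilde{\Pi}^*(\tilde{\Pi}^{*\prime}Q_{ZZ}\tilde{\Pi}^*)^{-1}\tilde{\Pi}^{*\prime}\left[\Psi^*_{Zu}-(Q_{ZZ}C+\Psi^*_{ZV})\tilde{\beta}^*\right].
\end{equation*}
The first two terms are exactly the target, so everything hinges on the cross-projection term vanishing, i.e. on the orthogonality identity $\tilde{\Pi}^{*\prime}r^*=0$ where $r^* := \Psi^*_{Zu}-(Q_{ZZ}C+\Psi^*_{ZV})\tilde{\beta}^*$. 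Since $r^*$ is the limit of $\tfrac{1}{\sqrt{n}}Z'\hat{u}$ and $\tilde{\Pi}^*$ that of $\sqrt{n}\hat{\Pi}$, this identity is exactly the statement $\hat{X}'\hat{u}=\hat{\Pi}'Z'\hat{u}\overset{p}{\to}0$.

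Verifying this identity is the main obstacle, and it is the only place where the two estimators genuinely differ. For 2SLS it is immediate: $\hat{X}=P_ZX$, so $\hat{X}'\hat{u}=X'P_Z(y-\btsls X)=0$ exactly by the 2SLS first-order condition (equivalently $Z_2'M_{\hat{X}}X=Z_2'M_ZX=0$, since the columns of $Z_2$ lie in the span of $Z$). For LIML the finite-sample analogue fails, so I would pass to the limit using two facts extracted from the generalised-eigenvalue characterisation behind Lemmas \ref{lem weakhet alphal}--\ref{lem weakhet pilimit}: (i) the first-order condition $X'P_Z\hat{u}_L=\hat{\alpha}_L X'\hat{u}_L$, which in the limit reads $(Q_{ZZ}C+\Psi^*_{ZV})'Q_{ZZ}^{-1}r^* = \tilde{\alpha}^*_L(\tilde{\Sigma}_{Vu}-\tilde{\Sigma}_V\tilde{\beta}^*_L)$; and (ii) the eigenvalue relation $\hat{\alpha}_L=(\hat{u}_L'P_Z\hat{u}_L)/(\hat{u}_L'\hat{u}_L)$, which yields $\tilde{\alpha}^*_L = r^{*\prime}Q_{ZZ}^{-1}r^*/s$ with $s=\tilde{\sigma}_u^2-2\tilde{\beta}^{*\prime}_L\tilde{\Sigma}_{Vu}+\tilde{\beta}^{*\prime}_L\tilde{\Sigma}_V\tilde{\beta}^*_L$ the limit of $\hat{u}_L'\hat{u}_L/n$. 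Substituting the explicit form of $\tilde{\Pi}^*_L$ from Lemma \ref{lem weakhet pilimit}(b) and applying (i)--(ii), the two contributions cancel and $\tilde{\Pi}^{*\prime}_Lr^*=0$; hence in both cases the numerator converges to $\Psi^*_{2,u}-(Q_2'C+\Psi^*_{2,V})\tilde{\beta}^*$.

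For the denominator I would expand $\hat{u}_i^2 = u_i^2 - 2u_ix_i'(\hat{\beta}-\beta) + (\hat{\beta}-\beta)'x_ix_i'(\hat{\beta}-\beta)$ inside $\tfrac{1}{n}\tilde{Z}_2'H_{\hat{u}}\tilde{Z}_2 = \tfrac{1}{n}\sumin\hat{u}_i^2\tilde{z}_{2,i}\tilde{z}_{2,i}'$. Because $x_i=\Pi'z_i+v_i$ with $\Pi=C/\sqrt{n}$, the $\Pi'z_i$ contributions carry an extra $n^{-1/2}$ and are asymptotically negligible in these averages, so $x_i$ may be replaced by $v_i$ throughout; the three resulting sample averages converge to $\Omega_{\tilde{Z}_2,u}$, $\Omega_{\tilde{Z}_2,uV\tilde{\beta}}$ and $\Omega_{\tilde{Z}_2,\tilde{\beta}V\tilde{\beta}}$ by Assumption \ref{as weakhet ztilde}, whence $\tfrac{1}{n}D\overset{d}{\to}\Omega_{\tilde{Z}_2,u}-2\Omega_{\tilde{Z}_2,uV\tilde{\beta}}+\Omega_{\tilde{Z}_2,\tilde{\beta}V\tilde{\beta}}=\Omega_{\tilde{Z}_2\hat{u}}$. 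Combining the numerator and denominator limits through the continuous mapping theorem then yields (\ref{eq weakhet scorethm}), with the matched triple $(\tilde{\beta}^*,\tilde{\Pi}^*,\tilde{\alpha}^*)$ determining whether the result is the weak-instrument limit of $J$ (2SLS) or of $KP$ (LIML).
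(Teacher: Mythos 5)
Your proof is correct and follows essentially the same route as the paper's: the score vector $Z_2'M_{\hat{X}}\hat{u}/\sqrt{n}$ converges to $\Psi^*_{2,u}-(Q_2'C+\Psi^*_{2,V})\tilde{\beta}^*$, the weighting matrix $\tilde{Z}_2'H_{\hat{u}}\tilde{Z}_2/n$ converges to $\Omega_{\tilde{Z}_2\hat{u}}$ by Assumption \ref{as weakhet ztilde}, and the quadratic form is assembled by the continuous mapping theorem. The only substantive difference is your asymptotic verification of $\tilde{\Pi}_L^{*\prime}r^*=0$ on the belief that the finite-sample orthogonality fails for LIML; in fact $\hat{X}_L'\hat{u}_L=\hat{\Pi}_L'Z'\hat{u}_L=(1-\hat{\alpha}_L)^{-1}\left(X'P_Z\hat{u}_L-\hat{\alpha}_LX'\hat{u}_L\right)=0$ holds exactly by the Sherman--Morrison formula applied to $(Z'M_{\hat{u}_L}Z)^{-1}$ together with the LIML first-order condition, which is why the paper simply writes $Z_2'M_{\hat{X}}\hat{u}=Z_2'\hat{u}$ and never has to confront the cross-projection term at all.
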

All components of the expression except $Q_2'C$ are random under weak-instrument asymptotics and (\ref{eq weakhet scorethm}) is a highly non-standard distribution.  In light of our previous discussion, the above limiting distribution nests both the behaviour of the $J$- and $KP$-tests. Therefore, we have the following corollary: 

\begin{corol}\label{corol}
Let Assumptions \ref{as weakhom xziid}, \ref{as weakhet qzz}, \ref{as weakhet errors}, \ref{as weakhet pi} and \ref{as weakhet ztilde} hold. Then,
\begin{align}\label{eq weakhet hansenlimit}
        J \overset{d}{\to} [\Psi^*_{2,u}& - (Q_2'C + \Psi^*_{2,V})\tilde{\beta}_{2SLS}^{*}]' \Omega_{\tilde{Z}_2\hat{u}_{2SLS}}^{-1} [\Psi^*_{2,u} - (Q_2'C + \Psi^*_{2,V})\tilde{\beta}_{2SLS}^{*}], \\
        KP \overset{d}{\to} &[\Psi^*_{2,u} - (Q_2'C + \Psi^*_{2,V})\tilde{\beta}_L^{*}]' \Omega_{\tilde{Z}_2\hat{u}_L}^{-1} [\Psi^*_{2,u} - (Q_2'C + \Psi^*_{2,V})\tilde{\beta}_L^{*}]. \label{eq weakhet kplimit}
\end{align}
\end{corol}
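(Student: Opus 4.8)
The plan is to read the corollary off Theorem \ref{thm robscore} by specialising its generic robust score statistic $S_r(\hat\beta)$ to the two concrete estimators. The theorem already supplies the weak-instrument limit of $S_r(\hat\beta)$ for $\hat\beta$ equal to either 2SLS or LIML, so the only extra input needed is the identification of $J$ and $KP$ with particular instances of $S_r$. These identifications were established earlier: in Section \ref{section model tests} the one-step reduction $Z_2'M_{\hat{X}}\hat{u}_j = Z_2'M_{\hat{X}}y$ gives $J=J(\hat\beta_{2SLS},\hat\beta_2)=S_r(\hat\beta_{2SLS})$, and by \textcite{windmeijer2018} (recalled in Section \ref{subsection ranktests}) the $KP$ rank statistic equals the LIML-based robust score statistic, $KP=S_r(\hat\beta_L)$. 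It therefore suffices to evaluate (\ref{eq weakhet scorethm}) at each estimator in turn.

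For $J$ I would put $\hat\beta=\hat\beta_{2SLS}$ in (\ref{eq weakhet scorethm}). The generic limit $\tilde\beta^*$ appearing there becomes $\tilde\beta^*_{2SLS}$ of (\ref{eq weakhet b2sls}) by Lemma \ref{lem weakhet betalimits}; the projector $M_{\hat{X}}$ defining $\tilde Z_2=M_{\hat{X}}Z_2$ is formed from $\hat{X}=Z\hat\Pi_{2SLS}$, whose scaled limit is Lemma \ref{lem weakhet pilimit}(\textit{a}); and the denominator matrix $\Omega_{\tilde Z_2\hat u}$ of Assumption \ref{as weakhet ztilde} is the one built from the 2SLS residuals, written $\Omega_{\tilde Z_2\hat u_{2SLS}}$. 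Substituting these three objects reproduces (\ref{eq weakhet hansenlimit}) exactly. For $KP$ I would instead put $\hat\beta=\hat\beta_L$, making the same three replacements with the LIML quantities: $\tilde\beta^*=\tilde\beta^*_L$ from (\ref{eq weakhet bliml}), $\hat{X}=Z\hat\Pi_L$ with limit Lemma \ref{lem weakhet pilimit}(\textit{b}), and $\Omega_{\tilde Z_2\hat u}=\Omega_{\tilde Z_2\hat u_L}$. This yields (\ref{eq weakhet kplimit}). In each case one should confirm that Assumption \ref{as weakhet ztilde}, stated \enquote{for some estimator $\hat{\beta}$}, is met by the estimator at hand, so that the theorem genuinely applies; this is where the distributional limits of Lemmas \ref{lem weakhet betalimits} and \ref{lem weakhet pilimit} are invoked.

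Because the corollary is a pure specialisation, there is no substantive obstacle --- the analytic content lives entirely in Theorem \ref{thm robscore}. The single point requiring care is to recognise that the two cases differ through three distinct estimator-dependent channels, not merely through $\tilde\beta^*$: the first-stage estimator $\hat\Pi$ (hence the limit of $\tilde Z_2'\tilde Z_2/n$ displayed in (\ref{eq weakhet z2mxz2a2})), the residuals $\hat u$ entering the variance $\Omega_{\tilde Z_2\hat u}$, and $\tilde\beta^*$ itself. Keeping these channels separate and evaluating each at the correct limiting object from Lemmas \ref{lem weakhet betalimits} and \ref{lem weakhet pilimit} is all that must be verified to pass from the theorem to (\ref{eq weakhet hansenlimit}) and (\ref{eq weakhet kplimit}).
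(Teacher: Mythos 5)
Your proposal is correct and matches the paper's own (implicit) argument: the corollary is stated as an immediate specialisation of Theorem \ref{thm robscore}, relying exactly on the identifications $J=S_r(\hat{\beta}_{2SLS})$ and $KP=S_r(\hat{\beta}_L)$ established in Sections \ref{section model tests} and \ref{subsection ranktests}. Your added care in tracking the three estimator-dependent channels ($\tilde{\beta}^*$, $\hat{\Pi}$, and the residuals in $\Omega_{\tilde{Z}_2\hat{u}}$) is a faithful elaboration of what the paper leaves to the reader rather than a different route.
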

Whether the $J$- or $KP$-test performs better is not obvious from inspection of the limiting distributions and will be dependent on model attributes such as the strength of endogeneity and heteroskedasticity. Assessing the performance of these tests will be the focus of the next section, where we discuss Monte Carlo results.

\section{Monte Carlo simulations}\label{section mc}

The baseline model is given by
\begin{align} \label{eq mc model}
\begin{split}
y_i &= \beta_0 + \beta_1x_i +u_i, \\
x_i &= \pi_0 + \sum_{j=1}^{k_z} \pi_j z_{j,i}+v_i.
\end{split}
\end{align}
We assume $y_i,x_i,z_{1,i},...,z_{k_z,i},u_i,v_i\in\mathbb{R}$ are scalars for $i \in\{1,...,n\}$, with $n=120$ for all simulations. For simplicity, we set $\beta_0 = \beta_1 = \pi_0 =0$. The instruments are independent standard normals and are assumed to be equally informative, such that $\pi_j = \pi = c_0/\sqrt{n}$ for all $j\in\{1,...,k_z\}$. The errors are generated according to $u_i = z_{1,i}^{\alpha} u_i^*$ and $v_i = z_{1,i}^{\alpha} v_i^*$ for $(u_i^*,v_i^*)$ i.i.d. jointly normal with unit variances and correlation coefficient $\rho$. The strength of the heteroskedasticity is determined by $\alpha\in\mathbb{R}_{+}$ and is strictly increasing in the parameter ($\alpha=0$ represents homoskedasticity). We vary $\alpha\in\{0.5,1,1.5\}$ to represent weak, medium and strong heteroskedasticities. Different degrees of endogeneity are considered with $\rho\in\{0.2,0.5,0.95\}$. By varying $\mu^2$ across the interval $[0,32]$, we can see the evolution of test statistic performance as instrument strength increases from extremely weak to strengths commonly seen in applied practice. The number of instruments (not including the constant) is set to $k_z\in\{2,4\}$, leading to 1 and 3 degrees of overidentification. These degrees of overidentification coincide with those found in the two empirical applications in Sections \ref{sec yogo} and \ref{sec pozzi}.\footnote{Simulations with $k_z=6$ and $k_z=11$ were also conducted for this design; the increases in the size distortions when increasing the degrees of overidentification (particularly for $J$ but also for $KP$ to some extent) happen in a predictable manner. The qualitative comparison of the tests is unchanged. We therefore do not report these results.}  Nominal size is calculated as the proportion of experiments for which each test exceeds the 5\% critical value for the $\chi^2(k_z-k_x)$ distribution i.e. the correct asymptotic critical value under strong identification. In Appendix \ref{app estimatorresults}, we further report the median bias and 90:10 percentile ranges for the 2SLS and LIML estimates of $\beta_1$, as well as additional size results at the 10\% and 1\% level for $J$ and $KP$ across a range of parameter configurations.

\begin{figure}[ht!] 
    \centering
    \subfloat[$k_z=2$, $\rho = 0.2$]{{\includegraphics[width=8.2cm]{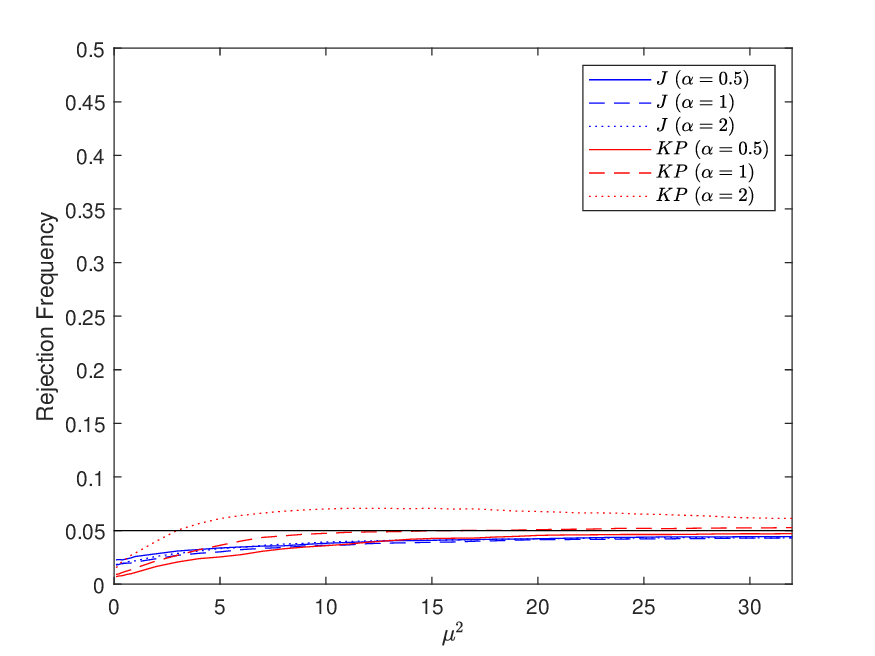} }}
    \subfloat[$k_z=4$, $\rho = 0.2$]{{\includegraphics[width=8.2cm]{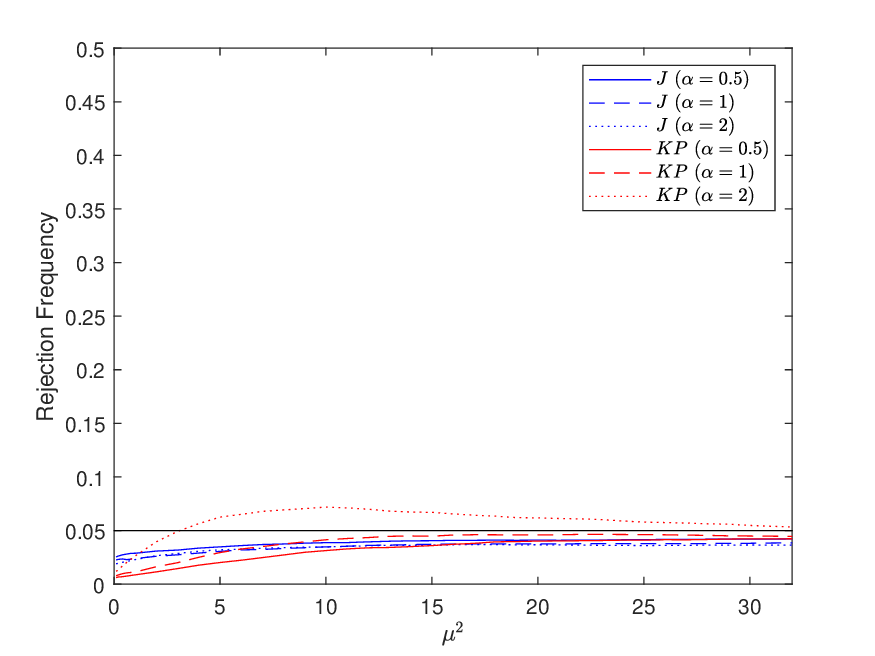} }}
    
    \centering
    \subfloat[$k_z=2$, $\rho = 0.5$]{{\includegraphics[width=8.2cm]{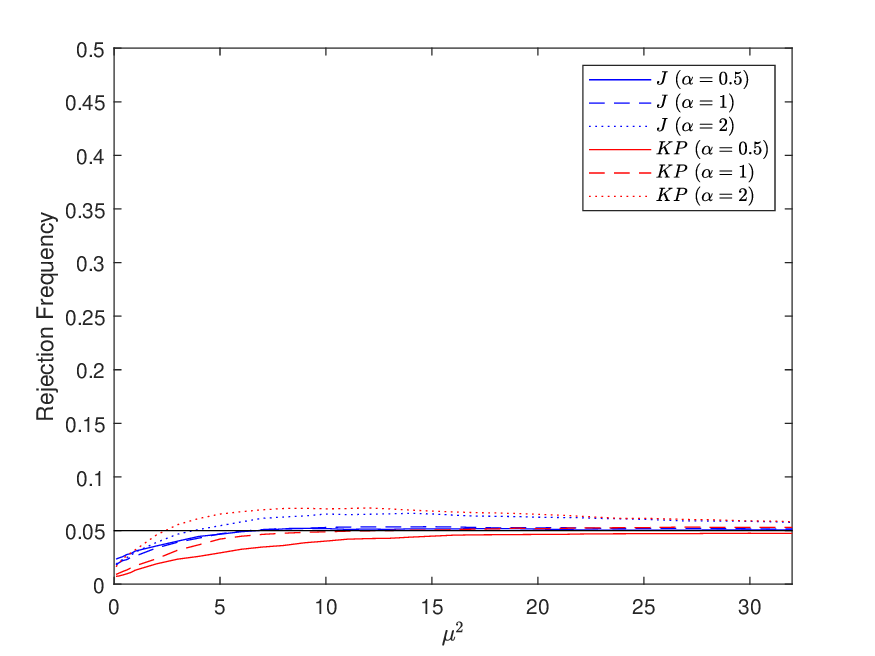} }}
    \subfloat[$k_z=4$, $\rho = 0.5$]{{\includegraphics[width=8.2cm]{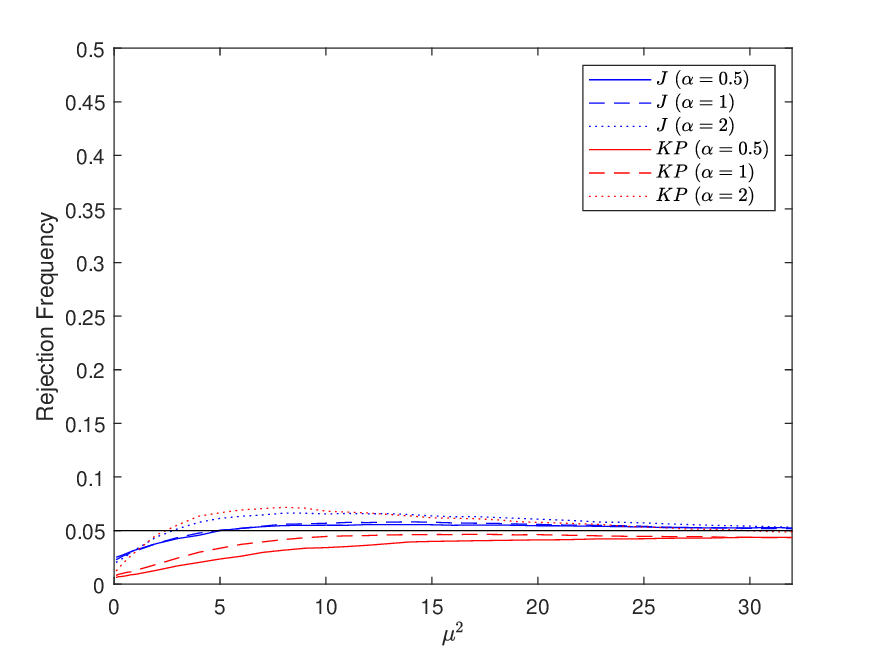} }}
    
    \centering
    \subfloat[$k_z=2$, $\rho = 0.95$]{{\includegraphics[width=8.2cm]{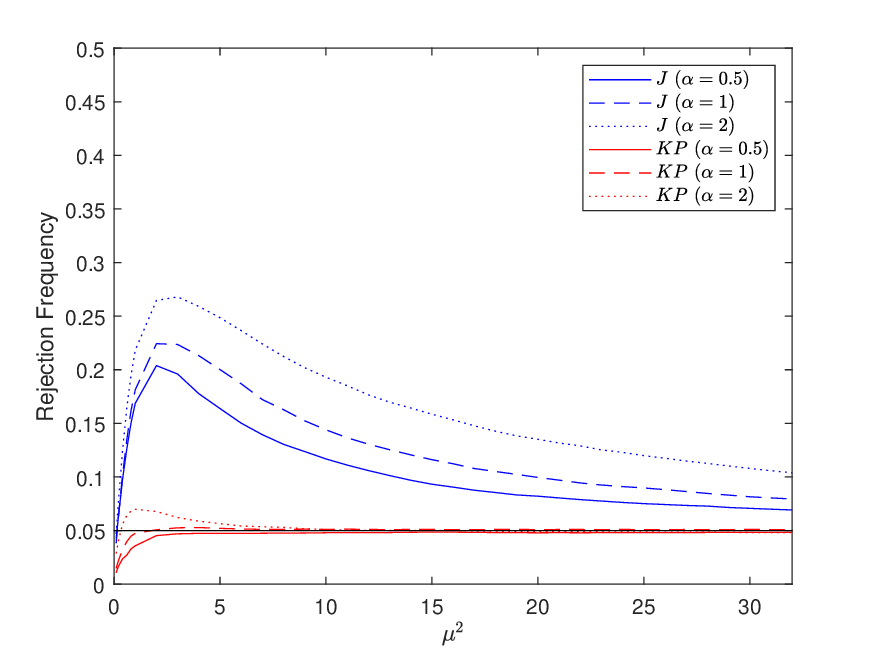} }}
    \subfloat[$k_z=4$, $\rho = 0.95$]{{\includegraphics[width=8.2cm]{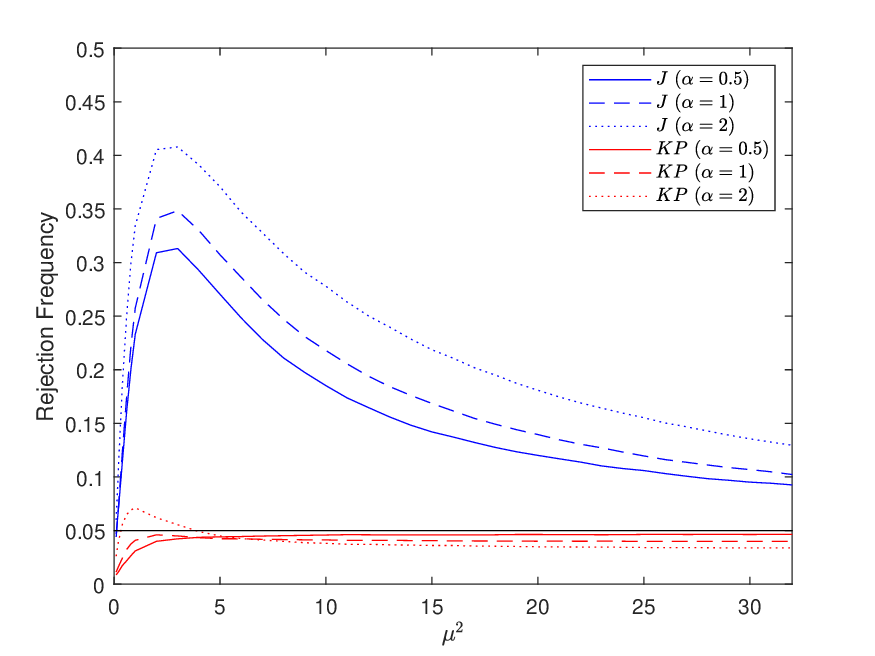} }}
    \caption[Design 1: Rejection frequency at nominal 5\% level across $\mu^2$]{Rejection frequency at nominal 5\% level.}
    \label{graph design1}
\end{figure}

Figure \ref{graph design1} presents rejection frequencies of the $J$- and $KP$-tests under the null at the nominal 5\% level. In the low endogeneity design with $k_z=2$, test performance is similar under the weak and medium strength heteroskedasticity designs. The $J$-test size is closer to the nominal level for very weak instruments, although both tests still under-reject to significant degrees. It appears as though both tests get closer to nominal size for smaller $\mu^2$ when $\rho = 0.5$ as opposed to $\rho = 0.2$. $KP$ perhaps performs marginally better when $\rho=0.2$ and $J$ performs marginally better when $\rho=0.5$, although the difference is small. The high heteroskedasticity design causes $J$ to under-reject quite severely, whereas for $KP$, the test slightly over-rejecting as $\mu^2$ increases. Performance is largely unaffected by increasing the number of instruments to $4$. In the medium endogeneity design, test performance is again fairly similar across both instrument numbers considered and across different heteroskedasticity strengths, with $J$ slightly over-rejecting and $KP$ slightly under-rejecting. All tests are undersized for $\mu^2<15$ in the three lower endogeneity designs. When $k_z=2$, the two test statistics have similar rejection frequencies for the weak and medium heteroskedasticity designs, differing only significantly in the strong heteroskedasticity design; $J$ somewhat under-rejects even at $\mu^2=32$, whereas $KP$ slightly over-rejects. 

The difference between test sizes is stark in the high endogeneity case. In the $k_z=2$ case, $KP$ performs well for $\mu^2>4$, with correct size obtained in the weak and medium heteroskedasticity designs. The test over-rejects slightly in the strongly heteroskedastic case but size distortions are relatively small. However, the $J$-test massively over-rejects, with the problem worsening with increasing strength of heteroskedasticity. In the weakest heteroskedasticity design, the size of the $J$-test peaks at $\approx0.2$ for $\mu^2=2$, before falling to $\approx 0.11$ for $\mu^2=10$. This problem gets increasingly worse as the strength of the heteroskedasticity increases; in the strong design, size peaks at $\approx 0.25$ for $\mu^2=3$, and even at $\mu^2=32$ (typically considered as relatively strong instruments), size is still at $\approx 0.09$. The problem also increases in the degrees of overidentification. For the $KP$-test, performance remains roughly consistent over the different degrees of overidentification, with the size remaining similar for weak and medium heteroskedasticity, whereas for $k_z= 4$, the over-rejection for $J$ in Panel (f) is obvious on inspection.

\begin{figure}[h!]
    \centering
    \subfloat[$k_z=2$] {{\includegraphics[width=16cm]{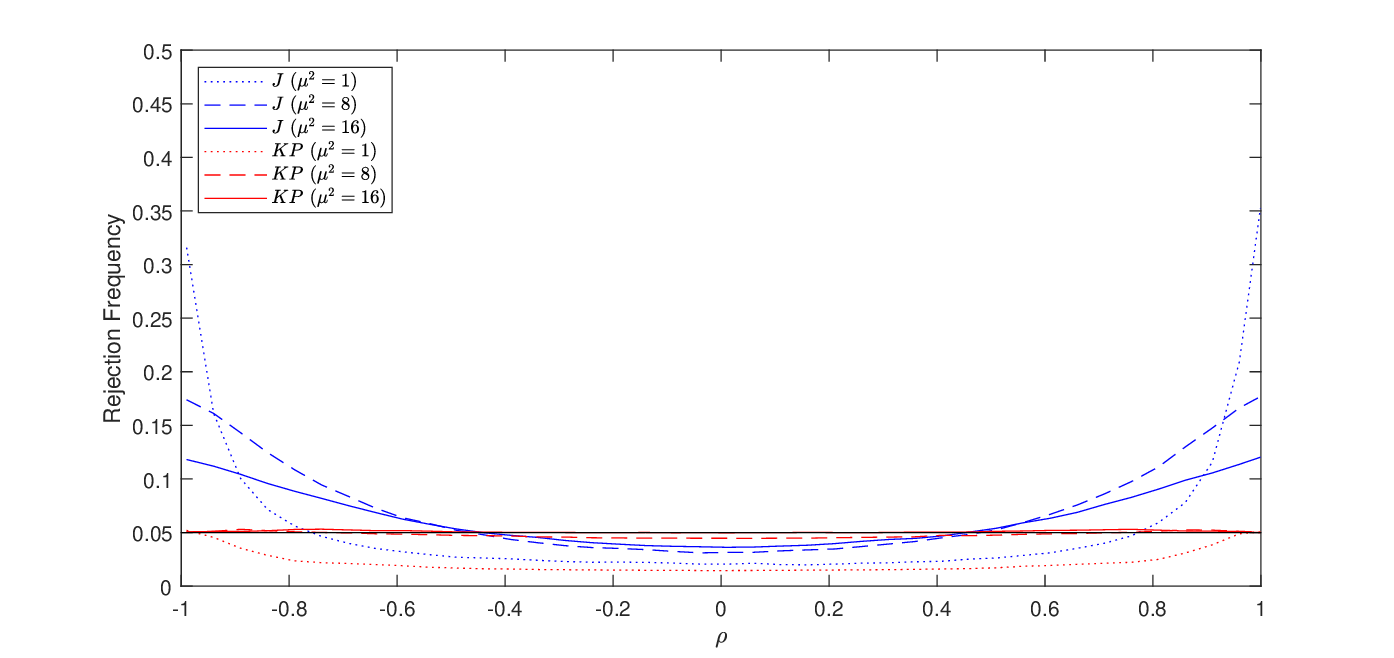}}}

    \centering
    \subfloat[$k_z=4$] {{\includegraphics[width=16cm]{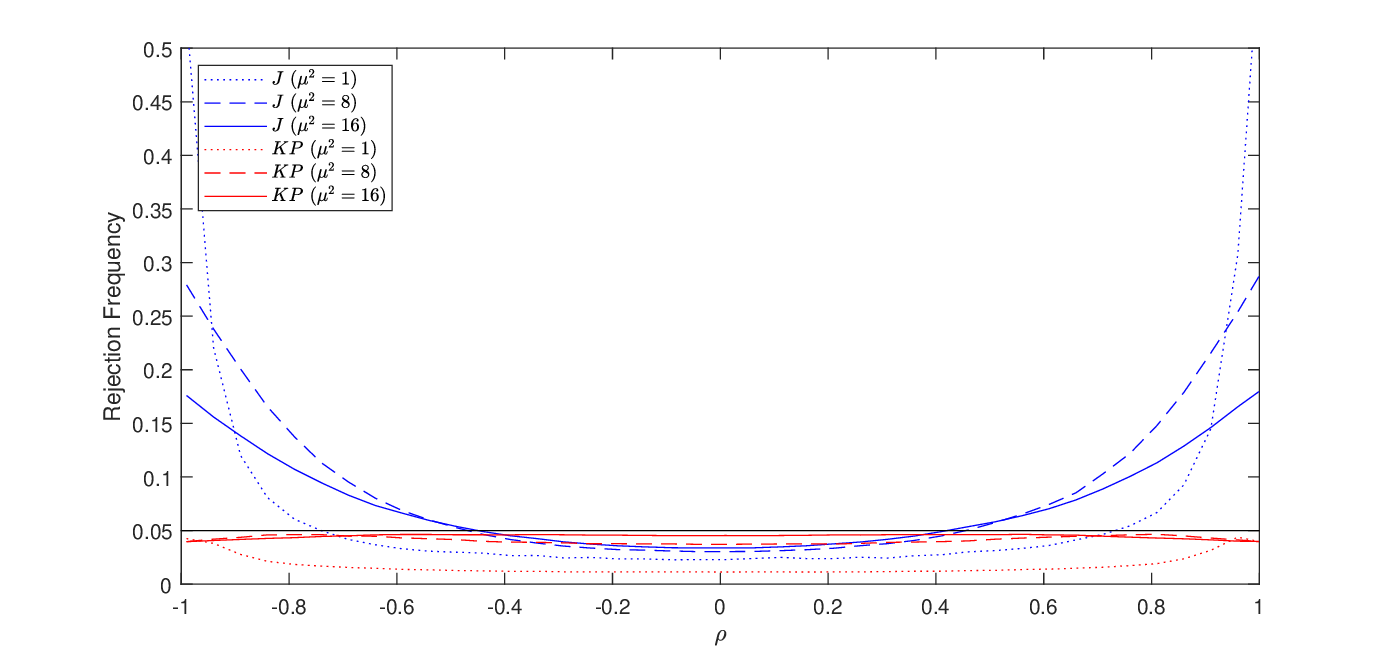}}}
    \caption[Design 1: Rejection frequency at nominal 5\% level across $\rho$]{Rejection frequency at nominal 5\% level across $\rho$.}
    \label{graph design1rho}
\end{figure}

Figure \ref{graph design1rho} shows the rejection frequencies of the test statistics for different values of $\mu^2$ for $\rho\in[-1,1]$, with $\alpha = 1$. When instruments are extremely weak for $k_z=2$, the $J$-test is closer to nominal size than $KP$ for smaller values of $|\rho|$, but for high endogeneity designs size increases sharply, tending toward 0.35, whereas $KP$ does not suffer from over-rejection. For $\mu^2=1$, performance of the two tests is similar for approximately $|\rho|\leq 0.6$, but for larger values of $|\rho|$, the $J$-test rejection frequency starts increasing sharply (at a rate increasing in $|\rho|$). The $KP$-test is clearly superior when we consider either $\mu^2 = 8$ or $\mu^2 = 16$; in the latter case, $KP$ exhibits almost perfectly correct size across all $\rho$ considered, whereas $J$ under-rejects for small $|\rho|$, and over-rejects for large $|\rho|$. A similar pattern is seen in Panel (b) for $k_z=4$, but the increase in size for the $J$-test is much more dramatic. Although the $KP$-test under-rejects more in this model than with $k_z=2$, the difference is not nearly as pronounced as the sensitivity of $J$ to the number of overidentifying restrictions. $KP$ does however have an under-rejection problem for very weak instruments. On balance, Figure \ref{graph design1rho} provides evidence in favour of $KP$ when considering the whole parameter space for $\rho$. In particular, it appears that the costs of using $J$ tend to be greater when it is chosen for a model that it is ill-suited to relative to $KP$; $J$ tends to offer only a minor improvement over $KP$ in the models where it is more favourable, whereas $KP$ is sometimes dramatically superior to $J$.

The costs of using $J$ tend to be greater when it is chosen for a model that it is ill-suited to relative to $KP$; $J$ tends to offer only a minor improvement over $KP$ in the models where it is more favourable, whereas $KP$ is sometimes dramatically superior to $J$. From e.g. Figure \ref{graph design1rho}, $KP$ has an edge over $J$ in a less weakly-identified setting when instruments are perhaps somewhat weak but not extremely weak (as is often the case in empirical work).\footnote{\textcite{hansen2008} find the median concentration parameter to be 23.6 in their study of applied microeconometric papers, and at this level of instrument strength, the $KP$-test typically outperforms the $J$-test (with this difference often large).} The structure of the heteroskedasticity and degree of endogeneity are important factors when comparing $J$ and $KP$, but these cannot be consistently estimated in a weak instrument case. This leads to the recommendation based on simulation evidence that the $KP$-test should be employed by applied researchers when testing overidentifying restrictions. 

\section{The lifecycle consumption model}\label{sec emp model}

\vspace{3mm}

In the lifecycle consumption model, an agent maximises their expected lifetime utility by smoothing consumption across an infinite time horizon. \textcite{epstein1989} and \textcite{weil1989} propose a generalised class of utility functions $U_t$, defined recursively as
\begin{equation}\label{eq emp epsteinutility}
    U_t = \left( (1-\delta)C_t^{(1-\gamma)/\varphi} + \delta\mathbb{E}_t\left[U_{t+1}^{1-\gamma}\right]^{1/\varphi}  \right)^{\varphi/(1-\gamma)},
\end{equation}
where $\mathbb{E}_t[ \, \cdot \, ]$ denotes the conditional expectation operator w.r.t. the information set at time $t$, $C_t$ is consumption at time $t$, $\delta$ is an intertemporal discount factor and $\varphi = (1 - \gamma)/(1-1/\psi)$. Here, the parameter $\psi$ represents the elasticity of intertemporal substitution and $\gamma$ is the coefficient of relative risk aversion (CRRA).\footnote{When $\vp = 1$, (\ref{eq emp epsteinutility}) collapses to the power utility function, leading to the well-known inverse relationship that the EIS is the reciprocal of the CRRA e.g. $\psi = 1/\gamma$.} Following \textcite{campbell2003} and \textcite{yogo2004}, the individual then maximises (\ref{eq emp epsteinutility}) w.r.t. the budget constraint
\begin{equation}\label{eq emp budget}
    W_{t+1} = (1 + R_{w,t+1})(W_t - C_t),
\end{equation}
where $W_{t+1}$ is total household wealth at time $t+1$ and $R_{w,t+1}$ is the gross real return on total asset investments at time $t+1$. Solving this optimisation problem yields the Euler equation
\begin{equation}\label{eq emp epsteineuler}
    \mathbb{E}_t\left[ \left( \delta \left( \frac{C_{t+1}}{C_t} \right)^{-\frac{1}{\psi}}\right)^{\varphi}\left(\frac{1}{1+ R_{w,t+1}}\right)^{1-\varphi}(1 + R_{t+1}) \right] = 1,
\end{equation}
where $R_{t+1}$ is the return on the asset in question. By imposing homoskedasticity and joint log-normality on consumption and asset returns (conditional on information available at time $t$), and log-linearising (\ref{eq emp epsteineuler}), we can derive the return on a riskless asset as
\begin{equation}\label{eq emp risklessrateepstein}
    r_{f,t+1} = -\ln \delta + \frac{1}{\psi}\mathbb{E}_t[\Delta c_{t+1}]+ \frac{\varphi-1}{2}\sigma_w^2 - \frac{\varphi}{2\psi^2}\sigma_c^2,
\end{equation}
where for the generic variable $A_{t+1}$ we denote $a_{t+1} = \ln A_{t+1}$, $\Delta a_{t+1} = a_{t+1} - a_t$ and $\sigma^2_a$ is the unconditional variance of $\{a_t\}$. Similarly, the risk-premium on risky assets is
\begin{equation}\label{eq emp riskpremiumepstein}
    \mathbb{E}_t[r_{t+1}] - r_{f,t+1} = \frac{1}{\psi}\varphi\sigma_{rc} + (1-\varphi) \sigma_{rw} - \frac{1}{2}\sigma_r^2.
\end{equation}
Now define the constant terms
$\mu_f = [-2\ln \delta + (\varphi-1)\sigma_w^2 - \varphi\sigma_c^2/\psi^2]/2$ and $\mu_r = \mu_f - \sigma_r^2/2 + \varphi\sigma_{rc}/\psi + (1-\varphi) \sigma_{rw}$.
Then, (\ref{eq emp riskpremiumepstein}) can be re-written as
\begin{equation}\label{eq prereg2}
    \mathbb{E}_t[r_{t+1}]= \mu_r + \frac{1}{\psi}\mathbb{E}_t[\Delta c_{t+1}].
\end{equation}
From (\ref{eq prereg2}), we can derive estimable regression equations. Define the error $\eta_{t+1} = r_{t+1} - \mathbb{E}_t[r_{t+1}] - \frac{1}{\psi}(\Delta c_{t+1} - \mathbb{E}_t[\Delta c_{t+1}])$. By adding $\eta_{t+1}$ to both sides, (\ref{eq prereg2}) becomes
\begin{gather}\label{eq emp reg2generaleta}
    \mathbb{E}_t[r_{t+1}] + \eta_{t+1}  = \mu_r + \frac{1}{\psi}\mathbb{E}_t[\Delta c_{t+1}] + \eta_{t+1},
\end{gather}
so substituting in the explicit formula for $\eta_{t+1}$ on the L.H.S. of (\ref{eq emp reg2generaleta}) yields
\begin{gather}
    \mathbb{E}_t[r_{t+1}] + r_{t+1} - \mathbb{E}_t[r_{t+1}] - \frac{1}{\psi}(\Delta c_{t+1} - \mathbb{E}_t[\Delta c_{t+1}]) = \mu_r + \frac{1}{\psi}\mathbb{E}_t[\Delta c_{t+1}] + \eta_{t+1} \notag \\
    \implies r_{t+1} - \frac{1}{\psi}(\Delta c_{t+1} - \mathbb{E}_t[\Delta c_{t+1}]) =  \mu_r + \frac{1}{\psi}\mathbb{E}_t[\Delta c_{t+1}] + \eta_{t+1} \notag \\
    \implies r_{t+1} = \mu_r + \frac{1}{\psi}\Delta c_{t+1} + \eta_{t+1} \label{eq emp reg1}
\end{gather}
which defines a regression equation in terms of the observed $r_{t+1}$ and $\Delta c_{t+1}$, the unobserved $\eta_{t+1}$ and the unknown constant and slope parameters $\mu_r$ and $1/\psi$ to be estimated. (\ref{eq prereg2}) can also be re-normalised in terms of $\Delta c_{t+1}$, yielding the regression equation
\begin{equation}\label{eq emp reg2}
    \Delta c_{t+1} = \mu_c + \psi r_{t+1} + u_{t+1},
\end{equation}
where $\mu_c = -\psi \mu_r$ and $u_{t+1} = -\psi\eta_{t+1} =\Delta c_{t+1} - \mathbb{E}_t[\Delta c_{t+1}] - \psi(r_{t+1} - \mathbb{E}_t[r_{t+1}])$. We consider (\ref{eq emp reg2}) to be the primary regression of interest, as the EIS is usually considered the main parameter of interest. Estimation of this parameter has received substantial attention in the literature \parencite{hansen1983,hall1988}, although a large variety of proposed plausible values have been given.

OLS is inappropriate for both regression equations, as clearly the explanatory variables will be endogenous; the errors are defined as explicit functions of the regressor. However, $u_{t+1}$ and $\eta_{t+1}$ are both errors in expectations of future variables, meaning that they are orthogonal to variables known at time $t$ \parencite{nakamura2021}. Given this, a wide range of lagged macroeconomic variables have been used as instruments in the literature. Suppose we have a $k_z$-vector $Z_{t+1}'$ of lagged macroeconomic indicators $(\, \Tilde{z}_{1,t-1} \,\,\, \Tilde{z}_{2,t-1} \,\,\, ... \,\,\, \Tilde{z}_{k_z,t-1} \, )$ as instruments (e.g. the $k^{th}$ instrument $z_{k,t+1}$ is formed by twice-lagging the $k^{th}$ macroeconomic indicator $\Tilde{z}_{k,t+1}$), stacked into the $T\times k_z$ matrix $Z$. The IV model then implies the moment restrictions $\mathbb{E}_t[Z_{t+1}\eta_{t+1}] = 0$ and $\mathbb{E}_t[Z_{t+1}u_{t+1}] = 0$. From the structure of the error terms $\eta_{t+1}$ and $u_{t+1}$, it is clear that the restrictions $\mathbb{E}_t[Z_{t+1}\eta_{t+1}] = 0$ and $\mathbb{E}_t[Z_{t+1}u_{t+1}] = 0$ are equivalent up to linear transformations \parencite{yogo2004}.

Heteroskedasticity is an important feature of the model, as it has the interpretation of representing precautionary savings. Under heteroskedasticity, \textcite{yogo2004} shows that the covariance and variance terms that enter the constants in the regressions simply need be replaced by their conditional counterparts. By appropriately redefining the error terms $u_{t+1}$ and $\eta_{t+1}$, we can arrive at the moment restrictions to identify $\psi$ and $1/\psi$. To see this, consider the risk-free log-linearised Euler equation $r_{f,t+1} = \mu_f + \frac{1}{\psi}\mathbb{E}_t[\Delta c_{t+1}]$ for simplicity, but allow for time-varying conditional variances such that
\begin{equation}
    \mu_{f,t} = -\ln \delta + \frac{\varphi-1}{2}\sigma_{w,t}^2 - \frac{\varphi}{2\psi^2}\sigma_{c,t}^2.
\end{equation}
Then, \textcite{yogo2004} shows that as long as $\mathbb{E}_t[Z_t(\mu_{f,t}-\mu_f)] = 0$, then $1/\psi$ is still identified by the moment condition $\mathbb{E}_t[Z_{t+1}\eta_{t+1}]=0$, with an equivalent result available for the moment condition $\mathbb{E}_t[Z_{t+1}u_{t+1}]=0$ identifying $\psi$.

To test the validity of the moment restrictions implied by (\ref{eq emp reg1}), we want to test
\begin{equation}
    \mathbb{H}_0:\mathbb{E}_t[Z_{t+1}\eta_{t+1}] = 0  \textup{ v.s. } \mathbb{H}_1:\mathbb{E}_t[Z_{t+1}\eta_{t+1}] \neq 0, \label{eq emp hyp2}
\end{equation}
and for (\ref{eq emp reg2}), we want to test
\begin{equation}
    \mathbb{H}_0:\mathbb{E}_t[Z_{t+1}u_{t+1}] = 0  \textup{ v.s. } \mathbb{H}_1:\mathbb{E}_t[Z_{t+1}u_{t+1}] \neq 0. \label{eq emp hyp1}
\end{equation}
which is possible as long as $k_z > 1$. Whilst standard practice for testing (\ref{eq emp hyp2}) and (\ref{eq emp hyp1}) under strong identification is well-established in the literature, we note that weak instruments are a well-known problem in the estimation of the EIS \parencite{yogo2004,neely2001}, as are questions about the validity of the instruments, which stem from rejections of the overidentifying restrictions from $J$-tests \parencite{epstein1991,gomes2011,gomes2013,dacy2011,pakos2011}. Given this and the empirical relevance of heteroskedasticity in our model, we suggest that standard practices regarding overidentification testing may not be suitable for this application.

\section{Application I: \textcite{yogo2004}} \label{sec yogo}

The dataset used for this application comes from \textcite{yogo2004}, and is commonly used as an empirical application in the weak instrument literature \parencite{montiel2013,andrews2016}. This dataset consists of quarterly data on stock markets at the aggregate level, as well as macroeconomic variables from 11 countries: Australia (AUS), Canada (CAN), France (FRA), Germany (GER), Italy (ITA), Japan (JAP), Netherlands (NTH), Sweden (SWD), Switzerland (SWT), the United Kingdom (UK) and the United States of America (USA). The stock market data come from Morgan Stanley Capital International, and the consumption and interest rate data come from the International Financial Statistics of the International Monetary Fund. For the USA, consumption of nondurables and services is measured in the dataset, but for every other country, only data on total consumption are available. Due to data constraints, the time periods vary across countries, with most time series beginning in 1970, although the data for the USA stretches back to 1947. On top of the consumption and interest rate data, the dataset contains variables for a vector of instruments, consisting of nominal interest rate, inflation, consumption growth and log dividend-price ratio, all of which are twice-lagged. See \textcite{yogo2004} for a more complete description of the dataset.

\subsection{Results}\label{subsec yogoresults}

Table \ref{table empirical} reports estimation and test statistic results from the 11 countries. Panel (a) reports results using model (\ref{eq emp reg2}) i.e. where the EIS $\psi$ is the parameter of interest, and Panel (b) reports results using model (\ref{eq emp reg1}), where the reciprocal of the EIS $1/\psi$ is the parameter of interest. Column 3 presents the $F_{eff}$-statistic from \textcite{montiel2013}, the current recommended statistic for testing weak instruments under non-homoskedastic errors. $\kappa$ is the simplified, conservative 95\% critical value - given the effects that weak instruments have on estimation and inference, we choose the strongest critical value from \textcite{montiel2013} as the benchmark to rule out weak instruments. Columns 5-6 reports estimates of $\psi$ in Panel (a) and $1/\psi$ in Panel (b), estimated via 2SLS and LIML respectively. Columns 7-8 report the $J$ and $KP$ test statistic values for hypotheses (\ref{eq emp hyp1}) and (\ref{eq emp hyp2}). For convenience, we highlight in bold $F_{eff}$-statistics that fail to reject the null of weak instruments, and values of $J$ and $KP$ that are higher than the critical value $\chi^2_{0.95}(3) = 7.815$.

\vspace{2mm}

\addtolength{\tabcolsep}{6pt} 
\begin{table}[ht!]
\small
\begin{center}
\begin{threeparttable}
\caption{Estimates using dataset from \textcite{yogo2004}.} 
\vspace{-0.5em}
\centering 
\begin{tabular*}{\textwidth}{@{\extracolsep{\fill}}  c c c c c c c c}
\hline\hline
\multicolumn{8}{c}{Panel (a): $\Delta c_{t+1} = \mu_c + \psi r_{t+1} + u_{t+1}$} \T \\\cline{2-3}
\hline  
Country & Year-Quarter &  $F_{eff}$ & $\kappa$ & $2SLS$ & $LIML$ & $J$ & $KP$ \T \\ 
\hline
\hline
AUS & 1970.Q3-1998.Q4 & 19.18 & 18.40 & 0.05 & 0.03 & \textbf{8.78} & \textbf{8.89} \\
CAN & 1970.Q3-1999.Q1 & \textbf{13.86} & 18.58 & -0.30 & -0.34 & 5.04 & 5.05 \\
FRA & 1970.Q3-1998.Q3 & 41.97 & 19.31 & -0.08 & -0.08 & 0.45 & 0.45 \\
GER & 1979.Q1-1998.Q3 & \textbf{13.37} & 18.32 & -0.42 & -0.44 & 2.59 & 2.54 \\
ITA & 1971.Q4-1998.Q1 & 21.44 & 18.92 & -0.07 & -0.07 & 1.07 & 1.06 \\
JAP & 1970.Q3-1998.Q4 & \textbf{5.43} & 21.29 & -0.04 & -0.05 & 4.73 & 4.73 \\
NTH & 1977.Q3-1998.Q4 & \textbf{12.18} & 18.53 & -0.15 & -0.14 & 3.69 & 3.69 \\
SWD & 1970.Q3-1999.Q2 & 21.19 & 18.76 & -0.00 & -0.00 & 2.59 & 2.59 \\
SWT & 1976.Q2-1998.Q4 & \textbf{7.90} & 18.03 & -0.49 & -0.50 & 2.25 & 2.27 \\
UK  & 1970.Q3-1999.Q1 & \textbf{8.44} & 20.11 & 0.17 & 0.16 & 5.05 & 5.07 \\
USA & 1947.Q3-1998.Q4 & \textbf{8.14} & 18.21 & 0.06 & 0.03 & 7.14 & 7.58 \\  
\hline
\multicolumn{8}{c}{Panel (b): $  r_{t+1} = \mu_r + (1/\psi)\Delta c_{t+1} + \eta_{t+1}$} \T \\\cline{2-3}
\hline 
Country & Year-Quarter & $F_{eff}$ & $\kappa$ & $2SLS$ & $LIML$ & $J$ & $KP$ \T \\ 
\hline
\hline
AUS & 1970.Q3-1998.Q4 & \textbf{2.47} & 19.50 & 0.50 & 30.03 & \textbf{9.49} & \textbf{8.89} \\
CAN & 1970.Q3-1999.Q1 & \textbf{2.98} & 18.07 & -1.04 & -2.98 & 6.96 & 5.05 \\
FRA & 1970.Q3-1998.Q3 & \textbf{0.22} & 19.67 & -3.12 & -12.38 & 2.07 & 0.45 \\
GER & 1979.Q1-1998.Q3 & \textbf{1.13} & 18.59 & -1.05 & -2.29 & 3.16 & 2.54 \\
ITA & 1971.Q4-1998.Q1 & \textbf{0.49} & 18.90 & -3.34 & -14.81 & 3.99 & 1.06 \\
JAP & 1970.Q3-1998.Q4 & \textbf{1.98} & 17.89 & -0.18 & -21.56 & \textbf{8.42} & 4.73 \\
NTH & 1977.Q3-1998.Q4 & \textbf{1.67} & 19.16 & -0.53 & -6.94 & \textbf{9.91} & 3.69 \\
SWD & 1970.Q3-1999.Q2 & \textbf{0.87} & 17.28 & -0.10 & -399.86 & \textbf{13.28} & 2.59 \\
SWT & 1976.Q2-1998.Q4 & \textbf{1.58} & 19.85 & -1.56 & -2.00 & 2.92 & 2.27 \\
UK  & 1970.Q3-1999.Q1 & \textbf{2.68} & 17.63 & 1.06 & 6.21 & \textbf{8.17} & 5.07 \\
USA & 1947.Q3-1998.Q4 & \textbf{2.65} & 17.61 & 0.68 & 34.11 & \textbf{9.84} & 7.58

 \\  
\hline 
\end{tabular*}
\vspace{1mm}
\caption*{Panels (a) and (b) give estimates/test statistic values for models  $\Delta c_{t+1} = \mu_c + \psi r_{t+1} + u_{t+1}$ and $r_{t+1} = \mu_r + (1/\psi) \Delta c_{t+1} + \eta_{t+1}$ respectively. Column 3 gives the $F_{eff}$-statistic from \textcite{montiel2013}. Column 4 gives conservative simplified 95\% critical value for $F_{eff}$. Columns 5-6 give the estimates of $\psi$ in (a) and $1/\psi$ in (b) by 2SLS and LIML respectively. Columns 7-8 report the $J$- and $KP$-test statistic values for the test of overidentification. Values in bold in Column 3 suggest we cannot reject the null of weak instruments, and bold values in Columns 7-8 exceed the 5\% significance critical value $\chi^2_{0.95}(3) = 7.815$, so the null hypothesis of (\ref{eq emp hyp1}) for Panel (a) or (\ref{eq emp hyp2}) for Panel (b) is rejected. Newey-West standard errors are used with $L = 6$ lags for the USA and $L=4$ lags for all other countries (the difference accounts for the longer time horizon of the USA data.)
}
\label{table empirical}
\end{threeparttable}
\end{center}
\end{table}

In Panel (a), we immediately see that both $J$ and $KP$ reject the null of exogeneity for Australia, and fail to reject the null for all other countries. There is some evidence of weak instruments, with 7 of the 11 countries having an $F_{eff}$-statistic lower than the respective conservative 95\% critical values, with many of the other specifications attaining $F_{eff}$-statistics only slightly above the critical value. Only the data from France give an $F_{eff}$-statistic substantially over the critical value. It is worth however noting that we are requiring the strongest evidence by using the conservative threshold; if we use the LIML critical values, only 4 specifications would fail to reject the null of weak instruments at the conventional level. However, 2SLS and LIML provide similar point estimates, and the test statistics give similar values in each case too, which typically suggests that identification is not particularly weak, in spite of the $F_{eff}$ values. It appears that both the $J$- and $KP$-tests still perform reasonably well on the whole; as an informal back-of-the-envelope calculation, assuming correctly-sized tests and the null holding in all 11 specifications (with independence between specifications), the probability of receiving at least one false rejection at the 5\% level is 43\%. This suggests that the single rejection for each test can be reasonably explained as a false rejection. We conclude that Panel (a) therefore fails to provide strong evidence against instrument validity and/or correct specification.

Panel (b) gives a different picture. Every country fails to provide evidence against a large weak-instrument problem (the largest $F_{eff}$-statistic is less than 3). The $J$-test rejects in 6 of the 11 specifications, despite the fact that we have good reason to believe that the instruments are exogenous by construction. Perhaps more importantly, we know that when identification is not weak, $J$ and $KP$ are typically numerically similar and have good size properties. Therefore, given the results from Panel (a) and knowing that that the moment restriction holding in (\ref{eq emp hyp1}) implies it holds in (\ref{eq emp hyp2}), then Panel (b) gives strong evidence of the over-rejection properties found under weak instruments in the Monte Carlo simulations of Section \ref{section mc}. This is also suggestive of the fact that the degree of endogeneity is likely higher when $\Delta c_{t+1}$ is treated as the endogenous regressor as opposed to $r_{t+1}$.\footnote{ Given the small sample size combined with the weak instrument problem, it is impossible to obtain estimates of the degree of endogeneity that could be considered reliable.} When we look at $KP$, we immediately see the numerical equivalence between the values in Panel (a) and Panel (b) for each country, which follows from the invariance to normalisation of LIML and $KP$.\footnote{ Also note that the LIML estimates are exact reciprocals, whereas this is not the case for 2SLS.} From this, it is clear that in both specifications, the $KP$-test will give the same conclusion for a particular country regardless of whether we have $\psi$ or $1/\psi$ as our unknown parameter, which is potentially a useful property when we consider that the null in (\ref{eq emp hyp1}) implies the null in (\ref{eq emp hyp2}) and vice versa. Under the assumption of power utility functions, this invariance could be advantageous in helping us obtain more reliable estimates of the CRRA (as under power utility it is the inverse of the EIS), despite the significant weak-instrument problem present. We find that $KP$ does not suffer from the same over-rejection problem in Panel (b), and can be considered more reliable for inference than the $J$ counterpart. Given the results of Panel (a), Panel (b) suggests that $J$ is over-rejecting rather than $KP$ falsely failing to reject.

Our results provide a different explanation to the small-scale replication of \textcite{gomes2011}, who replicate \textcite{yogo2004} but report homoskedastic Sargan tests. They reject the null hypothesis at the 5\% level for 4 countries and 10\% level for a further two. Given this, they raise doubts about the validity of the instruments and therefore the model specification. However, they do not consider the effects of non-homoskedastic errors or discuss the effects of weak instruments on the size of 2SLS-based overidentification tests. Our results suggest that when appropriately accounting for these two factors, we fail to find evidence against the overidentifying restrictions. The recommendation from this empirical exercise is to $KP$ for overidentification testing, and that instrument validity and/or misspecification of the moment conditions are unlikely to be the cause of the variation of proposed estimates for $\psi$ seen in the literature.

\section{Conclusion}

In this paper we have studied the $KP$-test as a test for overidentifying restrictions and provided evidence for its usefulness relative to the standard $J$-test commonly employed in econometric packages. We have derived the limiting distribution of the robust score test, nesting the $J$- and $KP$-tests, under heteroskedastic weak instruments, and find in general that the $KP$-test is favourable, especially when the degree of overidentification is small. This conclusion follows for multiple reasons: firstly, the size distortions for $KP$ tend to be less severe than for $J$ in the models where each performs poorly relative to other e.g. the extreme over-rejection of $J$ in high endogeneity models. Although $J$ seems to perform slightly better under extremely weak identification, both tests have highly nonstandard distributions and neither can be considered reliable here. Under moderately weak identification at levels commonly seen in the applied literature, $KP$ outperforms $J$ across a range of models. Secondly, $KP$ seems to be much less dependent on the strength of the heteroskedasticity; especially when $\rho$ is high, $J$ varies considerably as the strength of the heteroskedasticity changes (implying that this is important information for a researcher to consider about their model), and since the heteroskedastic error structure cannot be consistently estimated under weak instruments, this is problematic. The $KP$-test can therefore be seen as more ``robust" to variation in endogeneity or heteroskedasticity strength under weak identification. We therefore recommend its use for applied researchers. This recommendation also provides a generalisation of the guidance from \textcite{staiger1997}, who recommend using LIML-based overidentification tests under homoskedasticity.

Further, we have empirically assessed the $J$- and $KP$-tests and re-examined the validity of the over-identifying restrictions in the estimation of the EIS in the lifecycle consumption model, where weak instruments and heteroskedasticity are common issues. As expected, we provide evidence that the $J$- and $KP$-tests are extremely similar in performance when instruments are strong, with both tests failing to find evidence to reject the overidentifying restrictions. However, when the instruments are possibly weak, then $J$ is empirically much more likely to reject these restrictions than $KP$. With this in mind, we suggest that a large number of papers are possibly rejecting the overidentifying restrictions erroneously due to the poor properties of the $J$-test under weak instruments, rather than finding evidence that the instruments are invalid. Given the previous theoretical results and simulations, as well as the natural exogeneity expected of instruments formed from lagged macroeconomic variables in this model, we argue that this is evidence of the $J$-test having a substantially inflated size under weak instruments, a problem we find not present in the $KP$-test (with this conclusion supported further by the additional application in the Appendix). We therefore suggest two key implications: from an econometric perspective, we provide empirical evidence of the suitability of the $KP$-test as an overidentification test over $J$-test in situations with weak identification and heteroskedastic errors. From a macroeconomic perspective, we suggest that previous evidence found in the literature of possible instrument invalidity stems from issues with the $J$-test, rather than due to the instruments and specification of the model commonly used by empirical macroeconomists. Therefore, instrument validity does not seem to be a plausible factor in the large range of estimates of the EIS presented in the literature.   

\printbibliography

@article{anderson1949,
  author    = {T. W. Anderson and H. Rubin},
  title     = {Estimation of the parameters of a single equation in a complete system of stochastic equations},
  journal   = {The Annals of Mathematical Statistics},
  volume    = {20},
  number    = {1},
  pages     = {46--63},
  year      = {1949}
}

@book{greene2003,
  author    = {Greene, W. H.},
  title     = {Econometric analysis},
  year      = {2003},
  publisher = {Pearson Education India}
}

@article{hansen2008,
  author    = {Hansen, C. and Hausman, J. and Newey, W.},
  title     = {Estimation with many instrumental variables},
  journal   = {Journal of Business \& Economic Statistics},
  volume    = {26},
  number    = {4},
  pages     = {398--422},
  year      = {2008}
}

@article{hansen1982,
  author    = {Hansen, L. P.},
  title     = {Large sample properties of generalized method of moments estimators},
  journal   = {Econometrica},
  pages     = {1029--1054},
  year      = {1982}
}

@article{kleibergen2006,
  author    = {Kleibergen, F. and Paap, R.},
  title     = {Generalized reduced rank tests using the singular value decomposition},
  journal   = {Journal of Econometrics},
  volume    = {133},
  number    = {1},
  pages     = {97--126},
  year      = {2006}
}

@article{lewis2022,
  title={A robust test for weak instruments with multiple endogenous regressors},
  author={Lewis, Daniel J and Mertens, Karel},
  journal={FRB of New York Staff Report},
  number={1020},
  year={2022}
}

@article{montiel2013,
  author    = {Montiel-Olea, J. L. and Pflueger, C.},
  title     = {A robust test for weak instruments},
  journal   = {Journal of Business \& Economic Statistics},
  volume    = {31},
  number    = {3},
  pages     = {358--369},
  year      = {2013}
}

@article{sargan1958,
  author    = {Sargan, J. D.},
  title     = {The estimation of economic relationships using instrumental variables},
  journal   = {Econometrica},
  pages     = {393--415},
  year      = {1958}
}

@article{staiger1997,
  author    = {Staiger, D. and Stock, J.},
  title     = {Instrumental variables regression with weak instruments},
  journal   = {Econometrica},
  volume    = {65},
  number    = {3},
  pages     = {557--586},
  year      = {1997}
}

@misc{windmeijer2018,
  author    = {Windmeijer, F.},
  title     = {Testing over-and underidentification in linear models, with applications to dynamic panel data and asset-pricing models},
  year      = {2018},
  howpublished = {Bristol Economics Discussion Papers 18/696, School of Economics, University of Bristol, UK}
}

@article{andrews2016,
  author    = {Andrews, I.},
  title     = {Conditional linear combination tests for weakly identified models},
  journal   = {Econometrica},
  volume    = {84},
  number    = {6},
  pages     = {2155--2182},
  year      = {2016}
}

@incollection{campbell2003,
  author    = {Campbell, J. Y.},
  title     = {Consumption-based asset pricing},
  booktitle = {Handbook of the Economics of Finance},
  volume    = {1},
  pages     = {803--887},
  year      = {2003},
  publisher = {Elsevier}
}

@article{dacy2011,
  author    = {Dacy, D. and Hasanov, F.},
  title     = {A finance approach to estimating consumption parameters},
  journal   = {Economic Inquiry},
  volume    = {49},
  number    = {1},
  pages     = {122--154},
  year      = {2011}
}

@article{epstein1989,
  author    = {Epstein, L. G. and Zin, S. E.},
  title     = {Substitution, risk aversion, and the temporal behavior of consumption and asset returns: a theoretical framework},
  journal   = {Econometrica},
  volume    = {57},
  pages     = {937--968},
  year      = {1989}
}

@article{epstein1991,
  author    = {Epstein, L. G. and Zin, S. E.},
  title     = {Substitution, risk aversion, and the temporal behavior of consumption and asset returns: an empirical analysis},
  journal   = {Journal of Political Economy},
  volume    = {99},
  number    = {2},
  pages     = {263--286},
  year      = {1991}
}

@misc{FederalReserveBoard2019,
  author       = {{Federal Reserve Board}},
  title        = {Survey of Consumer Finances},
  year         = 2019,
}

@article{gomes2011,
  author    = {Gomes, F. A. R. and Paz, L. S.},
  title     = {Narrow replication of Yogo (2004) estimating the elasticity of intertemporal substitution when instruments are weak},
  journal   = {Journal of Applied Econometrics},
  volume    = {26},
  number    = {7},
  pages     = {1215--1216},
  year      = {2011}
}

@article{gomes2013,
  author    = {Gomes, F. A. R. and Paz, L. S.},
  title     = {Estimating the elasticity of intertemporal substitution: is the aggregate financial return free from the weak instrument problem?},
  journal   = {Journal of Macroeconomics},
  volume    = {36},
  pages     = {63--75},
  year      = {2013}
}

@article{hall1988,
  author    = {Hall, Robert E.},
  title     = {Intertemporal substitution in consumption},
  journal   = {Journal of Political Economy},
  volume    = {96},
  number    = {2},
  pages     = {339--357},
  year      = {1988}
}

@article{hansen1983,
  author    = {Hansen, Lars Peter and Singleton, Kenneth J.},
  title     = {Stochastic consumption, risk aversion, and the temporal behavior of asset returns},
  journal   = {Journal of Political Economy},
  volume    = {91},
  number    = {2},
  pages     = {249--265},
  year      = {1983}
}

@article{havranek2015,
  author    = {Havranek, Tomas and Horvath, Roman and Irsova, Zuzana and Rusnak, Marek},
  title     = {Cross-country heterogeneity in intertemporal substitution},
  journal   = {Journal of International Economics},
  volume    = {96},
  number    = {1},
  pages     = {100--118},
  year      = {2015}
}

@article{jorda2019,
  author    = {Jord\`a, Oscar and Knoll, Katharina and Kuvshinov, Dmitry and Schularick, Moritz and Taylor, Alan M.},
  title     = {The rate of return on everything, 1870-2015},
  journal   = {Quarterly Journal of Economics},
  volume    = {134},
  number    = {3},
  pages     = {1225--1298},
  year      = {2019}
}

@article{mankiw1991,
  author    = {Mankiw, N. Gregory and Zeldes, Stephen P.},
  title     = {The consumption of stockholders and nonstockholders},
  journal   = {Journal of Financial Economics},
  volume    = {29},
  number    = {1},
  pages     = {97--112},
  year      = {1991}
}

@misc{nakamura2021,
  author    = {Nakamura, Emi and Steinsson, Jon},
  title     = {Estimation of the intertemporal elasticity of substitution},
  howpublished = {PowerPoint slides},
  institution = {Department of Economics, UC Berkeley},
  year      = {2021},
  note       = {Available at {https://eml.berkeley.edu/~enakamura/teaching/EstimatingIES.pdf}}
}

@article{neely2001,
  author    = {Neely, Christopher J. and Roy, Amlan and Whiteman, Charles H.},
  title     = {Risk aversion versus intertemporal substitution: a case study of identification failure in the intertemporal consumption capital asset pricing model},
  journal   = {Journal of Business \& Economic Statistics},
  volume    = {19},
  number    = {4},
  pages     = {395--403},
  year      = {2001}
}

@article{pakos2011,
  author    = {Pakos, Michal},
  title     = {Estimating intertemporal and intratemporal substitutions when both income and substitution effects are present: the role of durable goods},
  journal   = {Journal of Business \& Economic Statistics},
  volume    = {29},
  number    = {3},
  pages     = {439--454},
  year      = {2011}
}

@misc{pozzi2022,
  author    = {Pozzi, Luca},
  title     = {Housing returns and intertemporal substitution in consumption: estimates for industrial economies},
  howpublished = {Working Paper},
  note      = {Available at SSRN 4179082},
  year      = {2022}
}

@article{weil1989,
  author    = {Weil, Philippe},
  title     = {The equity premium puzzle and the risk-free rate puzzle},
  journal   = {Journal of Monetary Economics},
  volume    = {24},
  pages     = {401--421},
  year      = {1989}
}

@article{vissingjorgensen2002,
  author    = {Vissing-J{\o}rgensen, Annette},
  title     = {Limited asset market participation and the elasticity of intertemporal substitution},
  journal   = {Journal of Political Economy},
  volume    = {110},
  number    = {4},
  pages     = {825--853},
  year      = {2002}
}

@article{yogo2004,
  author    = {Yogo, Motohiro},
  title     = {Estimating the elasticity of intertemporal substitution when instruments are weak},
  journal   = {Review of Economics and Statistics},
  volume    = {86},
  number    = {3},
  pages     = {797--810},
  year      = {2004}
}

\newpage

\appendix

\section*{\LARGE{Appendix}}

\section{Empirical application II: \textcite{pozzi2022}}\label{sec pozzi}

\textcite{pozzi2022} assesses the EIS using real returns on housing. The type of asset studied in the literature has received relatively little attention, and there has been little evidence of using house market returns in empirical studies of the EIS, despite the importance of housing as an asset e.g. the 2019 Survey of Consumer Finances \parencite{FederalReserveBoard2019} reports that 65\% of households in the USA own a home as a primary residence. On the other hand, the same survey reports that only 24\% of households own stocks or bonds. \textcite{havranek2015} suggest that estimates of $\psi$ and $1/\psi$ based on stock market data may be less reliable as households who own stocks may be more willing to substitute their consumption intertemporally e.g. multiple studies such as \textcite{mankiw1991} and \textcite{vissingjorgensen2002} provide evidence that estimates of $\psi$ are larger for households who own stocks as opposed to not. \textcite{pozzi2022} therefore argues that real housing returns are likely much more important in determining consumption than stock returns.

We focus on simple per-country specifications analogous to (\ref{eq emp reg2}) of the form
\begin{equation}\label{eq emp simplepozzi}
    \Delta c_{t+1} = \mu_c + \psi r_{t+1} + u_{t+1},
\end{equation}
where $r_{t+1}$ now denotes real returns on housing.\footnote{\textcite{pozzi2022} presents a more sophisticated theoretical model than the basic model considered in Section \ref{sec emp model}. The theoretical framework presented considers a heterogeneous agent model with consumers who face time-varying preference shifters and incomplete financial markets. \textcite{pozzi2022} derives the Euler equations for the agents in the model, and provides an estimable regression equation based on the aggregation of these Euler equations. As our focus is on how the considered overidentification tests behave under weak instruments, we do not describe the theoretical model here, but rather refer readers to the paper for a full description of the model and also how it leads to the estimable regression equation (\ref{eq emp simplepozzi}).}\footnote{\textcite{pozzi2022} does not actually report any per-country simple specification results in the paper, but rather presents a group-mean panel approach. As our concern is on testing for overidentifying restrictions with weak instruments, we focus on the simple per-country specifications, analogous to the \textcite{yogo2004} application.} \textcite{pozzi2022} uses two instruments in his specifications. The first instrument is the first lag of the real housing return for each country, denoted $r_t$. \textcite{pozzi2022} notes that there are potential issues with the validity of this instrument, as any persistent omitted variables in the error term may potentially be correlated with the instrument and thus violate the exogeneity assumption, in line with \textcite{hall1988}. To allow for the testing of overidentifying restrictions to test for validity, he proposes as a second instrument the mean of international real housing returns excluding the country of interest, denoted $\bar{r}_{t+1}^{for}$. 
We run per-country regressions of the form (\ref{eq emp simplepozzi}) using the instrument vector of \textcite{pozzi2022} and test the overidentifying restrictions using $J$ and $KP$, but also run a second set of regressions using $r_{t-1}$ instead of $r_{t}$. The benefit of this is that $r_{t-1}$ is plausibly more likely to be exogenous than $r_{t}$, but is also likely to be a weaker instrument. Therefore, in this application, we run two specifications for each country: a first where the instruments are stronger but less likely valid, and a second where they are weaker but more likely valid.

\subsection{Data}

The dataset used by \textcite{pozzi2022} consists of annual housing data from 15 countries during the period 1950-2015. This dataset is itself taken from the extensive dataset constructed using the Jord\`a-Schularick-Taylor macro-history database \parencite{jorda2019}. This dataset consists of annual data on real housing returns and per capita log of real consumption as well as macroeconomic variables.\footnote{ Details of the construction of this dataset can be found in the supplementary material to \textcite{jorda2019}.} The 15 countries are: Australia (AUS), Denmark (DEN), Finland (FIN), France (FRA), Germany (GER), Italy (ITA), Japan (JAP), Netherlands (NTH), Norway (NOR), Portugal (POR), Spain (SPA), Sweden (SWD), Switzerland (SWT), the United Kingdom (UK) and the United States of America (USA). 

\subsection{Results}

\begin{table}[ht!]
\small
\begin{center}
\addtolength{\tabcolsep}{7.2pt}
\begin{threeparttable}
\caption{Estimates using dataset from \textcite{pozzi2022}} 
\vspace{-0.5em}
\centering 
\begin{tabular*}{\textwidth}{@{\extracolsep{\fill}}  c c c c c c c c}
\hline\hline
\multicolumn{7}{c}{Panel (a): Instrument set $Z_{t+1}' = (\,  r_{t},\, \bar{r}_{t+1}^{for} \,)$} \T \\\cline{2-3}
\hline
Country & $F_{eff}$ & $\kappa$ & $2SLS$ & $LIML$ & $J$ & $KP$ \\ 
\hline 
AUS & \textbf{3.48} & 20.05 & 0.20 & 0.22 & 0.56 & 0.49 \\
DEN & \textbf{9.20} & 20.97 & 0.22 & 0.22 & 0.31 & 0.31 \\
FIN & \textbf{2.91} & 21.47 & 0.40 & 0.40 & 0.03 & 0.03 \\
FRA & 119.62 & 20.34 & 0.14 & 0.14 & \textbf{4.35} & \textbf{4.36} \\
GER & 35.03 & 20.87 & 0.10 & 0.10 & 1.03 & 1.01 \\
ITA & \textbf{0.00} & 22.20 & -10.05 & -38.75 & 0.02 & 0.00 \\
JAP & 24.95 & 20.61 & 0.30 & 0.30 & 1.02 & 1.03 \\
NTH & 33.99 & 20.96 & 0.23 & 0.23 & 1.37 & 1.30 \\
NOR & \textbf{0.23} & 22.15 & 0.03 & 0.00 & 0.36 & 0.28 \\
POR & \textbf{6.85} & 21.97 & 0.43 & 0.44 & 0.33 & 0.34 \\
SPA & \textbf{2.96} & 22.34 & 0.31 & 0.31 & 0.06 & 0.06 \\
SWD & \textbf{5.96} & 19.64 & 0.25 & 0.29 & \textbf{4.25} & 0.37 \\
SWT & \textbf{8.62} & 20.74 & 0.04 & -1.25 & \textbf{6.32} & \textbf{3.85} \\
UK & 22.52 & 21.31 & 0.17 & 0.16 & \textbf{4.56} & \textbf{4.74} \\
USA & \textbf{16.42} & 20.34 & 0.23 & 0.24 & \textbf{7.15} & \textbf{7.15} \\
\hline 
\hline
\multicolumn{7}{c}{Panel (b): Instrument set $Z_{t+1}' = (\,  r_{t-1},\, \bar{r}_{t+1}^{for} \,)$} \T \\\cline{2-3}
\hline 
Country & $F_{eff}$ & $\kappa$ & $2SLS$ & $LIML$ & $J$ & $KP$ \\ 
\hline 
AUS & \textbf{4.48} & 21.28 & 0.35 & 0.37 & 1.10 & 1.13 \\
DEN & \textbf{2.63} & 21.95 & 0.17 & 0.15 & 2.43 & 2.31 \\
FIN & \textbf{4.27} & 21.60 & 0.40 & 0.40 & 0.02 & 0.02 \\
FRA & 48.50 & 20.31 & 0.16 & 0.17 & 2.79 & 2.61 \\
GER & \textbf{15.42} & 19.59 & 0.12 & 0.11 & 1.05 & 1.03 \\
ITA & \textbf{1.94} & 21.49 & 0.12 & 9.35 & \textbf{5.77} & 3.32 \\
JAP & \textbf{6.82} & 21.45 & 0.46 & 0.46 & 0.00 & 0.00 \\
NTH & \textbf{3.61} & 20.97 & 0.33 & 0.34 & 0.35 & 0.36 \\
NOR & \textbf{0.49} & 20.96 & 0.05 & -1.19 & 1.44 & 0.65 \\
POR & \textbf{3.84} & 22.04 & 0.48 & 0.48 & 0.19 & 0.19 \\
SPA & \textbf{14.20} & 20.13 & 0.32 & 0.33 & 0.18 & 0.17 \\
SWD & \textbf{3.32} & 20.70 & 0.35 & 0.35 & 0.02 & 0.02 \\
SWT & \textbf{0.83} & 21.32 & 0.26 & 2.92 & \textbf{4.57} & 2.09 \\
UK & \textbf{10.39} & 21.25 & 0.23 & 0.23 & 2.67 & 2.66 \\
USA & \textbf{2.23} & 20.95 & 0.59 & 0.74 & 1.90 & 1.38 \\
\hline 
\hline 
\end{tabular*}
\vspace{1mm}
\caption*{Data for each country spans 1950-2015 ($T=62$). Panels (a) and (b) give estimates/test statistic values for models using the instrument sets $Z_{t+1}' = (\,  r_{t},\, \bar{r}_{t+1}^{for} \,)$ and $Z_{t+1}' = (\,  r_{t-1},\, \bar{r}_{t+1}^{for} \,)$ respectively. Column 3 gives the $F_{eff}$-statistic from \textcite{montiel2013}. Column 4 gives conservative simplified 95\% critical value for $F_{eff}$.  Columns 5-6 give the estimates of $\psi$ by 2SLS and LIML respectively. Columns 7-8 report the $J$- and $KP$-test statistic values for the test of overidentification. Values in bold in Column 3 suggest we cannot reject the null of weak instruments, and bold values in Columns 7-8 exceed the 5\% significance critical value $\chi^2_{0.95}(1) = 3.841$. Newey-West standard errors are calculated with $L = 4$ lags.
}
\label{table empiricalpozzi}
\end{threeparttable}
\end{center}
\end{table}

This subsection reports 2SLS and LIML estimates of $\psi$, and the $J$- and $KP$-statistics for testing overidentifying restrictions for the 15 countries considered in the \textcite{pozzi2022} dataset. Panel (a) employs the first instrument set $Z_{t+1}' = (\,  r_{t},\, \bar{r}_{t+1}^{for} \,)$ and Panel (b) employs $Z_{t+1}' = (\,  r_{t-1},\, \bar{r}_{t+1}^{for} \,)$. The individual columns of Table \ref{table empiricalpozzi} have the same descriptions as the columns presented in Table \ref{table empirical}.

In Panel (a), two-thirds of the regressions considered might suffer from a weak instrument problem, with the $F_{eff}$-statistic failing to reach the conservative 5\% critical value $\kappa$. However, weak instruments may not be too much of a problem here, as except for Italy and Switzerland, both 2SLS and LIML provide similar estimates for $\psi$, which are typically in the range of 0.1-0.5. The values of 2SLS and LIML, and $J$ and $KP$ are often quite different when weak identification is a significant problem. Regarding the overidentifying restrictions, both tests are again similar. The two agree on whether to reject or fail to reject the null for all but one country (Sweden), where the $J$-test rejects the overidentifying restriction, but the $KP$-test does not. Since this specification uses only first-lagged instruments rather than twice-lagged instruments, it seems plausible that both tests are picking up legitimate rejections of the overidentifying restriction here.

Panel (b) presents results using the second set of instruments $Z_{t+1}' = (\,  r_{t-1},\, \bar{r}_{t+1}^{for} \,)$, and gives a somewhat similar picture to Panel (b) of Table \ref{table empirical}. Given the $F_{eff}$-statistics and their respective critical values, the null of weak instruments cannot be rejected at conventional significance levels in 14 of the 15 regressions. However, despite this weakness, 2SLS and LIML both tend to give similar point estimates, with large variation in only 3 of the specifications. For testing the overidentifying restriction, we see that $J$ rejects the null on three occasions, whereas $KP$ does not reject the null on any occasion. Given the twice-lagged nature of $r_{t-1}$, exogeneity is expected to hold in this case, much as in the case for \textcite{yogo2004}. However, we of course suffer from a serious weak instrument problem in this specification. It is likely here that we have some spurious rejections from the $J$-test, although the frequency of likely Type I errors is lower than in Panel (b) of Table \ref{table empirical}. This aligns with our simulation results; this application only has one overidentifying restriction as opposed to three in \textcite{yogo2004}. Our previous simulations suggest the scale of the over-rejection problem for $J$ increases quickly in the number of overidentifying restrictions, and that both tests tend to perform reasonably well with just a single overidentifying instrument. Regardless, we expect here that the $KP$-test is correctly failing to reject the null, but the $J$-test sees some over-rejection of the overidentifying restrictions.

\section{Proofs for Section \ref{section weakhet}}\label{section proofs}

\begin{proof}[Proof of Lemma \ref{lem weakhet staigerstock}]
These results follow similarly to those in Staiger \& Stock (1997). \\
     (\textit{a}) $Z'X/\sqrt{n} = Z'(ZC/\sqrt{n} + V)/\sqrt{n} = (Z'Z/n)C + Z'V/\sqrt{n}\overset{d}{\to} Q_{ZZ}C + \Psi^*_{ZV}$. \\
    (\textit{b}) We have
    \begin{align*}
        X'P_ZX &= (Z'X/\sqrt{n})'(Z'Z/n)^{-1}(Z'X/\sqrt{n}) \overset{d}{\to} (Q_{ZZ}C + \Psi^*_{ZV})'Q_{ZZ}^{-1}(Q_{ZZ}C + \Psi^*_{ZV})
    \end{align*}
    (\textit{c}) Again we have,
        \begin{align*}
        X'P_Zu &= (Z'X/\sqrt{n})'(Z'Z/n)^{-1}(Z'u/\sqrt{n}) \overset{d}{\to} (Q_{ZZ}C + \Psi^*_{ZV})'Q_{ZZ}^{-1}\Psi^*_{Zu} 
    \end{align*}
    (\textit{d}) Since $ X'X = (ZC/\sqrt{n} + V)'(ZC/\sqrt{n} + V) = C'(Z'Z/n)C + 2C'Z'V/\sqrt{n} + V'V$, it follows that
    \begin{equation*}
        \frac{1}{n}X'X = C'\left(\frac{1}{n^2}Z'Z\right)C + 2C'\frac{1}{n^{3/2}}Z'V + \frac{1}{n} V'V \overset{p}{\to} \mathbb{E}[v_iv_i'] = \tilde{\Sigma}_{V}
    \end{equation*}
    (\textit{e}) Similarly to \textit{d.}, $X'u = C'Z'u\sqrt{n} + V'u$, so
    \begin{equation}
        \frac{1}{n}X'u = C'\frac{1}{n^{3/2}}Z'u + \frac{1}{n}V'u \overset{p}{\to} \mathbb{E}[v_iu_i] = \tilde{\Sigma}_{Vu} \tag*{\QEDA}
    \end{equation}
\end{proof}

\begin{proof}[Proof of Lemma \ref{lem weakhet alphal}]
Given Assumptions \ref{as weakhom xziid}, \ref{as weakhet qzz} and \ref{as weakhet errors}, again with $\bar{\Pi}= [\frac{1}{\sqrt{n}}C\beta \ \  \frac{1}{\sqrt{n}}C ]$ and $B = [\beta \ \ I_{k_x}]$, it follows that
\begin{equation}\label{eq weakhet Z'W}
    \frac{1}{\sqrt{n}}Z'W = \frac{1}{n}Z'ZCB + \frac{1}{\sqrt{n}}Z'\bar{V} \overset{d}{\to} Q_{ZZ}CB + \Psi_{Z\bar{V}}^*
\end{equation}
where $Z'\bar{V}/\sqrt{n} \overset{d}{\to} \Psi_{Z\bar{V}}^*$. Therefore, from (\ref{eq weakhet Z'W}) and $W'W/n\overset{p}{\to}\bar{\Sigma}_{\bar{V}}^*$ we have
\begin{align}
    n\hat{\alpha}_L &=  \min_{||\phi||=1}\frac{\phi'W'P_ZW\phi}{\phi'(\frac{1}{n}W'W)\phi} \notag \\
    & \overset{d}{\to} \min_{||\phi||=1} \frac{\phi'[Q_{ZZ}CB + \Psi_{Z\bar{V}}^*]'Q_{ZZ}^{-1}[Q_{ZZ}CB + \Psi_{Z\bar{V}}^*]\phi}{\phi'\Sigma_{\bar{V}}^*\phi} = \tilde{\alpha}_L^* \tag*{\QEDA}
\end{align}
\end{proof}

\begin{proof}[Proof of Lemma \ref{lem weakhet betalimits}]
Proof follows from Lemmas \ref{lem weakhet staigerstock} and \ref{lem weakhet alphal}. \QEDA
\end{proof}

\begin{proof}[Proof of Lemma \ref{lem weakhet pilimit}]
(\textit{a}) Combining $Z'Z/n\overset{p}{\to}Q_{ZZ}$ from Assumption \ref{as weakhet qzz} and $Z'X/\sqrt{n} \overset{d}{\to} Q_{ZZ}C + \Psi^*_{ZV} $ from Lemma \ref{lem weakhet staigerstock}, we obtain
\begin{equation*}
    \sqrt{n}\hat{\Pi}_{2SLS} \overset{d}{\to} Q_{ZZ}^{-1} (Q_{ZZ}C + \Psi^*_{ZV}).
\end{equation*}
(\textit{b}) The estimator can be written as
\begin{align}\label{eq proofs piliml}
   \hat{\Pi}_{L} &= (Z'Z - Z'\hat{u}_L(\hat{u}_L'\hat{u}_L)^{-1}\hat{u}_L'Z)^{-1}(Z'X - Z'\hat{u}_L(\hat{u}_L'\hat{u}_L)^{-1}\hat{u}_L'X) \notag \\
   &= \left(\frac{1}{n}Z'Z - \frac{1}{n}Z'\hat{u}_L\left(\frac{1}{n}\hat{u}_L'\hat{u}_L\right)^{-1}\frac{1}{n}\hat{u}_L'Z\right)^{-1}\left(\frac{1}{\sqrt{n}}Z'X - \frac{1}{\sqrt{n}}Z'\hat{u}_L\left(\frac{1}{n}\hat{u}_L'\hat{u}_L\right)^{-1}\frac{1}{n}\hat{u}_L'X\right).
\end{align}
We know that $Z'Z/n\overset{p}{\to}Q_{ZZ}$, $Z'X\sqrt{n}\overset{d}{\to}Q_{ZZ}C + \Psi_{ZV}^*$ and $Z'\hat{u}_L/n\overset{p}{\to}0$. Consider the limits
\begin{align*}
    \frac{1}{\sqrt{n}}Z'\hat{u}_L &= \frac{1}{\sqrt{n}}Z'u - \frac{1}{\sqrt{n}}Z'X(\hat{\beta}_L-\beta)\overset{d}{\to} \Psi_{Zu}^* - (Q_{ZZ}C + \Psi_{ZV}^*)\tilde{\beta}^*_L, \\
    \frac{1}{n}\hat{u}_L'\hat{u}_L &= \frac{1}{n}u'u - 2\frac{1}{n}u'X(\hat{\beta}_L-\beta) + (\hat{\beta}_L-\beta)'X'X(\hat{\beta}_L-\beta) \overset{d}{\to} \tilde{\sigma}_{u} - 2\tilde{\Sigma}_{Vu}'\tilde{\beta}^*_L + \tilde{\beta}^{*'}_L\tilde{\Sigma}_{V}\tilde{\beta}^*_L \textup{ and } \\
    \frac{1}{n}X'\hat{u}_L &= \frac{1}{n}X'(u - X(\hat{\beta}_L-\beta)) \overset{d}{\to} \tilde{\Sigma}_{Vu} - \tilde{\Sigma}_{V}\tilde{\beta}^*_L.
\end{align*}
Drawing these results together with (\ref{eq proofs piliml}) yields
\begin{align*}
    \tilde{\Pi}^*_L &= Q_{ZZ}^{-1}\left[Q_{ZZ}C + \Psi_{ZV}^* - \frac{(\Psi_{Zu}^* - (Q_{ZZ}C + \Psi_{ZV}^*)\tilde{\beta}^*_L)(\tilde{\Sigma}_{Vu} - \tilde{\Sigma}_{V}\tilde{\beta}^*_L)'}{\tilde{\sigma}_{u} - 2\tilde{\Sigma}_{Vu}'\tilde{\beta}^*_L + \tilde{\beta}^{*'}_L\tilde{\Sigma}_{V}\tilde{\beta}^*_L}\right] \\
    &= \tilde{\Pi}^*_{2SLS} - Q_{ZZ}^{-1}\left[\frac{(\Psi_{Zu}^* - (Q_{ZZ}C + \Psi_{ZV}^*)\tilde{\beta}^*_L)(\tilde{\Sigma}_{Vu} - \tilde{\beta}^*_L\tilde{\Sigma}_{V})'}{ \tilde{\sigma}_{u} - 2\tilde{\Sigma}_{Vu}'\tilde{\beta}^*_L + \tilde{\beta}^{*'}_L\tilde{\Sigma}_{V}\tilde{\beta}^*_L}\right]. \tag*{\QEDA}
\end{align*}
\end{proof}

\begin{proof}[Proof of Theorem \ref{thm robscore}]
To derive the limiting distribution of the robust score statistic, let
\begin{equation*}
    S_r(\hat{\beta)} = (\sqrt{n}\hat{\theta})'[n\mathbb{V}_r(\hat{\theta})]^{-1}(\sqrt{n}\hat{\theta}).
\end{equation*}
where
\begin{equation*}
    \hat{\theta} = (Z_2' M_{\hat{X}}Z_2)^{-1}Z_2 'M_{\hat{X}}\hat{u},
\end{equation*}
First,
\begin{equation*}
    \frac{1}{\sqrt{n}}Z_2 'M_{\hat{X}}\hat{u} = \frac{1}{\sqrt{n}}Z_2 '\hat{u} = \frac{1}{\sqrt{n}}Z_2'(u -X(\hat{\beta} - \beta)) = \frac{1}{\sqrt{n}}Z_2'u - \frac{1}{\sqrt{n}}Z_2'X(\hat{\beta} - \beta)).
\end{equation*}
From Assumptions \ref{as weakhom xziid}, \ref{as weakhet qzz} and \ref{as weakhet errors} and Lemma \ref{lem weakhet betalimits}, we have
\begin{equation}\label{eq weakhet z2mxu}
    \frac{1}{\sqrt{n}}Z_2'u - \frac{1}{\sqrt{n}}Z_2'X(\hat{\beta} - \beta))\overset{d}{\to} \Psi^*_{2,u} - (Q_2'C + \Psi^*_{2,V})\tilde{\beta}^{*}.
\end{equation}
Further, from Lemma \ref{lem weakhet pilimit} it follows that
\begin{align}\label{eq weakhet z2mxz2}
    \frac{1}{n}Z_2' M_{\hat{X}}Z_2 &= \frac{1}{n}Z_2'Z_2 - \frac{1}{n}Z_2'\hat{X} (\hat{X}'\hat{X})^{-1}\hat{X}'Z_2 \notag \\
    &= \frac{1}{n}Z_2'Z_2 - \frac{1}{n}Z_2'Z\hat{\Pi} (\hat{\Pi}'Z'Z\hat{\Pi})^{-1}\hat{\Pi}'Z'Z_2 \notag \\
    &= \frac{1}{n}Z_2'Z_2 - \frac{1}{n}Z_2'Z\sqrt{n}\hat{\Pi}\left(\sqrt{n}\hat{\Pi}'\frac{1}{n}Z'Z\sqrt{n}\hat{\Pi}\right)^{-1}\sqrt{n}\hat{\Pi}'\frac{1}{n}Z'Z_2 \notag \\
    &\overset{d}{\to} Q_{22} - Q_2'\tilde{\Pi}^*(\tilde{\Pi}^{*'}Q_{ZZ}\tilde{\Pi}^*)^{-1}\tilde{\Pi}^{*'}Q_2.
\end{align}
Combining the results in (\ref{eq weakhet z2mxu}) and (\ref{eq weakhet z2mxz2}), we obtain
\begin{equation}\label{eq weakhet gammalim}
    \sqrt{n}\hat{\theta} \overset{d}{\to} [Q_{22} - Q_2'\tilde{\Pi}^*(\tilde{\Pi}^{*'}Q_{ZZ}\tilde{\Pi}^*)^{-1}\tilde{\Pi}^{*'}Q_2]^{-1}[\Psi^*_{2,u} - (Q_2'C + \Psi^*_{2,V})\tilde{\beta}^{*}].
\end{equation}
For the limiting distribution of the robust score test, we also need the limiting distribution of the robust variance estimator
\begin{equation*}
    \mathbb{V}_r(\hat{\theta}) = \left(\frac{1}{n}Z_2' M_{\hat{X}}Z_2\right)^{-1}\left(\frac{1}{n}\sum_{i=1}^n \hat{u}_i^2 \tilde{z}_{2,i}\tilde{z}_{2,i}'\right)^{-1} \left(\frac{1}{n}Z_2' M_{\hat{X}}Z_2\right).
\end{equation*}
By Assumption \ref{as weakhet ztilde}, it follows that
\begin{equation*}
    \sum_{i=1}^n \hat{u}_i^2 \tilde{z}_{2,i}\tilde{z}_{2,i}' \overset{d}{\to} \Omega_{\tilde{Z}_{2},u} - 2\Omega_{\tilde{Z}_{2},uV\tilde{\beta}} + \Omega_{\tilde{Z}_{2},\tilde{\beta} V\tilde{\beta}} = \Omega_{\tilde{Z}_2\hat{u}}.
\end{equation*}
so the limit of the variance estimator is
\begin{align}\label{eq weakhet gammavarlim}
\begin{split}
    n\mathbb{V}_r(\hat{\theta}) \overset{p}{\to} [Q_{22} - Q_2'\tilde{\Pi}^*(\tilde{\Pi}^{*'}Q_{ZZ}\tilde{\Pi}^*)^{-1}\tilde{\Pi}^{*'}Q_2]^{-1}\Omega_{\tilde{Z}_2\hat{u}}
   \cdot \, [Q_{22} - Q_2'\tilde{\Pi}^*(\tilde{\Pi}^{*'}Q_{ZZ}\tilde{\Pi}^*)^{-1}\tilde{\Pi}^{*'}Q_2]^{-1}.
   \end{split}
\end{align}
From (\ref{eq weakhet gammalim}) and (\ref{eq weakhet gammavarlim}), we get
\begin{equation}\label{eq weakhet scorelimit}
    S_r(\hat{\beta}) \overset{d}{\to} [\Psi^*_{2,u} - (Q_2'C + \Psi^*_{2,V})\tilde{\beta}^{*}]' \Omega_{\tilde{Z}_2\hat{u}}^{-1} [\Psi^*_{2,u} - (Q_2'C + \Psi^*_{2,V})\tilde{\beta}^{*}]
\end{equation}
concluding the proof. \QEDA
\end{proof}

\section{Vectorised results}\label{app vectorised}

Here we present the results above but in vectorised form. The advantage of this form is that results are expressed in terms of multivariate distributions instead of matrix distributions, but at the expense of more cumbersome notation. For simplicity, assume the instruments are normalised such that $Z'Z/n = I_{k_z}$ and $Q_{ZZ} = I_{k_z}$ as in \textcite{montiel2013} and \textcite{lewis2022}. Define the following: $\mcR_{K,L} = I_K \otimes \textup{vec}(I_L)$ is a $KL^2 \times K$ selection matrix, $\Gamma_{ZX} = \textup{vec}(Z'X)/\sqrt{n}$, $\Gamma_{Zy} = Z'y/\sqrt{n}$, $\Gamma_{ZW} = \textup{vec}(Z'W)/\sqrt{n}$, $\Gamma_{WW} = \textup{vec}((W'W)^{1/2})/\sqrt{n}$, $\Gamma_{XX} = \textup{vec}((X'X)^{1/2})/\sqrt{n}$ and $\Gamma_{Xy} = X'y/\sqrt{n}$. It can be shown that
\begin{align*}
    \hat{\beta}_{2SLS} = \left[\mcR_{k_x,k_z}'\left[\left(\Gamma_{ZX}\Gamma_{ZX}'\right)\otimes I_{k_z}\right]\mcR_{k_x,k_z}\right]^{-1} \mcR_{k_x,k_z}'\textup{vec}\left(\Gamma_{Zy}\Gamma_{ZX}'\right)
\end{align*}
and
\begin{align*}
    \hat{\beta}_{L} = \left[\mcR_{k_x,k_z}'\left[\left(\Gamma_{ZX}\Gamma_{ZX}'\right)\otimes I_{k_z}\right]\mcR_{k_x,k_z} - \hat{\alpha}_L\mcR_{k_x,k_x}'\left(\Gamma_{XX}\Gamma_{XX}')\otimes I_{k_x}\right)\right]^{-1}\, \cdot \\
    \left[\mcR_{k_x,k_z}'\textup{vec}\left(\Gamma_{Zy}\Gamma_{ZX}'\right) - \hat{\alpha}_L \mcR_{k_x,k_x}'\textup{vec}\left(\Gamma_{Xy} \otimes I_{k_x}\right) \right]
\end{align*}
where $\hat{\alpha}_L$ is the smallest root of the characteristic polynomial
\begin{align*} 
\left|\mcR_{k_x+1,k_z}'\left[\left(\Gamma_{ZW}\Gamma_{ZW}'\right)\otimes I_{k_z}\right]\mcR_{k_x+1,k_z} - \alpha \mcR_{k_x+1,k_x+1}'\left[\left(\Gamma_{WW}\Gamma_{WW}'\right)\otimes I_{k_z}\right]\mcR_{k_x+1,k_x+1} \right| = 0. 
\end{align*}
Denoting the reduced-form error as $\bar{V} = V\beta + u$, then assume
\begin{align*}
   \begin{pmatrix}
       Z'\bar{V}/\sqrt{n} \\
       \textup{vec}(Z'V)/\sqrt{n}
   \end{pmatrix}  \overset{d}{\to} N(0,\mcW),
   \,\,\, 
   \mcW = 
   \begin{bmatrix}
       \mcW_{1} & \mcW_{12} \\
       \mcW_{21} & \mcW_{2}
   \end{bmatrix}
\end{align*}
where $\mcW$ is a $(k_x+1)k_z \times (k_x+1)k_z$ matrix, with $\mcW_1 = \Omega_{Zu} + (\beta' \otimes I_{k_z})\Omega_{ZV}(\beta \otimes I_{k_z}) - (\beta' \otimes I_{k_z})\Omega_{Z,Vu} - \Omega_{Z,uV}(\beta \otimes I_{k_z})$, $  \mcW_{12} = \Omega_{Zu} -  (\beta' \otimes I_{k_z})\Omega_{Z,Vu}$, $\mcW_{21} = \mcW_{12}'$ and $\mathcal{W}_2 = \Omega_{ZV}$. Then,
\begin{align}\label{eq vec reducedlimit}
   \Gamma_{ZW} = 
   \begin{pmatrix}
       \Gamma_{Zy} \\
       \Gamma_{ZX}
   \end{pmatrix}  \overset{d}{\to} 
   \tilde{\Gamma}_{ZW}^* = 
    \begin{pmatrix}
       \tilde{\Gamma}_{Zy}^* \\
       \tilde{\Gamma}_{ZX}^*
   \end{pmatrix} \sim
   N\left(\mathcal{M}, \mcW
   \right).
\end{align}
where $\mathcal{M} = [(\beta'\otimes I_{k_z})' \,\,\, I_{k_x k_z}]' \textup{vec}(C)$. Further, let $\Gamma_{WW}\overset{d}{\to}\tilde{\Gamma}_{WW}^*$ and $\Gamma_{XX}\overset{p}{\to}\tilde{\Gamma}_{XX}^*$. Then, the limiting distributions from Lemma \ref{lem weakhet betalimits} can be expressed equivalently as
\begin{align}\label{eq vec 2slslimit}
    \tilde{\beta}_{2SLS}^* = \left[\mcR_{k_x,k_z}'\left[\left(\tilde{\Gamma}_{ZX}^*\tilde{\Gamma}_{ZX}^{*'}\right)\otimes I_{k_z}\right]\mcR_{k_x,k_z}\right]^{-1} \mcR_{k_x,k_z}'\textup{vec}\left(\tilde{\Gamma}_{Zy}^*\tilde{\Gamma}_{ZX}^{*'}\right)
\end{align}
and
\begin{align}\label{eq vec limllimit}
    \tilde{\beta}_{L}^* = \left[\mcR_{k_x,k_z}'\left[\left(\tilde{\Gamma}_{ZX}^*\tilde{\Gamma}_{ZX}^{*'}\right)\otimes I_{k_z}\right]\mcR_{k_x,k_z} - \tilde{\alpha}_L^* \mcR_{k_x,k_x}'\left(\tilde{\Gamma}_{XX}^*\tilde{\Gamma}_{XX}^{*'}\otimes I_{k_x}\right)\right]^{-1}\, \cdot \notag \\\left[\mcR_{k_x,k_z}'\textup{vec}\left(\tilde{\Gamma}_{Zy}^*\tilde{\Gamma}_{ZX}^{*'}\right) - \tilde{\alpha}_L^* \mcR_{k_x,k_x}'\textup{vec}\left(\tilde{\Gamma}_{Xy}^* \otimes I_{k_x}\right) \right]
\end{align}
where $\hat{\alpha}_L \overset{d}{\to} \tilde{\alpha}_L^*$, for $\tilde{\alpha}_L^*$ the smallest root of the characteristic polynomial
\begin{align*}  
    \left|\mcR_{k_x+1,k_z}'\left[\left(\tilde{\Gamma}_{ZW}^*\tilde{\Gamma}_{ZW}^{*'}\right)\otimes I_{k_z}\right]\mcR_{k_x+1,k_z} - \alpha \mcR_{k_x+1,k_x+1}'\left[\left(\tilde{\Gamma}_{WW}^*\tilde{\Gamma}_{WW}^{*'}\right)\otimes I_{k_z}\right]\mcR_{k_x+1,k_x+1} \right| = 0. 
\end{align*}
which follow from (\ref{eq vec reducedlimit}).

For the first-stage parameter matrix and estimators, let $\pi = \textup{vec}(\Pi)$ and $\hat{\pi} = \textup{vec}(\hat{\Pi})$. For 2SLS, the first-stage estimator is simply $\sqrt{n}\,\textup{vec}(\hat{\Pi}_{2SLS}) = \sqrt{n}\hat{\pi}_{2SLS} = \Gamma_{ZX}$, and for LIML, we have from \textcite{windmeijer2018} that
\begin{equation}
    \sqrt{n}\,\textup{vec}(\hat{\Pi}_{L}) = \sqrt{n}\,\hat{\pi}_{L} = \sqrt{n} \, \left[ \left(\hat{B}_L'\otimes I_{k_z}\right)' \hat{\mcW}^{-1} \left(\hat{B}_L'\otimes I_{k_z}\right) \right]^{-1}  \left(\hat{B}_L'\otimes I_{k_z}\right)' \hat{\mcW}^{-1} \hat{\pi}_{2SLS}
\end{equation}
for $\hat{B}_L = [\hat{\beta}_L \ \ I_{k_x}]$ the LIML-estimated counterpart of $B = [\beta \ \ I_{k_x}]$ from Lemma \ref{lem weakhet alphal}. From the above results, it therefore follows that
\begin{equation}
    \sqrt{n}\hat{\pi}_{2SLS} \overset{d}{\to} \tilde{\pi}_{2SLS}^* = \tilde{\Gamma}_{ZX}^*
\end{equation}
and
\begin{equation}
    \sqrt{n}\,\hat{\pi}_{L}  \overset{d}{\to} \sqrt{n} \, \left[ \left(\tilde{B}^{*'}_L\otimes I_{k_z}\right)' \mcW^{-1} \left(\tilde{B}^{*'}_L\otimes I_{k_z}\right) \right]^{-1}  \left(\tilde{B}^{*'}_L\otimes I_{k_z}\right)' \mcW^{-1} \tilde{\Gamma}_{ZX}^*
\end{equation}
where $\tilde{B}_L^* = [\tilde{\beta}_L^* \ \ I_{k_x}]$. Consider the matrix $\tilde{Z}_2'\tilde{Z}_2 = Z_2'M_{\hat{X}}Z_2$. This can be re-written as
\begin{align}
    \tilde{Z}_2'\tilde{Z}_2 =  Z_2'Z_2 - &Z_2'Z\left[\left(\hat{\pi}'\otimes I_{k_z}\right)\mcR_{k_x,k_z}\right] \left[\mcR_{k_x,k_z}'\left(\hat{\pi}\otimes I_{k_z}\right) Z'Z\left(\hat{\pi}'\otimes I_{k_z}\right)\mcR_{k_x,k_z}\right]^{-1} \notag \\ \cdot \, &\left[\mcR_{k_x,k_z}'\left(\hat{\pi}\otimes I_{k_z}\right)\right]  Z'Z_2
\end{align}
and from this it follows that
\begin{align}
    \frac{1}{n}\tilde{Z}_2'\tilde{Z}_2 = I_{k_z-k_x} - & J_{k_x,k_z-k_x}\left[\left(\sqrt{n}\hat{\pi}'\otimes I_{k_z}\right)\mcR_{k_x,k_z}\right] \left[\mcR_{k_x,k_z}'\left(\sqrt{n}\hat{\pi}\otimes I_{k_z}\right) I_{k_z} \left(\sqrt{n}\hat{\pi}'\otimes I_{k_z}\right)\mcR_{k_x,k_z}\right]^{-1} \notag \\ \cdot \, &\left[\mcR_{k_x,k_z}'\left(\sqrt{n}\hat{\pi}\otimes I_{k_z}\right)\right]  J_{k_x,k_z-k_x}'
\end{align}
where $Z_2'Z/n = J_{k_x,k_z-k_x} = [0_{k_x \times (k_z-k_x)}' \,\,\, I_{k_z-k_x}]$ is the bottom $k_z-k_x$ rows of $I_{k_z}$. Therefore, $\tilde{Z}_2'\tilde{Z}_2/n$ converges in distribution to
\begin{align}
    \frac{1}{n}\tilde{Z}_2'\tilde{Z}_2 \overset{d}{\to} I_{k_z-k_x} - & J_{k_x,k_z-k_x}\left[\left(\tilde{\pi}^{*'}\otimes I_{k_z}\right)\mcR_{k_x,k_z}\right] \left[\mcR_{k_x,k_z}'\left(\tilde{\pi}^*\otimes I_{k_z}\right) \left(\tilde{\pi}^{*'}\otimes I_{k_z}\right)\mcR_{k_x,k_z}\right]^{-1}\notag  \\ 
    \cdot &\left[\mcR_{k_x,k_z}'\left(\tilde{\pi}^*\otimes I_{k_z}\right)\right] J_{k_x,k_z-k_x}'.
\end{align}
Given these vectorisations, limiting representations of the $J$- and $KP$-tests can then be computed. Consider the robust score in (\ref{eq esttest robscore}), then 
\begin{align*}
    \sqrt{n}\hat{\theta} = \left[\mcR_{k_z-k_x,k_z-k_x}'\left[\left(\Gamma_{\tilde{Z}_2\tilde{Z}_2}\Gamma_{\tilde{Z}_2\tilde{Z}_2}'\right)\otimes I_{k_z-k_x}\right]\mcR_{k_z-k_x,k_z-k_x}\right]^{-1} \mcR_{k_z-k_x,k_z-k_x}'\left(I_{k_z-k_x}\otimes \Gamma_{\tilde{Z}_2\hat{u}}\right)
\end{align*}
with variance
\begin{align*}
    n\hat{\mathbb{V}}(\hat{\theta}) = &\left[\mcR_{k_z-k_x,k_z-k_x}'\left[\left(\Gamma_{\tilde{Z}_2\tilde{Z}_2}\Gamma_{\tilde{Z}_2\tilde{Z}_2}'\right)\otimes I_{k_z-k_x}\right]\mcR_{k_z-k_x,k_z-k_x}\right]^{-1} \\
    \cdot &\left[\mcR_{k_z-k_x,k_z-k_x}'\left(\Gamma_{\tilde{Z}_2\hat{u}}^{\dagger}\otimes I_{k_z-k_x}\right)\left(\Gamma_{\tilde{Z}_2\hat{u}}^{\dagger}\otimes I_{k_z-k_x}\right)'\mcR_{k_z-k_x,k_z-k_x}\right] \\
    \cdot &\left[\mcR_{k_z-k_x,k_z-k_x}'\left[\left(\Gamma_{\tilde{Z}_2\tilde{Z}_2}\Gamma_{\tilde{Z}_2\tilde{Z}_2}'\right)\otimes I_{k_z-k_x}\right]\mcR_{k_z-k_x,k_z-k_x}\right]^{-1}
\end{align*}
where $\Gamma_{\tilde{Z}_2\hat{u}} = \tilde{Z}_2'\hat{u}/\sqrt{n}$ and $\Gamma_{\tilde{Z}_2\hat{u}}^{\dagger}$ is such that $\Gamma_{\tilde{Z}_2\hat{u}}^{\dagger} = \textup{vec}\left(\Gamma_{\tilde{Z}_2\hat{u}}^{\dagger\dagger}\right)$ with $\left(\Gamma_{\tilde{Z}_2\hat{u}}^{\dagger\dagger}\right)'\left(\Gamma_{\tilde{Z}_2\hat{u}}^{\dagger\dagger}\right) = \tilde{Z}_2'\hat{u}\hat{u}'\tilde{Z}_2/n$ is the robust variance estimator for the score statistic. Taking limits of $S_r(\hat{\beta}) = \left(\sqrt{n}\hat{\theta}\right)'\, \left[n\hat{\mathbb{V}}(\hat{\theta})\right]^{-1}\, \left(\sqrt{n}\hat{\theta}\right)$ gives the vectorised form of the robust score limiting distribution as
\begin{align*}
    S_r(\hat{\beta}) \overset{d}{\to} &\left[\left(I_{k_z-k_x}\otimes \tilde{\Gamma}^*_{\tilde{Z}_2\hat{u}}\right)'\mcR_{k_z-k_x,k_z-k_x}\right] \\ \cdot &\left[\mcR_{k_z-k_x,k_z-k_x}'\left(\tilde{\Gamma}_{\tilde{Z}_2\hat{u}}^{\dagger*}\otimes I_{k_z-k_x}\right)\left(\tilde{\Gamma}_{\tilde{Z}_2\hat{u}}^{\dagger*}\otimes I_{k_z-k_x}\right)'\mcR_{k_z-k_x,k_z-k_x}\right]^{-1} \\
    \cdot &\left[\mcR_{k_z-k_x,k_z-k_x}'\left(I_{k_z-k_x}\otimes \tilde{\Gamma}^*_{\tilde{Z}_2\hat{u}}\right)\right]
\end{align*}
where $\Gamma_{\tilde{Z}_2\hat{u}} \overset{d}{\to} \tilde{\Gamma}^*_{\tilde{Z}_2\hat{u}}$, $\Gamma_{\tilde{Z}_2\hat{u}}^{\dagger} \overset{d}{\to} \tilde{\Gamma}_{\tilde{Z}_2\hat{u}}^{\dagger*}$ and $\left(\Gamma_{\tilde{Z}_2\hat{u}}^{\dagger\dagger}\right)'\left(\Gamma_{\tilde{Z}_2\hat{u}}^{\dagger\dagger}\right) \overset{d}{\to} \left(\tilde{\Gamma}_{\tilde{Z}_2\hat{u}}^{\dagger\dagger*}\right)'\left(\tilde{\Gamma}_{\tilde{Z}_2\hat{u}}^{\dagger\dagger*}\right) = \Omega_{\tilde{Z}_2\hat{u}}$. It can be shown that 
\begin{align*}
    \tilde{\Gamma}^*_{\tilde{Z}_2\hat{u}} = N(0_{k_z-k_x},\Omega_{Z_2 u}) + J_{k_x,k_z-k_x}(\tilde{\beta}'\otimes I_{k_x}) \textup{vec}(C), 
\end{align*}
where $\Omega_{Z_2u}$ is the lower-right $(k_z-k_x)\times (k_z-k_x)$ block of $\Omega_{Zu}$, and further
\begin{align*}
    \Omega_{\tilde{Z}_2\hat{u}} = &\left[\mcR_{k_z-k_x,k_z-k_x}'\left(\tilde{\Gamma}_{\tilde{Z}_2u}^{\dagger*}\otimes I_{k_z-k_x}\right)\left(\tilde{\Gamma}_{\tilde{Z}_2u}^{\dagger*}\otimes I_{k_z-k_x}\right)'\mcR_{k_z-k_x,k_z-k_x}\right] \\
    &- \left[\mcR_{k_z-k_x,k_z-k_x}'\left(\tilde{\Gamma}_{\tilde{Z}_2u}^{\dagger*}\otimes I_{k_z-k_x}\right)\left(\tilde{\Gamma}_{\tilde{Z}_2V\tilde{\beta}}^{\dagger*}\otimes I_{k_z-k_x}\right)'\mcR_{k_z-k_x,k_z-k_x}\right] \\
    &- \left[\mcR_{k_z-k_x,k_z-k_x}'\left(\tilde{\Gamma}_{\tilde{Z}_2V\tilde{\beta}}^{\dagger*}\otimes I_{k_z-k_x}\right)\left(\tilde{\Gamma}_{\tilde{Z}_2u}^{\dagger*}\otimes I_{k_z-k_x}\right)'\mcR_{k_z-k_x,k_z-k_x}\right] \\
    &+ \left[\mcR_{k_z-k_x,k_z-k_x}'\left(\tilde{\Gamma}_{\tilde{Z}_2V\tilde{\beta}}^{\dagger*}\otimes I_{k_z-k_x}\right)\left(\tilde{\Gamma}_{\tilde{Z}_2V\tilde{\beta}}^{\dagger*}\otimes I_{k_z-k_x}\right)'\mcR_{k_z-k_x,k_z-k_x}\right]
\end{align*}
where $\tilde{\Gamma}_{\tilde{Z}_2u}^{\dagger*}$ and $\tilde{\Gamma}_{\tilde{Z}_2V\tilde{\beta}}^{\dagger*}$ are the limits of terms adapted suitability from the definitions above. Specialising to 2SLS and LIML yields the $J$- and $KP$-test limiting distributions respectively. 

\section{Design 2 Monte Carlo}\label{sec mc design2}

The model is kept the same as in the first design except that
\begin{equation}\label{eq mc errors2}
    g(z_i) = \sqrt{\exp(\sum_{j=1}^{k_z}\alpha_j z_{ji})}.
\end{equation}
Therefore, error variance is dependent on all the instruments rather than just $z_{1,i}$ and the heteroskedasticity is incorporated through an exponential function of the instruments. We assume that $\alpha_j = \alpha$ for all $j\in\{1,...,k_z\}$ and allow for $\alpha\in\{0.05,0.1,0.2\}$. 

\begin{figure}[ht!] 
    \centering
    \subfloat[$k_z=2$, $\rho = 0.2$]{{\includegraphics[width=8cm]{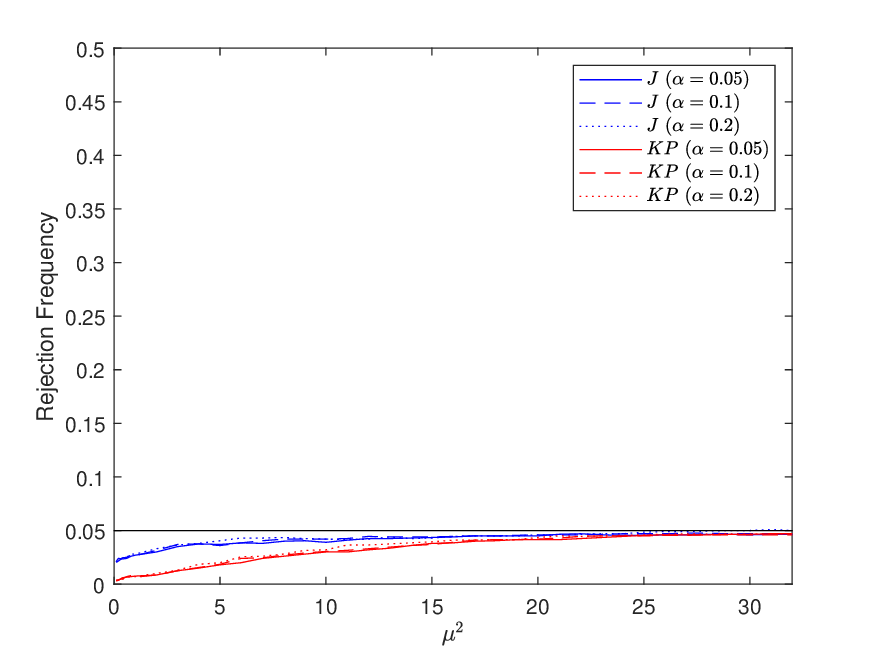} }}
    \subfloat[$k_z=4$, $\rho = 0.2$]{{\includegraphics[width=8cm]{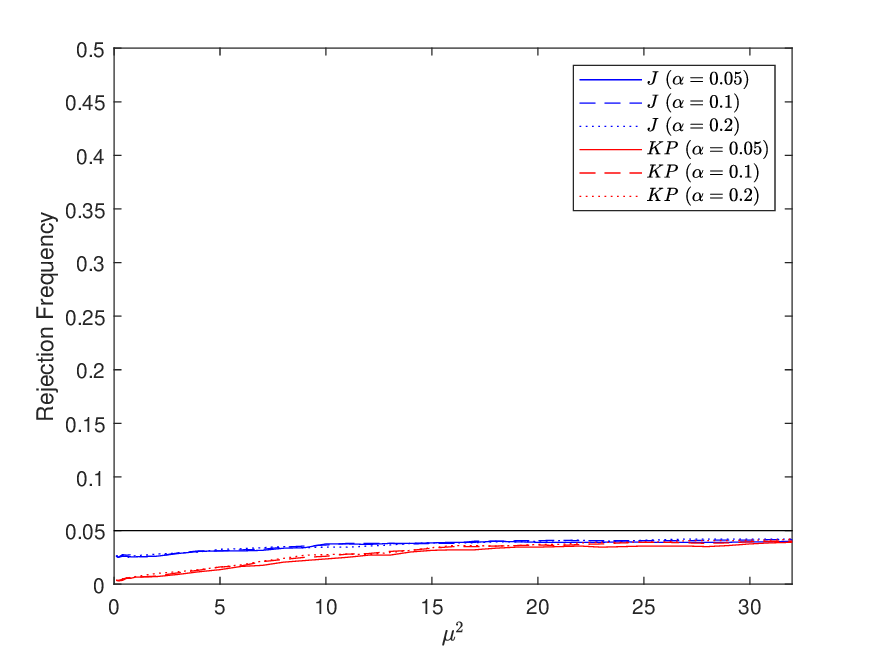} }}
    
    \centering
    \subfloat[$k_z=2$, $\rho = 0.5$]{{\includegraphics[width=8cm]{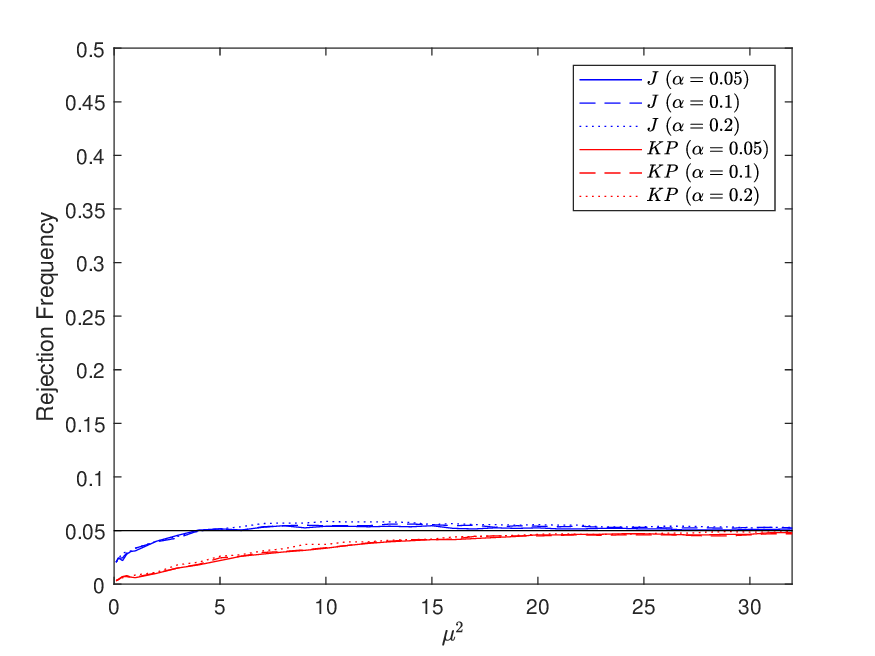} }}
    \subfloat[$k_z=4$, $\rho = 0.5$]{{\includegraphics[width=8cm]{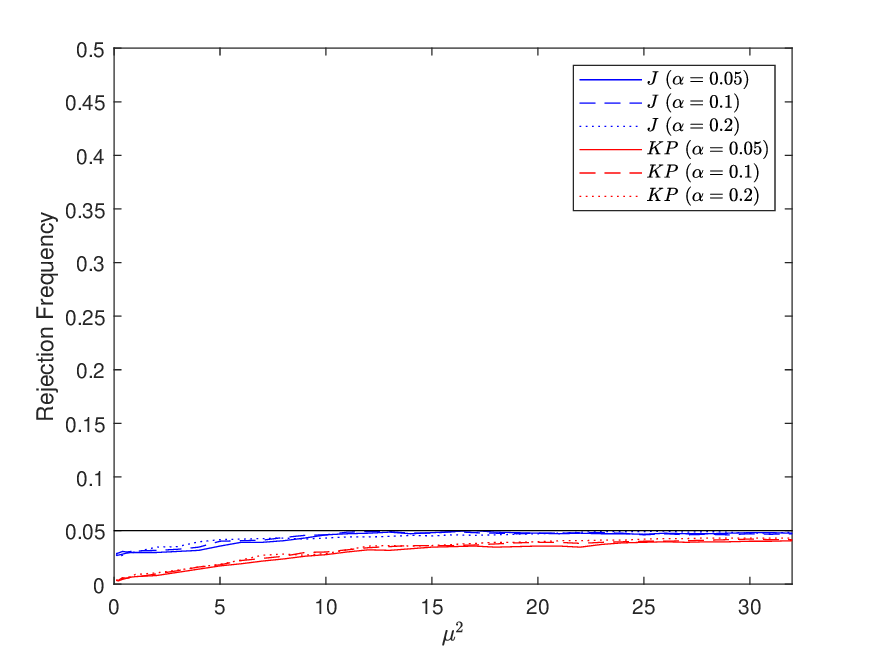} }}
    
    \centering
    \subfloat[$k_z=2$, $\rho = 0.95$]{{\includegraphics[width=8cm]{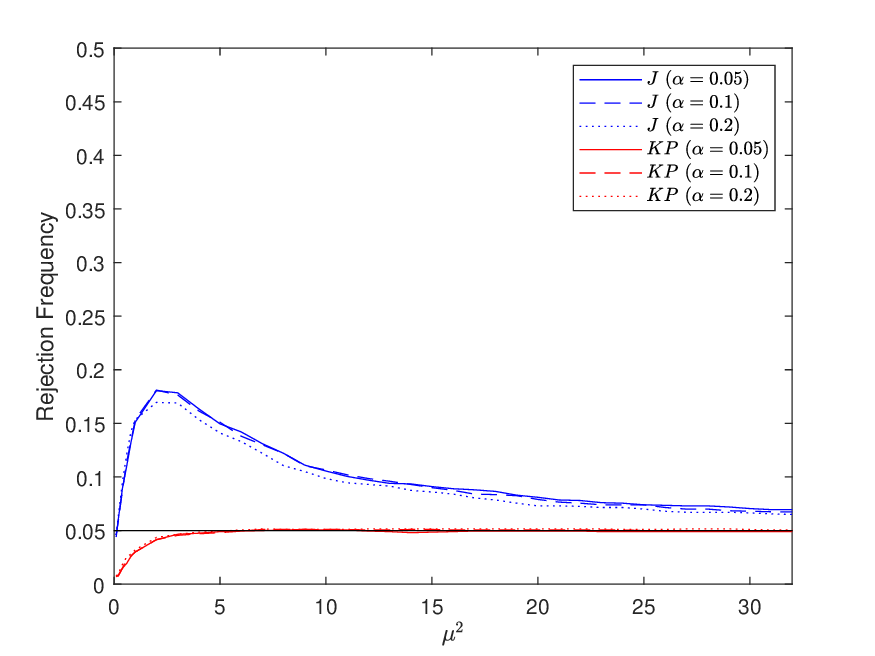} }}
    \subfloat[$k_z=4$, $\rho = 0.95$]{{\includegraphics[width=8cm]{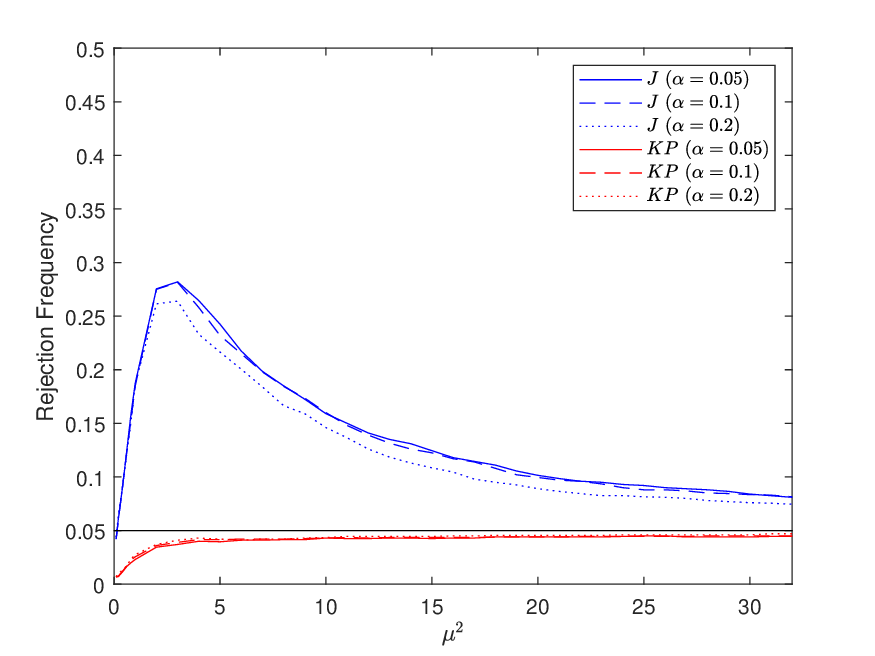} }}
    \caption[Design 2: Rejection frequency at nominal 5\% level across $\mu^2$]{Rejection frequency at nominal 5\% level.}
    \label{graph design2}
\end{figure}

\begin{figure}[h!]
    \centering
    \subfloat[$k_z=2$] {{\includegraphics[width=16cm]{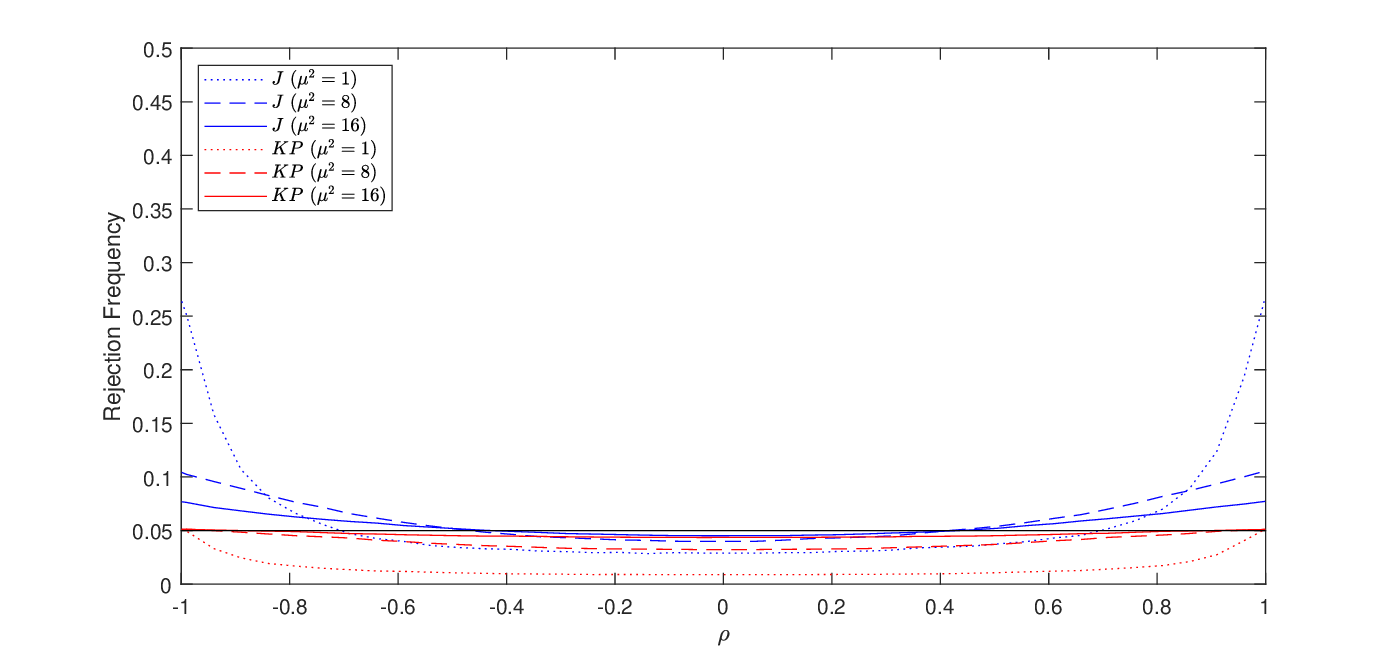}}}

    \centering
    \subfloat[$k_z=4$] {{\includegraphics[width=16cm]{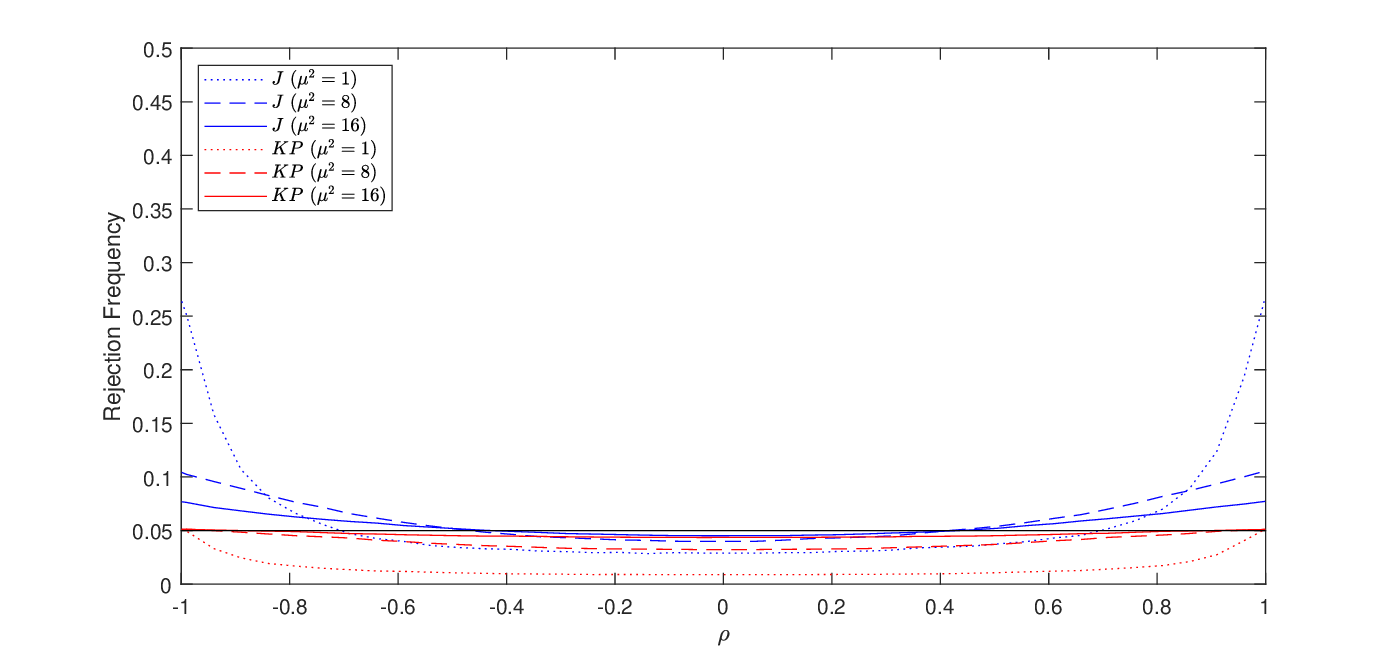}}}
    \caption[Design 2: Rejection frequency at nominal 5\% level across $\rho$]{Rejection frequency at nominal 5\% level across $\rho$.}
    \label{graph design2rho}
\end{figure}

Figure \ref{graph design2} presents similar results to Figure \ref{graph design1}. In the low and medium endogeneity designs, both tests under-reject when identification is extremely weak, with the $J$-test having size closer to the correct nominal value. Both tests however under-reject even for $\mu^2 = 32$. The $J$-test does not change significantly between the $k_z=2$ and $k_z=4$ cases, but the $KP$-test becomes more under-sized; the $J$- and $KP$-test become approximately equally sized at $\mu^2\approx 15$ with one overidentifying restriction, but this increases to about $\mu^2\approx 20$ for $k_z=4$. In the high endogeneity design, we see a similar story to the first heteroskedastic design, with the $J$-test suffering poor performance and the $KP$-test obtaining the correct size for low values of $\mu^2$. In the $k_z=2$ case, $J$ over-rejects with a maximum size of $\approx0.18$ at $\mu^2=2$, whereas $KP$ obtains correct size from $\mu^2\geq 5$. Similarly in the $k_z=6$ case, $J$ over-rejects steeply for low values of $\mu^2$, with size peaking between 0.3-0.35 depending on the strength of the heteroskedasticity. In both designs, $KP$ quickly attains the correct size; the $\mu^2$ required increases with the number of instruments but in both set-ups is still obtained for instrument strengths typically considered weak.

Figure \ref{graph design2rho} is again similar to results from Design 1 presented in Figure \ref{graph design1rho}. For $k_z=2$, the $KP$-test has low size when the instruments are extremely weak, but rejection frequency becomes much closer to the $J$-test when $\mu^2=8$. Other than for the most weak instruments, size distortion is less severe than for $J$. The $J$-test exhibits similar behaviour as seen in Figure \ref{graph design1rho}; size is closer to the nominal 5\% than $KP$ for low values of $|\rho|$ (with the gap increasing in overidentifying restrictions) with the test again over-rejecting substantially as $|\rho|$ becomes large.

\section{Additional estimator and test statistic results (Design 1)}\label{app estimatorresults}

Here we report additional estimator results for Design 1. For 2SLS and LIML, we report the median bias and 90:10 range (defined as the $90^{th}$ percentile minus the $10^{th}$ percentile). These metrics are chosen over the typical bias and variance metrics to allow for a fairer comparison, due the well-known moment problem of LIML. For $J$ and $KP$, we report rejection frequencies at the nominal 10\% and 1\% levels.

\begin{table}[ht!]
\small
\begin{center}
\begin{threeparttable}
\addtolength{\tabcolsep}{0.1pt}
\caption[Estimation and test statistic results for $k_z - k_x = 1$]{Estimation and test statistic results for $k_z - k_x = 1$.} 
\vspace{-0.5em}
\centering 
\begin{tabular*}{\textwidth}{@{\extracolsep{\fill}} c c c c c c c c c c}
\hline
\hline
\multicolumn{10}{c}{Low endogeneity design ($\rho = 0.2$)} \\
\hline
& & \multicolumn{2}{c}{Median Bias} & \multicolumn{2}{c}{90:10 Range} & \multicolumn{2}{c}{Nom. Size 10\%} &  \multicolumn{2}{c}{Nom. Size 1\%}  \T \\\cline{2-3}
\hline
$\alpha$ & $\mu^2$ & 2SLS & LIML & 2SLS & LIML & $J$ & $KP$ & $J$ & $KP$ \\ 
\hline
\hline
& 1 & 0.144 & 0.124 & 2.269 & 4.668 & 0.058 & 0.032 & 0.003 & 0.001 \\
\multirow{2}{*}{0.5} & 4 & 0.070 & 0.039 & 1.520 & 2.448 & 0.074 & 0.057 & 0.005 & 0.003 \\
& 8 & 0.037 & 0.011 & 1.114 & 1.481 & 0.081 & 0.075 & 0.006 & 0.005 \\
& 16 & 0.020 & 0.003 & 0.792 & 0.913 & 0.090 & 0.092 & 0.007 & 0.008 \\
\hline
& 1 & 0.148 & 0.130 & 2.385 & 4.848 & 0.054 & 0.039 & 0.003 & 0.002 \\
\multirow{2}{*}{1} & 4 & 0.081 & 0.051 & 1.667 & 2.862 & 0.070 & 0.068 & 0.004 & 0.006 \\
& 8 & 0.048 & 0.020 & 1.226 & 1.761 & 0.080 & 0.087 & 0.005 & 0.009 \\
& 16 & 0.024 & 0.005 & 0.882 & 1.078 & 0.089 & 0.102 & 0.006 & 0.012 \\
\hline
& 1 & 0.180 & 0.177 & 2.804 & 5.499 & 0.050 & 0.052 & 0.002 & 0.003 \\
\multirow{2}{*}{2} & 4 & 0.134 & 0.110 & 2.297 & 4.408 & 0.059 & 0.075 & 0.002 & 0.008 \\
& 8 & 0.098 & 0.065 & 1.895 & 3.430 & 0.068 & 0.096 & 0.003 & 0.012 \\
& 16 & 0.063 & 0.030 & 1.441 & 2.262 & 0.082 & 0.114 & 0.005 & 0.019 \\
\hline
\hline
\multicolumn{10}{c}{Medium endogeneity design ($\rho = 0.5$)} \\
\hline
& & \multicolumn{2}{c}{Median Bias} & \multicolumn{2}{c}{90:10 Range} & \multicolumn{2}{c}{Nom. Size 10\%} & \multicolumn{2}{c}{Nom. Size 1\%}  \T \\\cline{2-3}
\hline
$\alpha$ & $\mu^2$ & 2SLS & LIML & 2SLS & LIML & $J$ & $KP$ & $J$ & $KP$ \\ 
\hline
\hline
& 1 & 0.355 & 0.296 & 2.056 & 4.425 & 0.066 & 0.035 & 0.005 & 0.001 \\
\multirow{2}{*}{0.5} & 4 & 0.172 & 0.083 & 1.435 & 2.359 & 0.091 & 0.062 & 0.009 & 0.004 \\
& 8 & 0.094 & 0.022 & 1.062 & 1.468 & 0.102 & 0.083 & 0.010 & 0.006 \\
& 16 & 0.046 & 0.003 & 0.774 & 0.913 & 0.103 & 0.095 & 0.011 & 0.009 \\
\hline
& 1 & 0.367 & 0.321 & 2.179 & 4.521 & 0.063 & 0.044 & 0.004 & 0.003 \\
\multirow{2}{*}{1} & 4 & 0.203 & 0.116 & 1.561 & 2.797 & 0.090 & 0.074 & 0.008 & 0.007 \\
& 8 & 0.117 & 0.038 & 1.165 & 1.731 & 0.106 & 0.093 & 0.011 & 0.010 \\
& 16 & 0.058 & 0.007 & 0.854 & 1.074 & 0.109 & 0.103 & 0.012 & 0.012 \\
\hline
& 1 & 0.443 & 0.420 & 2.544 & 4.957 & 0.056 & 0.054 & 0.003 & 0.003 \\
\multirow{2}{*}{2} & 4 & 0.329 & 0.266 & 2.127 & 4.231 & 0.081 & 0.081 & 0.006 & 0.009 \\
& 8 & 0.244 & 0.159 & 1.776 & 3.319 & 0.097 & 0.104 & 0.009 & 0.015 \\
& 16 & 0.157 & 0.070 & 1.368 & 2.205 & 0.112 & 0.119 & 0.015 & 0.020 \\
\hline
\hline
\multicolumn{10}{c}{High endogeneity design ($\rho = 0.95$)} \\
\hline
& & \multicolumn{2}{c}{Median Bias} & \multicolumn{2}{c}{90:10 Range} & \multicolumn{2}{c}{Nom. Size 10\%} &  \multicolumn{2}{c}{Nom. Size 1\%}  \T \\\cline{2-3}
\hline
$\alpha$ & $\mu^2$ & 2SLS & LIML & 2SLS & LIML & $J$ & $KP$ & $J$ & $KP$ \\ 
\hline
\hline
& 1 & 0.647 & 0.444 & 1.395 & 4.010 & 0.238 & 0.082 & 0.071 & 0.005 \\
\multirow{2}{*}{0.5} & 4 & 0.310 & 0.079 & 1.093 & 2.197 & 0.235 & 0.100 & 0.103 & 0.009 \\
& 8 & 0.172 & 0.016 & 0.923 & 1.472 & 0.184 & 0.101 & 0.065 & 0.009 \\
& 16 & 0.085 & 0.002 & 0.718 & 0.921 & 0.146 & 0.102 & 0.036 & 0.009 \\
\hline
& 1 & 0.687 & 0.511 & 1.361 & 3.840 & 0.253 & 0.092 & 0.074 & 0.010 \\
\multirow{2}{*}{1} & 4 & 0.366 & 0.125 & 1.190 & 2.365 & 0.271 & 0.105 & 0.133 & 0.011 \\
& 8 & 0.213 & 0.032 & 0.971 & 1.737 & 0.220 & 0.105 & 0.095 & 0.009 \\
& 16 & 0.108 & 0.005 & 0.773 & 1.102 & 0.167 & 0.104 & 0.053 & 0.009 \\
\hline
& 1 & 0.837 & 0.761 & 1.208 & 3.018 & 0.207 & 0.102 & 0.046 & 0.015 \\
\multirow{2}{*}{2} & 4 & 0.612 & 0.410 & 1.410 & 3.937 & 0.329 & 0.122 & 0.146 & 0.021 \\
& 8 & 0.447 & 0.217 & 1.345 & 3.647 & 0.325 & 0.116 & 0.168 & 0.016 \\
& 16 & 0.286 & 0.083 & 1.101 & 2.045 & 0.280 & 0.113 & 0.142 & 0.012 \\
\hline

\end{tabular*}
\caption*{\footnotesize{Table \ref{table extraresults1}: Additional estimator and test statistic results for 2SLS, LIML, $J$ and $KP$ with one overidentifying restriction. Median bias and 90:10 range (defined as the $90^{th}$ percentile minus the $10^{th}$ percentile) are reported for 2SLS and LIML. Rejection frequencies at the nominal 10\% and 1\% levels are reported for $J$ and $KP$.}}
\label{table extraresults1}
\end{threeparttable}
\end{center}
\end{table}

\begin{table}[ht!]
\small
\begin{center}
\begin{threeparttable}

\caption[Estimation and test statistic results for $k_z - k_x = 3$]{Estimation and test statistic results for $k_z - k_x = 3$.}
\vspace{-0.5em}
\centering 
\begin{tabular*}{\textwidth}{@{\extracolsep{\fill}} c c c c c c c c c c}
\hline\hline
\multicolumn{10}{c}{Low endogeneity design ($\rho = 0.2$)} \\
\hline
& & \multicolumn{2}{c}{Median Bias} & \multicolumn{2}{c}{90:10 Range} & \multicolumn{2}{c}{Nom. Size 10\%} &  \multicolumn{2}{c}{Nom. Size 1\%}  \T \\\cline{2-3}
\hline
$\alpha$ & $\mu^2$ & 2SLS & LIML & 2SLS & LIML & $J$ & $KP$ & $J$ & $KP$ \\ 
\hline
\hline
& 1 & 0.165 & 0.134 & 1.404 & 4.868 & 0.066 & 0.025 & 0.004 & 0.001 \\
\multirow{2}{*}{0.5} & 4 & 0.104 & 0.042 & 1.095 & 2.671 & 0.078 & 0.046 & 0.005 & 0.002 \\
& 8 & 0.068 & 0.012 & 0.878 & 1.540 & 0.084 & 0.066 & 0.005 & 0.004 \\
& 16 & 0.039 & 0.001 & 0.666 & 0.897 & 0.090 & 0.084 & 0.006 & 0.006 \\
\hline
& 1 & 0.166 & 0.136 & 1.499 & 5.011 & 0.063 & 0.037 & 0.003 & 0.001 \\
\multirow{2}{*}{1} & 4 & 0.112 & 0.051 & 1.175 & 2.959 & 0.073 & 0.063 & 0.004 & 0.004 \\
& 8 & 0.074 & 0.014 & 0.938 & 1.761 & 0.081 & 0.083 & 0.004 & 0.007 \\
& 16 & 0.043 & 0.002 & 0.717 & 1.011 & 0.087 & 0.096 & 0.005 & 0.009 \\
\hline
& 1 & 0.180 & 0.158 & 1.804 & 5.489 & 0.055 & 0.047 & 0.002 & 0.002 \\
\multirow{2}{*}{2} & 4 & 0.144 & 0.103 & 1.569 & 4.423 & 0.064 & 0.078 & 0.002 & 0.006 \\
& 8 & 0.116 & 0.058 & 1.349 & 3.364 & 0.073 & 0.101 & 0.003 & 0.012 \\
& 16 & 0.082 & 0.025 & 1.087 & 2.176 & 0.080 & 0.124 & 0.004 & 0.019 \\
\hline
\hline
\multicolumn{10}{c}{Medium endogeneity design ($\rho = 0.5$)} \\
\hline
& & \multicolumn{2}{c}{Median Bias} & \multicolumn{2}{c}{90:10 Range} & \multicolumn{2}{c}{Nom. Size 10\%} &  \multicolumn{2}{c}{Nom. Size 1\%}  \T \\\cline{2-3}
\hline
$\alpha$ & $\mu^2$ & 2SLS & LIML & 2SLS & LIML & $J$ & $KP$ & $J$ & $KP$ \\ 
\hline
\hline
& 1 & 0.408 & 0.318 & 1.282 & 4.515 & 0.076 & 0.028 & 0.005 & 0.001 \\
\multirow{2}{*}{0.5} & 4 & 0.261 & 0.100 & 1.022 & 2.568 & 0.100 & 0.053 & 0.008 & 0.002 \\
& 8 & 0.170 & 0.026 & 0.826 & 1.467 & 0.111 & 0.073 & 0.010 & 0.004 \\
& 16 & 0.097 & 0.003 & 0.637 & 0.880 & 0.112 & 0.089 & 0.010 & 0.006 \\
\hline
& 1 & 0.416 & 0.337 & 1.369 & 4.656 & 0.074 & 0.039 & 0.004 & 0.002 \\
\multirow{2}{*}{1} & 4 & 0.274 & 0.121 & 1.096 & 2.887 & 0.103 & 0.071 & 0.008 & 0.005 \\
& 8 & 0.185 & 0.036 & 0.888 & 1.686 & 0.114 & 0.087 & 0.011 & 0.008 \\
& 16 & 0.109 & 0.004 & 0.685 & 0.988 & 0.115 & 0.098 & 0.012 & 0.009 \\
\hline
& 1 & 0.458 & 0.416 & 1.612 & 5.139 & 0.062 & 0.050 & 0.003 & 0.002 \\
\multirow{2}{*}{2} & 4 & 0.369 & 0.260 & 1.431 & 4.215 & 0.091 & 0.086 & 0.006 & 0.008 \\
& 8 & 0.297 & 0.150 & 1.246 & 3.211 & 0.111 & 0.111 & 0.010 & 0.015 \\
& 16 & 0.211 & 0.061 & 1.023 & 2.086 & 0.126 & 0.129 & 0.014 & 0.020 \\
\hline
\hline
\multicolumn{10}{c}{High endogeneity design ($\rho = 0.95$)} \\
\hline
& & \multicolumn{2}{c}{Median Bias} & \multicolumn{2}{c}{90:10 Range} & \multicolumn{2}{c}{Nom. Size 10\%} &  \multicolumn{2}{c}{Nom. Size 1\%}  \T \\\cline{2-3}
\hline
$\alpha$ & $\mu^2$ & 2SLS & LIML & 2SLS & LIML & $J$ & $KP$ & $J$ & $KP$ \\ 
\hline
\hline
& 1 & 0.768 & 0.451 & 0.696 & 4.142 & 0.329 & 0.072 & 0.092 & 0.004 \\
\multirow{2}{*}{0.5} & 4 & 0.483 & 0.061 & 0.681 & 2.095 & 0.370 & 0.095 & 0.175 & 0.007 \\
& 8 & 0.317 & 0.005 & 0.617 & 1.350 & 0.284 & 0.098 & 0.112 & 0.006 \\
& 16 & 0.185 & 0.000 & 0.535 & 0.832 & 0.207 & 0.099 & 0.056 & 0.006 \\
\hline
& 1 & 0.784 & 0.496 & 0.708 & 3.965 & 0.359 & 0.087 & 0.107 & 0.007 \\
\multirow{2}{*}{1} & 4 & 0.510 & 0.092 & 0.723 & 2.235 & 0.411 & 0.099 & 0.206 & 0.008 \\
& 8 & 0.345 & 0.013 & 0.660 & 1.486 & 0.325 & 0.098 & 0.142 & 0.006 \\
& 16 & 0.205 & 0.001 & 0.572 & 0.923 & 0.238 & 0.097 & 0.075 & 0.005 \\
\hline
& 1 & 0.875 & 0.752 & 0.702 & 2.982 & 0.296 & 0.108 & 0.072 & 0.015 \\
\multirow{2}{*}{2} & 4 & 0.706 & 0.349 & 0.803 & 3.946 & 0.488 & 0.122 & 0.241 & 0.019 \\
& 8 & 0.560 & 0.151 & 0.814 & 2.808 & 0.479 & 0.109 & 0.266 & 0.012 \\
& 16 & 0.397 & 0.043 & 0.763 & 1.744 & 0.404 & 0.096 & 0.207 & 0.007 \\
\hline 
\end{tabular*}
\caption*{\footnotesize{Table \ref{table extraresults2}: Additional estimator and test statistic results for 2SLS, LIML, $J$ and $KP$ with three overidentifying restrictions. Median bias and 90:10 range (defined as the $90^{th}$ percentile minus the $10^{th}$ percentile) are reported for 2SLS and LIML. Rejection frequencies at the nominal 10\% and 1\% levels are reported for $J$ and $KP$.}}
\label{table extraresults2}
\end{threeparttable}
\end{center}
\end{table}

\end{document}